\newtheorem{claim}{Claim}
\newtheorem{question}{Question}
\renewcommand\footnotetextcopyrightpermission[1]{} 
\newcommand{\defn}[1]{{\textit{\textbf{\boldmath #1}}}\xspace}
\renewcommand{\paragraph}[1]{\vspace{0.09in}\noindent{\bf \boldmath #1.}} 
\renewcommand{\epsilon}{\varepsilon}
\newcommand{\eps}{\varepsilon}
\newcommand{\bigO}{O}
\newcommand{\unk}{\texttt{UNK}}
\newcommand{\sss}{\texttt{SSS}}
\newcommand{\canc}{\texttt{CANC}}
\newcommand{\opt}{\texttt{OPT}}
\newcommand{\turtle}{\texttt{TURTLE}}
\newcommand{\alg}{\texttt{ALG}\xspace}
\newcommand{\rand}{\texttt{RAND}}
\newcommand{\bal}{\texttt{BAL}}
\newcommand{\equi}{\texttt{EQUI}}
\DeclareMathOperator{\trt}{\mathsf{TRT}}
\DeclareMathOperator{\mrt}{\mathsf{MRT}}
\DeclareMathOperator{\T}{\mathsf{T}}
\DeclareMathOperator{\C}{\mathsf{C}}
\DeclareMathOperator{\K}{\mathsf{K}}
\newcommand{\ser}{\circledcirc}
\newcommand{\pll}{\shortparallel}
\DeclareMathOperator{\work}{work}
\newcommand{\set}[1]{\left\{ #1\right\}}
\newcommand{\floor}[1]{\left\lfloor #1 \right\rfloor}
\newcommand{\ceil}[1]{\left\lceil #1 \right\rceil}
\newcommand{\paren}[1]{\left( #1 \right)}
\begin{document}

\title{Scheduling Jobs with Work-\emph{Inefficient} Parallel Solutions}

\author{William Kuszmaul}
\affiliation{
\institution{Harvard University}
\city{Cambridge}
\state{MA}
\country{USA}
}
\email{william.kuszmaul@gmail.com}
\authornote{
William Kuszmaul is funded by the Rabin Postdoctoral Fellowship in Theoretical Computer Science at Harvard University. Large parts of this research were completed while William was a PhD student at MIT, where he was funded by a Fannie and John Hertz Fellowship and an NSF GRFP Fellowship. William Kuszmaul was also partially sponsored by the United States Air Force Research Laboratory and the United States Air Force Artificial Intelligence Accelerator and was accomplished under Cooperative Agreement Number FA8750-19-2-1000. The views and conclusions contained in this document are those of the authors and should not be interpreted as representing the official policies, either expressed or implied, of the United States Air Force or the U.S. Government. The U.S. Government is authorized to reproduce and distribute reprints for Government purposes notwithstanding any copyright notation herein.
}
\author{Alek Westover}
\affiliation{
\institution{Massachusetts Institute of Technology}
\city{Cambridge}
\state{MA}
\country{USA}
}
\email{alekw@mit.edu}

\begin{abstract}
This paper introduces the \emph{serial-parallel decision problem}. Consider an online scheduler that receives a series of tasks, where each task has both a parallel and a serial implementation. The parallel implementation has the advantage that it can make progress concurrently on multiple processors, but the disadvantage that it is (potentially) work-inefficient. As tasks arrive, the scheduler must decide for each task which implementation to use.

We begin by studying \emph{total awake time}. We give a simple \emph{decide-on-arrival} scheduler that achieves a competitive ratio of $3$ for total awake time---this scheduler makes serial/parallel decisions immediately when jobs arrive. Our second result is an \emph{parallel-work-oblivious} scheduler that achieves a competitive ratio of $6$ for total awake time---this scheduler makes all of its decisions based only on the size of each serial job and without needing to know anything about the parallel implementations. Finally, we prove a lower bound showing that, if a scheduler wishes to achieve a competitive ratio of $O(1)$, it can have at most one of the two aforementioned properties (decide-on-arrival or parallel-work-oblivious). We also prove lower bounds of the form $1 + \Omega(1)$ on the optimal competitive ratio for any scheduler.

Next, we turn our attention to optimizing \emph{mean response time}. Here, we show that it is possible to achieve an $O(1)$ competitive ratio with $O(1)$ speed augmentation. This is the most technically involved of our results. We also prove that, in this setting, it is not possible for a parallel-work-oblivious scheduler to do well.

In addition to these results, we present tight bounds on the optimal competitive ratio if we allow for arrival dependencies between tasks (e.g., tasks are components of a single parallel program), and we give an in-depth discussion of the remaining open questions.

\end{abstract}



\keywords{Scheduling, Parallel, Work-Inefficient, Competitive-Analysis}
\begin{CCSXML}
<ccs2012>
<concept>
<concept_id>10003752.10003809.10010170</concept_id>
<concept_desc>Theory of computation~Parallel algorithms</concept_desc>
<concept_significance>500</concept_significance>
</concept>
</ccs2012>
\end{CCSXML}
\ccsdesc[500]{Theory of computation~Parallel algorithms}


\maketitle

\section{Introduction}

There are many tasks $\tau$ for which the best parallel
algorithms are work inefficient. This can leave engineers with a
surprisingly subtle choice: either implement a serial version
$\tau^{\ser}$ of the task, which is work efficient but has no
parallelism, or implement a parallel version $\tau^{\pll}$ of the
task, which is work inefficient but has ample parallelism. The
serial version $\tau^{\ser}$ of the task takes some amount of
time $\sigma$ to execute on a single processor; the parallel
version $\tau^{\pll}$ takes time $\pi \ge \sigma$ to execute on a
single processor, but can be scaled to run on any number $k \le
p$ processors with a $k$-fold speedup. Which version of the task
should the engineer implement?

If the task is running in isolation on a $p$-processor system,
and assuming that $\pi / p \le \sigma$, then the answer is
trivial: use the parallel implementation $\tau^{\pll}$. But what
if the system is shared with many other tasks that are
arriving/completing over time? Intuitively, the engineer should
use $\tau^\pll$ if the system can afford to allocate at least
$\pi / \sigma$ processors to the task, and should use $\tau^\ser$
otherwise. But this choice is complicated by two factors, since
the number of processors that the system can afford to allocate
to $\tau$ may both (1) change over time as $\tau$ executes and
(2) depend on whether \emph{other} tasks $\tau'$ were executed
using \emph{their} serial or parallel implementations. The second
factor, in particular, creates complicated dependencies---the
right choice for one task depends on what choices have been and
will be made for others. 

In this paper, we propose an alternative perspective: What if the
engineer implements both a serial and parallel version of each
task, and then leaves it to the \emph{scheduler} to decide which
version to use? 

Formally, we define the \defn{serial-parallel decision problem}
as follows: a set $\mathcal{T} = \{\tau_1, \ldots, \tau_n\}$ of
tasks arrive over time. Each task $\tau_i \in \mathcal{T}$ arrives
at some start time $t_i$, and comes with two
implementations: a serial implementation $\tau_i^\ser$ with work
$\sigma_i$ and a parallel implementation $\tau_i^\pll$ with work
$\pi_i > \sigma_i$. In order for the scheduler to begin executing
a task $\tau_i$, it must choose (irrevocably) between which of
the two implementations to use. If the serial job $\tau_i^\ser$
is chosen, then the job can execute on up to one processor at a
time (the processor can change), and $\tau_i$ completes once 
$\sigma_i$ work has been performed on $\tau_i^\ser$. 
If the parallel job $\tau_i^\pll$ is
chosen, then the job can execute on up to $p$ processors over
time (now both the set of processors and the number of processors
can change), and $\tau_i$ completes once the total work performed on $\tau_i^\pll$ 
(by all processors) reaches $\pi_i$. 

Two natural objectives for the scheduler are
to minimize the \defn{mean
response time}($\mrt$), which is the average amount of time between when a task arrives and when it completes;
and the \defn{total awake time}, which is the total amount of time during which there is at least one job executing. We emphasize that, in both cases, our task is fundamentally to solve an online \emph{decision}
problem. If the scheduler was told for each task $\tau_i$ whether
to use $\tau_i^\ser$ or $\tau_i^\pll$, then the problem would
become straightforward. But actually making these
decisions is potentially difficult. 

\subsection*{Results}
In addition to formalizing the \emph{Serial-Parallel Scheduling Problem}, we give upper and lower bounds for how well online schedulers can perform on both awake time and average completion time.

\paragraph{Optimizing awake time} Our first result is a very simple scheduler that optimizes awake time with a competitive ratio of $3$. This is a \defn{decide-on-arrival scheduler}, meaning that it makes its serial/parallel decisions immediately when a job arrives. We also give lower bounds preventing a competitive ratio of $1 + \epsilon$: we show that any deterministic scheduler must incur competitive ratio at least $\phi - o(1) \approx 1.62$; and that any deterministic \emph{decide-on-arrival} scheduler must incur competitive ratio at least $2 - o(1)$; and that even \emph{randomized} schedulers must incur competitive ratios at least $(3 + \sqrt{3})/4 - o(1) \approx 1.18$. 

Our second result studies what we call \defn{parallel-work-oblivious schedulers}, which are schedulers that get to know the serial work of each job but \emph{not} the parallel work. We show that, even in this setting, it is possible to construct a 6-competitive scheduler for awake time. On the other hand, we show that there is a fundamental tension between \emph{decide-on-arrival} and \emph{parallel-work-oblivious} schedulers: any scheduler that achieves both properties has competitive ratio at least $\Omega(\sqrt{p})$.

\paragraph{Optimizing mean response time} Next, we turn our attention to mean response time (MRT). We construct an online scheduler that achieves a competitive ratio of $\bigO(1)$ for $\mrt$ using $\bigO(1)$ speed augmentation. This is our most technical result, and is achieved through three technical components. First, in \cref{sec:technicalLemma}, we prove two technical lemmas for comparing
the optimal schedule for a set of serial jobs to the optimal schedule
for perfectly scalable versions of the same jobs. Then, in
 \cref{sec:cancelcompetitive} we build on this to
construct a scheduler that is $\bigO(1)$ competitive \emph{if} it is permitted to sometimes
cancel parallel tasks and restart them as serial ones. This
cancellation `superpower' would seem to make the decision
problem significantly easier (as decisions are no longer
irrevocable). However, in \cref{sec:mainMRTdude}, we
show how to take our $\bigO(1)$-competitive scheduler (with cancellation) and transform it into a 
$\bigO(1)$-competitive scheduler (without cancellation). 

Our MRT scheduler is neither a decide-on-arrival scheduler nor an parallel-work-oblivious scheduler. We prove that, if a scheduler is decide-on-arrival and uses $O(1)$ speed augmentation, then it must incur competitive ratio $\Omega(p^{1/4})$. 

\paragraph{Extending to Tasks with Dependencies}
In \cref{sec:DTAP}, we extend our model to support arrival dependencies between tasks: Each task $\tau_i$ has a set $D_i \subseteq [n]$ of other tasks that must complete \emph{before} $\tau_i$ can arrive. Dependencies must be acyclic, but besides that, they can be arbitrary.

In this setting, it is helpful to think of the tasks as representing components of a \emph{single parallel program}. Each component has both a serial and parallel implementation, that the runtime scheduler can choose between. The goal is to minimize the completion time of the entire program---this corresponds to the awake time objective function.

In \cref{sec:DTAP}, we show that the optimal online competitive ratio for this problem (even for randomized schedulers) is $\Theta(\sqrt{p})$. The upper bound holds for any set of dependencies, and the lower bound holds even when the dependencies form a tree (this means that the lower bound applies, for example, even to fork-join parallel programs \cite{blumofe1995cilk}).


\paragraph{Open questions} Finally, in \cref{sec:conclusion}, we conclude with a discussion of open questions. 



\section{Related Work} 
\label{sec:relatedwork}

There is a vast literature on multiprocessor scheduling problems. For an excellent (but now somewhat dated) survey, see \cite{dutot_scheduling_2004}. Past work has often categorized sets of jobs $J=\set{j_1,\ldots,j_n}$ as being composed of jobs $j_i$ which are either rigid, moldable, or malleable. Both \defn{rigid} and \defn{moldable} jobs have the property that once a job $j_i$ begins on some number $p_{j_i}$ of processors, it must continue on that same set of $p_{j_i}$ processors without interruption until completion. Rigid and moldable jobs differ in that for rigid jobs the number $p_{j_i}$ of processors which task $j_i$ is to be run on is pre-specified, whereas for moldable jobs $p_{j_i}$ may be chosen by the scheduler. Finally, if the number (and set) of processors on which a job is executed is permitted to vary over time, then the job is said to be \defn{malleable}. In the contexts of moldable and malleable jobs, the jobs often come with \defn{speedup curves} determining how quickly the job can make progress on a given number of processors. If the speedup curve is proportional to the number of processors on which the job runs (as is the case for the parallel jobs associated with the tasks described in this paper) then the job is said to be \defn{perfectly scalable}. 

Much of the work in this area focuses on optimizing awake time, which as discussed earlier, is the total amount of time during which any jobs are present. Here, there has been a great deal of work on both offline schedulers \cite{Turek92, Mounie99, Tiwari94, Turek94_rigid, Turek94_moldable} and online schedulers \cite{Graham69, DuttonMao07, baker1983shelf, HurnikPaulus08, GuoKang10, Ye09, YeChen18}. 

For moldable jobs with arbitrary speedup curves, Ye, Chen, and Zhang
\cite{YeChen18} show an online competitive ratio of $O(1)$
for awake time. Of special interest to this paper would
be the speedup curve where job $j_i$ takes time $\sigma_i$ to
complete on $1$ processor and time $\pi_i / k$ to complete on $k
> 1$ processors. In this case, the scheduler's commitment to a
fixed number of processors would also implicitly represent a
commitment to running the job in serial or parallel. One difference between this and the problem studied here is that the scheduler (and the $\opt$ to which it is
compared) are \emph{non-preemptive}: they are required to execute the tasks on a fixed set of
processors (without interruption). Nonetheless, it is not too difficult to show that Ye, Chen, and Zhang's algorithm actually \emph{does} yield an $O(1)$ competitive awake-time algorithm for our problem---we emphasize, however, that this algorithm is neither decide-on-arrival nor parallel-work-oblivious, and would achieve a quite large constant competitive ratio. Interestingly, in the context of mean response time (MRT), we show in \cref{sec:mrtlower} that preemption is necessary: in the context
of our serial-parallel decision problem, any online scheduler
that rigidly assigns jobs to fixed numbers of processors will
necessarily incur a worst-case competitive ratio of $\omega(1)$
for $\mrt$ (even with $O(1)$ resource augmentation).

There is of course also a great deal of interest $\mrt$, i.e., the average amount of time between when tasks arrive
and when they are completed. Besides work on offline
approximation algorithms \cite{Turek94_moldable, Turek94_rigid},
most of the major successes in this area have been for malleable
jobs \cite{Edmonds00, Edmonds09, Ebrahimi18}. The seminal result
in this area is due to Edmonds \cite{Edmonds00}, who considered
malleable jobs with arbitrary sub-linear non-decreasing speedup
curves, and showed that the so-called $\equi$ scheduler, which
divides the processors evenly among all of the jobs present
(using time sharing if there are more than $p$ jobs), achieves a
competitive ratio of $O(1)$ with $2 + \epsilon$ speed
augmentation (subsequent work only requires speed augmentation $1
+ \epsilon$ with different schedulers \cite{Edmonds09,
Ebrahimi18}). A remarkable feature of the $\equi$ scheduler is
that it is \emph{oblivious} to the precise speedup curve of each
job. In \cref{sec:mrtlower}, we show that such a scheduler
is not possible in our setting---any oblivious online scheduler for $\mrt$ (or, even any parallel-work-oblivious scheduler)
must incur a worst-case competitive ratio of $\omega(1)$. 

The tasks studied in this paper do not fit neatly into the \\ rigid/moldable/malleable framework. They represent instead a direction that until now seems to have been unexplored: deciding for each task between the two extremes of (1) a fast algorithm with no parallelism and (2) a slower algorithm with ample parallelism and perfect scalability. Interestingly, several parts of the analysis end up making use of $\equi$ as an \emph{analytical tool}. Thus, for completeness, we restate Theorem 1.1 of \cite{Edmonds09} (with parameters $\beta = 1$ and $\epsilon = 1$) below:

\begin{theorem}[Theorem 1.1 of \cite{Edmonds09}]
$\equi$ with $3$ speed augmentation is $O(1)$ competitive with $\opt$ for MRT on any set $J$ of malleable jobs with arbitrary (nondecreasing sublinear) speedup curves.
\label{thm:jeff}
\end{theorem}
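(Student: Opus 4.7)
The plan is to exploit the standard identity $\mrt = \int_0^\infty n(t)\,dt$, where $n(t)$ counts alive jobs at time $t$. Writing $n_A(t)$ for the count under $\equi$ at $3$-speed and $n_O(t)$ for the count under $\opt$ at $1$-speed, the theorem reduces to showing $\int n_A(t)\,dt \le O(1) \cdot \int n_O(t)\,dt$.

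The natural attack is a potential-function amortized analysis. I would define $\Phi(t)$ over the jobs currently alive in $\equi$, sorted in decreasing order of remaining work, with the rank-$r$ job contributing weight proportional to $r$ times a clipped ``lag'' measuring how much more work remains on that job under $\equi$ than under $\opt$ (normalized and capped at $1$). The rank weighting is the crucial trick: it makes $\Phi$ grow fastest exactly in the regime where $\equi$ is carrying many backlogged jobs, which is precisely the regime in which the cost $n_A(t)$ most needs to be amortized against $\opt$. The key analytic target is the amortized inequality
\[
\frac{d\Phi}{dt} + n_A(t) \;\le\; O(1) \cdot n_O(t)
\]
between event times, combined with the claim that $\Phi$ does not jump up at job arrivals or completions (arrivals append a zero-lag entry at the highest rank; completions remove an entry with zero lag). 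Integrating from $0$ to $\infty$ and using $\Phi(0)=\Phi(\infty)=0$ then delivers the theorem.

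The derivative bound is where sublinearity and the $3$-speed augmentation both enter. Because each speedup curve $\Gamma_j$ satisfies that $\Gamma_j(k)/k$ is nonincreasing, sharing $3p$ processors among $n_A$ jobs gives job $j$ rate at least $(3/n_A)\Gamma_j(p)$, so the aggregate rate at which $\equi$ chips away at the combined lag comfortably dominates the aggregate rate at which $\opt$ (bounded by $\sum_j \Gamma_j(p)$ at most) can push lag upward. After combining, a constant fraction of the $n_A(t)$ term can be absorbed into $-d\Phi/dt$, and the leftover is bounded by $O(1)\cdot n_O(t)$.

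The main obstacle, and the technical heart of Edmonds' original argument, is bounding the effect of \emph{rank transpositions}: when two $\equi$ jobs swap sorted order, or when $\opt$ completes a job whose $\equi$ counterpart remains alive, mass can be redistributed across the rank-weighted sum and $\Phi$ could in principle jump up. The clean fix is to define $\Phi$ purely as a function of the sorted vector of lags, so that transpositions are continuity-preserving by construction, and to set the proportionality constant so that the amortized inequality closes with exactly the slack afforded by the $3$-factor of speed augmentation.
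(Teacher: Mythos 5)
The paper does not prove this theorem. It is imported verbatim from Edmonds and Pruhs \cite{Edmonds09} (their Theorem~1.1, specialized to $\beta=1$, $\epsilon=1$, under which LAPS degenerates to $\equi$ and the speed factor $1+\beta+\epsilon$ becomes $3$). The authors use it as a black-box analytical tool, so there is no ``paper's own proof'' to compare against; writing one was unnecessary for this statement.

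That said, your sketch is broadly aligned with the actual Edmonds--Pruhs argument, which is indeed an amortized local-competitiveness proof built on a rank-weighted, clipped-lag potential. Where you deviate is in the choice of rank: Edmonds and Pruhs rank by \emph{arrival time}, not by remaining work. This matters. Under arrival-time rank, a job's rank can only decrease over its lifetime (as earlier-arriving jobs complete), so the rank-change terms in $d\Phi/dt$ are one-signed and benign, and a newly arrived job lands at the top rank with zero lag so $\Phi$ is continuous at arrivals. Under remaining-work rank, ranks can swap continually (jobs with faster speedup curves drain faster under equal shares), and two swapping jobs generally have \emph{different} lags at the moment they cross in remaining work, so $\Phi$ would jump. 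Your end-of-sketch ``fix'' --- defining $\Phi$ as a symmetric function of the sorted lag vector --- is a genuinely different potential; it restores continuity, but it also decouples a job's coefficient from its arrival order, and the derivative bound (where one charges $\equi$'s per-job progress $\Gamma_j(3p/n_A)\ge (3/n_A)\Gamma_j(p)$ against $\opt$'s total rate) no longer telescopes cleanly as it does with arrival-time rank. So: right framework, wrong invariant; the arrival-time ranking is not a cosmetic choice but the thing that makes the boundary terms vanish.
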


\section{Preliminaries} 
\paragraph{Problem Specification}
In this section we introduce our terminology and notation for
describing the problem. A \defn{task} is some computation that
must be performed. Tasks can be performed using a serial or
parallel \defn{job} implementing the task. In the
\defn{Serial-Parallel Scheduling Problem} a scheduler receives
tasks $\mathcal{T} = \set{\tau_1,\tau_2,\ldots, \tau_n}$ over
time, with $n$ unknown beforehand in the on-line case. Task
$\tau_i$ has an associated serial job $\tau_i^\ser$ with work
$\sigma_i$ and an associated parallel job $\tau_i^\pll$ with
work $\pi_i$. Finally, task $\tau_i$ arrives at time $t_i$ with
$t_1\leq t_2\leq \cdots \leq t_n$. Thus one should think of
$\tau_i$ as being determined by a triple $(\sigma_i, \pi_i,
t_i)$. 

The scheduler must decide whether to perform each task $\tau_i$
using the serial or parallel implementation. By default the scheduler need
not decide which implementation to run for a task exactly when
the task arrives, sometimes it may be beneficial to wait before
starting a task. We also consider the alternative model where the
scheduler must decide on arrival which implementation to use.

For convenience, we treat time steps as being small enough that
time is essentially continuous. At each time step, the scheduler
allocates its $p$ processors to the unfinished jobs present. In a
given time step multiple processors can be allocated to a
parallel job, but only a single processor can be allocated to
each serial job (and, of course, some jobs may be allocated $0$
processors). Sometimes it is convenient to treat a job as being
assigned to a fractional number of processors; this can be
accomplished by time-sharing the processor over multiple steps. A
serial job $\tau_i^{\ser}$ finishes once it has been allocated
$\sigma_i$ time (not necessarily contiguously) on a processor
(not necessarily the same one over time). A parallel job
$\tau_i^{\pll}$ finishes once it has been allocated $\pi_i$ total
time on processors---i.e., the integral over time of the number
of processors allocated to $\tau_i^{\pll}$ reaches $\pi_i$. 

We refer to a set of tasks $\mathcal{T} = \{\tau_1, \tau_2,
\ldots, \tau_n\}$ specifying an instance of the Serial-Parallel
Scheduling Problem as a \defn{task arrival process} or
\defn{TAP}. We use $[n]$ to denote $\{1,2,\ldots, n\}$, and
$\mathcal{T}_i^j$ to denote $\set{\tau_i, \ldots, \tau_j}$

\paragraph{Objective}
We say that a task is \defn{alive} if it has arrived but not yet
been completed. 
We consider two objective functions: 
\begin{itemize}
  \item minimize \defn{mean response time} ($\mrt$): the
    average amount of time that tasks are alive. If task $\tau_i$
    finishes at time $f_i$, then the MRT is $\frac{1}{n} \sum \left(f_i - t_i\right)$. 
It is equivalent, but generally more convenient, to work with a scaled version of $\mrt$ called
\defn{total response time} ($\trt$)---the sum of the amounts
of time that tasks are alive.
We denote the
$\trt$ of a scheduler $\alg$ on a TAP $\mathcal{T}$ by
$\trt_{\alg}^{\mathcal{T}}$.
  \item minimize \defn{awake time} ($\T$): the total amount of time that
there are alive tasks. If there are tasks alive on intervals
$[a_1,b_1]\sqcup [a_2,b_2] \sqcup \cdots$, then the awake time is
$\sum (b_i - a_i)$. We denote the awake time of a scheduler
$\alg$ on TAP $\mathcal{T}$ by $\T_\alg^\mathcal{T}$.
\end{itemize}

We measure a scheduler's performance by comparison to the optimal off-line
scheduler $\opt$, who can see the sequence of tasks in
advance. The \defn{competitive ratio} of a scheduler $\alg$ on
TAP $\mathcal{T}$ is the ratio of its performance to $\opt$'s,
e.g. $\trt_{\alg}^{\mathcal{T}} / \trt_\opt^\mathcal{T}$ for
$\trt$. 
More generally, we will say $\alg$ is \defn{$k$ competitive} if
the competitive ratio of $\alg$ is bounded by $k$ on all TAPs.
Sometimes we will also say $\alg$ is $k$ competitive with another
scheduler $\alg'$, meaning that the $\mrt$ (or awake time) of $\alg$ is never
more than a factor of $k$ larger than the $\mrt$ (or awake time,
respectively) of $\alg'$.

Finally, when comparing two schedulers $\alg_1$ and $\alg_2$ in
the context of $\mrt$ we will often assume \defn{$c$ speed
augmentation} for some $c \in \bigO(1)$. This means that $\alg_1$
gets to use processors that are $c$ times faster than those used
by $\alg_2$. Let $c\cdot \mathcal{T}$ denote the TAP
$\mathcal{T}$ but with every job's work multiplied by $c$;
similarly define $c\cdot J$ for a set of jobs $J$ as the jobs
from $J$ with work multiplied by $c$. Then, the statement
$\alg_1$ with $c$ speed augmentation is $\bigO(1)$ competitive
with $\alg_2$ on TAP $\mathcal{T}$ (or jobs $J$) can be written
as \[\trt_{\alg_1}^{\mathcal{T}} \leq \bigO(\trt_{\alg_2}^{c\cdot
\mathcal{T}}).\]

Our goal is to create a scheduler that is $\bigO(1)$ competitive
with $\opt$, potentially with use of $\bigO(1)$ speed
augmentation in the case of $\mrt$. 

\paragraph{Technical Details}
We emphasize that our focus is on schedulers that are allowed to
\defn{preempt} running tasks, i.e. pause running tasks and
continue later on a potentially different set of processors.
For $\mrt$, in particular, preemption is provably necessary---we show in  \cref{sec:mrtlower}
that a non-preemptive scheduler cannot be $\bigO(1)$-competitive
for $\mrt$. It is sometimes theoretically helpful to consider
schedulers that are allowed to \defn{cancel} running tasks, i.e.
stop a running task, erasing all progress made on the task, and
restart the task using a different implementation (e.g., a
parallel task can be cancelled and restarted as a serial task).
Our final schedulers \emph{will not} require cancelling. However,
in designing/analyzing our scheduler, it will be helpful to be
able to imagine what would have happened if cancelling were
possible. We say ``$\alg$, \defn{with use of cancelling}'' to 
denote that a scheduler $\alg$ has been augmented with the ability to cancel.

We consider only TAPs where the \defn{cost ratio}
$\pi_i/\sigma_i$ satisfies $\pi_i/\sigma_i \in [1,p]$ for all
tasks $\tau_i$. If $\pi_i/\sigma_i < 1$ then the scheduler
clearly should never run $\tau_i$ in serial so we can replace the
serial implementation with the parallel implementation to get
cost ratio $1$. Similarly, if $\pi_i/\sigma_i > p$ then the
scheduler should never run $\tau_i$ in parallel and we can
replace the parallel implementation with the serial
implementation to get cost ratio $p$. 

Throughout the paper we will assume $p\geq \Omega(1)$ is at least
a sufficiently large constant. We are more interested in the
asymptotic behavior of our schedulers as a function of $p$ than
the behavior on small values of $p$.

\section{A 3-Competitive Awake-Time Scheduler That Makes Decisions on Arrival}\label{sec:3comp}
The scheduler we describe in this section belongs to a simple
class of schedulers called \defn{decide-on-arrival} schedulers. 
Whenever a decide-on-arrival scheduler receives a task it must
immediately make an irrevocable decision about whether to run the
serial or parallel job associated with the task.
Our scheduler will run its chosen serial jobs via \defn{most-work-first}, i.e., run up to $p$ of the serial jobs with the most remaining work at each time step and run a parallel job on any remaining processors.  Clearly this is an optimal method for scheduling any given set of jobs.  Thus, in the decide-on-arrival model the Serial-Parallel Scheduling Problem is really not a scheduling problem but rather a decision problem: once the scheduler chooses which job to run for each task it is clear how to schedule the jobs.
We define $\opt$ to also be a decide-on-arrival scheduler.  This is  without loss of generality: $\opt$ does not benefit from delaying its decisions because $\opt$ has all the information in advance. Thinking of $\opt$ as a decide-on-arrival scheduler will simplify the terminology.
We now define specialized notation that is helpful in describing the decisions of decide-on-arrival schedulers:
\begin{definition}\label{def:decideonarrivaldefs}
  A scheduler $\alg$ is \defn{saturated} on time
  step $t$ if $\alg$ has no idle processors at time $t$.
  We say that $\alg$ is \defn{balanced} after the arrival of $n_0$ tasks if $\alg$ would be saturated immediately before completing these $n_0$ tasks, assuming no further tasks arrive. 
  In other words, being balanced means that \alg's current set of jobs
  could be distributed so that, assuming no additional tasks arrive, all processors
  would be in use at each time step until \alg has completed all the currently
  present jobs.
  If \alg is not balanced we say that $\alg$ is \defn{jagged}.



We say that \alg \defn{ incurs} work $W$ on a task $\tau$ if it takes a job requiring $W$ work to complete.
We say that \alg \defn{ incurs} work $W$ on a TAP $\mathcal{T}$ if \alg does $W$ total work to handle the tasks in $\mathcal{T}$.
\end{definition}
 We now present our simple $3$-competitive scheduler.
 \begin{theorem}
    \label{prop:3comp}
    There is a $3$-competitive decide-on-arrival scheduler for awake time.
\end{theorem}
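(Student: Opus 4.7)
The plan is to design a decide-on-arrival rule that mediates between the two extremes of \emph{always serial} and \emph{always parallel} using the balanced/jagged distinction from \cref{def:decideonarrivaldefs}. Specifically, when task $\tau_i$ arrives, $\alg$ will add the serial version $\tau_i^\ser$ if its current set of alive jobs is balanced (so parallelizing cannot speed anything up, and paying $\pi_i-\sigma_i$ in extra work would only hurt), and will add the parallel version $\tau_i^\pll$ if $\alg$ is jagged (so idle processor capacity is available to absorb the parallel work productively). Jobs themselves are then scheduled using most-work-first on the serial jobs, with any remaining processors filled by the parallel jobs, exactly as set up in the preliminaries.

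With this rule fixed, the analysis will decompose the awake time into two pieces: $A = B + J$, where $B$ is the total length of time steps on which $\alg$ is saturated and $J$ is the total length of time steps on which $\alg$ has at least one idle processor. The plan is to prove $B \le 2 O$ and $J \le O$, where $O = \T_\opt^{\mathcal{T}}$, and to then combine them to get $A \le 3 O$. The bound on $B$ will use the elementary observation that $B \le W_\alg/p$, where $W_\alg$ is the total work $\alg$ performs, together with a charging argument that I outline below. The bound on $J$ will come from observing that, during a jagged time step, most-work-first leaves processors idle only once the unfinished serial jobs have thinned out below $p$; in this regime the remaining awake time is governed by the length of a single serial job, which will itself be at most $O$ either because $\opt$ also serializes that task or because $\opt$'s alternative parallel investment already shows up in $W_\opt$.

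The heart of the argument, and the step I expect to be the main obstacle, is showing $W_\alg \le 2 W_\opt$ so that $B \le W_\alg/p \le 2 W_\opt/p \le 2 O$. The difficulty is that both $\alg$ and $\opt$ are decide-on-arrival schedulers that can diverge on past decisions, so a naive coupling of their states fails. My plan is to amortize: each time $\alg$ pays an extra $\pi_i - \sigma_i$ by going parallel, it was jagged at the moment $\tau_i$ arrived, and this local condition will be used to show via an invariant on $\alg$'s outstanding work at the arrival times that $\opt$ cannot both save $\pi_i - \sigma_i$ on $\tau_i$ and keep its other choices consistent without itself being forced to do comparable extra work on the remaining tasks in $\mathcal{T}_i^n$. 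Making this invariant quantitative and preserved across arrivals, without double-counting across multiple tasks, is the technically delicate part.

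Once the charging yields $W_\alg \le 2 W_\opt$ and the jagged-tail argument yields $J \le O$, assembling the pieces gives $A = B + J \le W_\alg/p + O \le 2 W_\opt/p + O \le 2 O + O = 3 O$, proving that $\alg$ is $3$-competitive. I expect the proof will also verify, almost in passing, that the rule as stated is indeed well-defined as a decide-on-arrival rule, since checking whether the current schedule is balanced requires no information about tasks that have not yet arrived.
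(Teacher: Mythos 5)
Your decision rule is not the one that works, and your central intermediate claim is false; both issues are fatal.

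\textbf{The rule.} You decide based on whether the \emph{current} alive set is balanced, adding serial when balanced and parallel when jagged. But the paper's scheduler $\bal$ does something subtly different: it tests whether adding $\tau^\ser$ \emph{would cause} jaggedness and adds $\tau^\pll$ only in that case; this maintains the invariant that $\bal$ is \emph{always} balanced. Your version fails already on the first task. If a single perfectly scalable task arrives with $\sigma_1=\pi_1=p$, your scheduler sees an empty (vacuously balanced) state and runs it in serial, giving awake time $p$, while $\opt$ runs it in parallel in time $1$ --- competitive ratio $p$, not $3$. The correct rule instead observes that adding $\tau_1^\ser$ would make the state jagged and therefore runs it in parallel.

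\textbf{The lemma $W_\alg\le 2W_\opt$.} This cannot hold, even for the correct scheduler. With tasks $(\sigma_1,\pi_1)=(1,p)$ and $(\sigma_2,\pi_2)=(2,p)$ arriving together, $\bal$ runs both in parallel with total work $2p$ and awake time $2$, while $\opt$ runs both in serial with total work $3$ and the same awake time $2$. So $W_\bal/W_\opt = 2p/3$ is unbounded, yet the awake-time ratio is $1$. The competitiveness cannot come from a pure work comparison. The paper instead proceeds by induction over $\opt$-awake-intervals with an invariant of the form $\C^i_\bal \le B/p + 2\C^i_\opt + \K^i_\opt/p$, which mixes $\opt$'s \emph{completion time} within the interval with $\opt$'s incurred work; the factor $3$ then emerges by absorbing $\K_\opt/p\le\C_\opt$. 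Your decomposition $A=B+J$ with $B\le 2O,\ J\le O$ is also not how the paper proceeds: since $\bal$ is always balanced, it never idles, so for $\bal$ one has $J=0$ and $A = W_\bal/p$, and the whole game is bounding $W_\bal/p$ against $\T_\opt$, not against $W_\opt$.

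In short: fix the rule to test the hypothetical post-serial state, discard $W_\alg\le 2W_\opt$, and aim instead for an interval-by-interval invariant in $\opt$'s completion time as above.
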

\begin{proof}
    We propose the scheduler $\bal$ which is always
    balanced. Whenever a task $\tau$ arrives
    \begin{itemize}
        \item If taking $\tau^\ser$ would cause $\bal$ to
              become jagged $\bal$ takes $\tau^\pll$.
        \item Otherwise $\bal$ takes $\tau^\ser$.
    \end{itemize}
    To analyze $\bal$ it suffices to analyze TAPs where
    $\bal$ always has at least one uncompleted task present
    at any time from the start of time until the completion
    of the final task. Thus, the awake time of $\bal$ is the
    same as its completion time.
    $\opt$'s awake time may be less than $\opt$'s completion
    time; we call the intervals of time when $\opt$ has
    already completed all arrived tasks \defn{gaps}.
    We call the maximal intervals of time when $\opt$ has
    uncompleted tasks which have already arrived
    \defn{$\opt$-awake-intervals}.


    Our main technical lemma is the following:
    \begin{lemma}\label{lem:bolus_is_good}
        Fix some $\opt$-awake-interval $I$. 
        Let $\mathcal{T}(I)$ denote the set of tasks that arrive
        during $I$ and let $\mathcal{T}'$  denote the tasks that arrive before $I$.
        Assume that immediately before the start of $I$ $\bal$ has $B$ work present (and is of course balanced).
        Let $\C_\alg$ for $\alg\in \set{\bal,\opt}$ denote
        the time from the start of $I$ until when $\alg$
        would complete on the TAP $\mathcal{T}(I)\cup \mathcal{T}'$. Then
            $$\C_\bal \le B/p + 3\C_\opt.$$
    \end{lemma}
    \begin{proof}
        Let $\mathcal{T}(I, i)$ denote the first $i$ tasks in $\mathcal{T}(I)$. 
        Let $\K_\opt^i$ denote the work that $\opt$ incurs on  $\mathcal{T}(I,i)$.
        Let $\C_\alg^i$ for $\alg\in \set{\bal,\opt}$ denote completion time of $\alg$ on the TAP $\mathcal{T}'\cup \mathcal{T}(I,i)$ minus the start time of the $\opt$-awake-interval $I$.
        We will prove the lemma by induction on $i$, i.e., how many of the tasks in $T(I)$ to consider. One complication with the proof is that $\opt$'s schedule on $\mathcal{T}(I,i)$ can be very far from optimal; for instance, $\opt$ might schedule a very large task $\tau\in \mathcal{T}(I,i)$ in serial if it knows that future tasks in $\mathcal{T}(I)$ cause $\tau$ to not bottleneck completion time. Nonetheless with an appropriately constructed invariant we can control the evolution of the work profiles of  $\bal$ and $\opt$ as we increase $i$.
        
        In particular, we will inductively show the following claim:
        For each $i\in \set{0,1,\ldots, |\mathcal{T}(I)|}$ we have the invariant:
        \begin{equation} \label{eq:theinvariantbal}
            \C^{i}_\bal \le B/p + 2\C^i_\opt + \K^{i}_\opt / p.
        \end{equation}
        When $i=0$ we have $\C^0_\bal = B/p$ and the invariant is satisfied.
        Now we assume  the invariant is true for some 
        $i>0$ and consider how the addition of the 
        the $(i+1)$-th task  $\tau'$ impacts the invariant. We use  $\sigma',\pi'$ to denote the
        serial and parallel works of $\tau'$.
        If $\bal$ chooses an implementation of $\tau'$
        requiring $K\in\set{\sigma',\pi'}$ work
        and $\opt$ chooses an implementation of $\tau'$ requiring at least $K$
        work then
        $$\C^{i+1}_\bal - \C^i_\bal = K/p \le \K^{i+1}_\opt/p
            - \K^i_\opt/p$$
        and so the invariant for $i$ implies the invariant
        for $i+1$.

        The only remaining case is if $\opt$ runs $\tau'$ in
        serial while $\bal$ runs $\tau'$ in parallel.
        However, in this case $\tau'$ must have relatively
        large serial work.
        In particular, if we let $t_0'$ denote the time at
        the beginning of $I$ and $t_{i+1}'$ denote arrival time
        of $\tau'$ then
        $\sigma' > \C^i_{\bal}+t_0' - t_{i + 1}'$ or
        else $\bal$ would have chosen to run $\tau'$ in
        serial. Thus we have
        \begin{equation}\label{eq:optchosebigserialthing}
            \C^{i+1}_{\bal} = \C^i_{\bal} +
            \pi'/p \le \C^i_{\bal} +
            \sigma' \le t_{i+1}'-t_0'+ 2 \sigma'
            \le 2(t_{i+1}'-t_0'+\sigma').
        \end{equation}
        On the other hand, $\opt$ ran $\tau'$ in
        serial and does not have any gaps between $t_0',t_{i+1}'$ because these times occur during the same $\opt$-awake-interval $I$. Thus
        $$\C_\opt^{i+1} \ge t_{i+1}'-t_0'+\sigma'.$$
        Combined with \cref{eq:optchosebigserialthing} this gives
        $$\C^{i+1}_\bal \le 2\C_\opt^{i+1}, $$
        and so the invariant holds in this case as well.
        This completes the proof of the inductive claim.
        
        To conclude the lemma we use $i=|\mathcal{T}(I)|$ in the inductive claim, which gives:
        \begin{equation}\label{eq:bolusbolusbolus}
            \C_\bal \le B/p + 2\C_\opt + \K^{|\mathcal{T}(I)|}_\opt / p.
    \end{equation}
            Using the fact $\K_\opt^{|\mathcal{T(I)}|}/p \le \C_\opt$  in \cref{eq:bolusbolusbolus} completes the proof of the lemma.
    \end{proof}

    To finish our analysis of $\bal$ we inductively show that $\bal$ is
    $3$-competitive on any prefix of the $\opt$-awake-intervals. 
    As a base-case we can take the first $\opt$-awake-interval.
    Applying \cref{lem:bolus_is_good} with $B=0$ shows that 
    $\bal$ is $3$-competitive on the first $\opt$-awake-interval.
    The inductive step is:
    \begin{corollary} \label{cor:somemoreinduction}
        Let $\mathcal{T}^{(i)}$ denote the set of tasks that
        arrive during the first $i$ $\opt$-awake-intervals.
        Assume that $\bal$ is $3$-competitive with $\opt$ on
        $\mathcal{T}^{(i)}$. Then $\bal$ is $3$-competitive
        with $\opt$ on $\mathcal{T}^{(i+1)}$.
    \end{corollary}
    \begin{proof}
      Let $I$ denote the $(i+1)$-th $\opt$-awake-interval and let
      $\mathcal{T}(I)$ denote the set of tasks that arrive during
      $I$.
      Let $W$ be the work performed by $\bal$ before $I$ and let
      $B$ denote the work that $\bal$ has remaining immediately before the
      start of $I$.
      By assumption we have
      \begin{equation}\label{eq:beforethisinterval}
      (B+W)/p \le 3\T_{\opt}^{\mathcal{T}^{(i)}}.
      \end{equation}
      Let $\C_\alg$ for $\alg\in \set{\bal,\opt}$ denote the
      completion time of $\alg$ on the tasks $\mathcal{T}^{(i+1)}$ measured from the start of $I$; i.e., if $\alg$ completes $\mathcal{T}^{(i+1)}$ at time $t_\alg$ and $I$ starts at time $t_I$ then $\C_\alg=t_\alg-t_I$.
      By \cref{lem:bolus_is_good} we have
      \begin{equation} \label{eq:duringthisinterval}
      \C_{\bal} \le B/p + 3\C_{\opt}.
      \end{equation}
      Combining \cref{eq:beforethisinterval},
      \cref{eq:duringthisinterval} gives 
      \begin{align*}
      \T_\bal^{\mathcal{T}^{(i+1)}}  &= W/p + \C_\bal \\
      &\le  W/p + 3\C_{\opt} + B/p \\
      &\le 3(\T_{\opt}^{\mathcal{T}^{(i)}}+\T_\opt^{\mathcal{T}(I)}) \\
      &= 3\T^{\mathcal{T}^{(i+1)}}.
      \end{align*}
      
    \end{proof}

      Using \cref{cor:somemoreinduction} on all the
      $\opt$-awake-intervals shows that $\bal$ is $3$-competitive
      on $\mathcal{T}$.

\end{proof}

\section{A 6-Competitive Awake-Time Scheduler That is Parallel-Work Oblivious}\label{subsec:6comp}
Now we turn our attention to designing a \defn{parallel-work-oblivious} scheduler, which is a scheduler that is not allowed to see the parallel works of each task. 
We show that it is still possible to
construct an $\bigO(1)$-competitive scheduler for awake time in this model; this
contrasts with $\mrt$ where \cref{prop:oblivpllwork} asserts that knowledge of the parallel works is necessary to achieve a good competitive ratio for $\mrt$. In particular, we show:

\begin{theorem}\label{thm:3bill}
    There is a deterministic parallel-work-oblivious $6$-competitive scheduler for awake time.
\end{theorem}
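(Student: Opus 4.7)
The plan is to design a parallel-work-oblivious scheduler $\bat$ that mimics the structure of $\bal$ from \cref{prop:3comp} but uses only the serial works $\sigma_i$ to make its commitments. Because the lower bound mentioned in the introduction forbids a decide-on-arrival parallel-work-oblivious scheduler from being $O(1)$-competitive, $\bat$ must defer some of its decisions. The scheduler I have in mind maintains a waiting pool $W$ of arrived-but-uncommitted tasks in addition to a committed-serial pool $C_S$ and a committed-parallel pool $C_P$; execution uses most-work-first on the jobs of $C_S$ and allocates any remaining processors to the jobs of $C_P$.

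On each scheduling event (an arrival or a completion), $\bat$ promotes tasks from $W$ into $C_S$ whenever doing so preserves a serial-only balance condition of the form $\max_{\tau \in C_S} \sigma_\tau \le (\sum_{\tau \in C_S} \sigma_\tau)/p$. Tasks that cannot be promoted stay in $W$ until a trigger (such as the task having been alive for $\sigma_\tau$ time, or $C_S$ becoming empty) fires, at which point they are committed to $C_P$. Intuitively, this delay gives $\bat$ time to observe whether future arrivals make $\tau$ fit into a balanced $C_S$; once the trigger expires, committing $\tau$ to $C_P$ is justified because $\pi_\tau/p \le \sigma_\tau$ is then already bounded by the time $\bat$ has spent with $\tau$ alive.

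The analysis of the $6$-competitive ratio follows the template of \cref{prop:3comp}: I partition $\opt$'s schedule into $\opt$-awake-intervals and prove an invariant analogous to \cref{eq:theinvariantbal} that relates $\bat$'s remaining work to $\opt$'s progress on each interval. The case analysis on each arriving task $\tau$ splits into: (i) $\bat$ and $\opt$ agree, or $\opt$ picks the more expensive implementation, in which case the invariant is preserved immediately as in \cref{lem:bolus_is_good}; (ii) $\opt$ runs $\tau$ serially but $\bat$ runs it in parallel, in which case the delay trigger forces $\sigma_\tau$ to be large relative to the interval, so $\bat$'s parallel cost $\pi_\tau/p \le \sigma_\tau$ can be charged to $\opt$'s serial execution of $\tau$; and (iii) $\opt$ runs $\tau$ in parallel but $\bat$ runs it serially, in which case the serial balance condition together with $\pi_\tau \ge \sigma_\tau$ yields the bound.

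The main obstacle is case (ii): because $\bat$ never uses $\pi_\tau$, I must argue that the triggering condition is tight enough that $\bat$'s waiting time together with its parallel execution time fits into a constant-factor budget against $\opt$. Showing that this constant is at most $6$ (rather than some larger value) requires carefully accounting for the extra factor of $2$ over $\bal$'s competitive ratio that arises from the delay, using the crucial fact that $\pi_\tau/p \le \sigma_\tau$ for every task.
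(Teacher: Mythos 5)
Your proposal diverges substantially from the paper's proof, and as written it has genuine gaps that I don't think patch easily.

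The paper's scheduler $\unk$ is much simpler than your $\bat$: whenever a processor is idle, $\unk$ grabs any not-yet-started task $\tau_i$ and runs it serially if it has been alive for at least $\sigma_i$ time, and in parallel otherwise; at most one parallel job runs at a time and soaks up whatever processors the serial jobs leave free. Crucially, the paper does \emph{not} prove competitiveness by extending the invariant-induction argument of \cref{lem:bolus_is_good}. Instead it uses a two-part charging argument: first, a combinatorial interval-shifting argument shows that $\unk$ is unsaturated for at most half of its awake time (each maximal unsaturated interval is absorbed by a time-shifted copy of a preceding saturated interval of at least equal length); second, a four-way categorization of tasks shows that the work $\unk$ performs during saturated steps is at most $3p\T_\opt$, hence the saturated time is at most $3\T_\opt$. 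Combining the two gives the factor $6$. No per-arrival invariant is ever stated or needed.

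Your attempt to port \cref{lem:bolus_is_good} into the parallel-work-oblivious setting runs into a structural obstacle that you don't resolve. The invariant \cref{eq:theinvariantbal} tracks the \emph{committed} work that $\bal$ holds, and it advances one task at a time because $\bal$ is decide-on-arrival: when task $i+1$ arrives, both $\bal$'s and $\opt$'s incremental commitments are pinned down immediately and can be compared. Once you introduce a waiting pool $W$, the work that $\bat$ will ultimately incur on a task in $W$ is undetermined at arrival time (it could be $\sigma_\tau$ or $\pi_\tau$ depending on how the future unfolds), so there is no well-defined ``remaining work'' quantity $B$ to feed into the induction, and the clean one-task-at-a-time update breaks down. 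You acknowledge that case~(ii) is the main obstacle, but the step you outline there --- that the trigger makes $\sigma_\tau$ ``large relative to the interval'' so $\pi_\tau/p \le \sigma_\tau$ can be charged to $\opt$'s serial run --- is not actually established: the trigger fires a fixed $\sigma_\tau$ time after arrival irrespective of where that falls inside the $\opt$-awake-interval, and when many tasks are simultaneously waiting you have not shown how their accumulated waiting time is bounded against $\opt$. In short, you have identified a plausible scheduler (your delay trigger is essentially $\unk$'s rule), but the analysis you sketch is borrowed from a setting whose key structural property (commitments on arrival) your scheduler deliberately gives up, and the substitute argument for case~(ii) is asserted rather than proved. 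The paper sidesteps all of this with the saturated/unsaturated decomposition, which never needs to reason about the state of $\bat$ one arrival at a time.
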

We call our scheduler $\unk$. Whenever there are idle processors, $\unk$ takes any not-yet-started (but already arrived) task $\tau_i$ and:
\begin{itemize}
  \item If $\tau_i$ arrived more than $\sigma_i$ time 
  ago $\unk$ runs $\tau_i^\ser$.
  \item Otherwise $\unk$ runs $\tau_i^\pll$.
\end{itemize}
At each time step $\unk$ allocates a processor to each of the at-most-$p$ running serial jobs. There will be at most one parallel task running at a time. If there is a running parallel job $\unk$ allocates any processors not being used to run serial jobs to this single running parallel job. 
At any time some tasks may have arrived but not yet been started; that is, $\unk$ is not a decide-on-arrival scheduler.

Now we analyze $\unk$ using two lemmas.
As in \cref{sec:3comp} we say $\unk$ is saturated on a time step if all $p$ processors are in use, and unsaturated otherwise.
Also, as in the previous section, it suffices to analyze TAPs where $\unk$
always has at least one uncompleted task 
present at any time from the start of time
until the completion of the final task. We will make this assumption wlog for the rest of the section, meaning that the awake time of $\unk$ is equal to the completion time.

\begin{lemma}\label{lem:halftimeunsat}
$\unk$ is unsaturated at most $1/2$ of the time.
\end{lemma}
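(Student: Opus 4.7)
My plan is to prove the equivalent statement $U \le S$, where $U$ and $S$ denote the total unsaturated and saturated time (so $U+S = \T_{\unk}$ and the lemma's ratio claim is $U/(U+S) \le 1/2$).

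First I would collect three structural observations about $\unk$. (i) During any unsaturated moment $\unk$ has idle processors, so there are no waiting tasks (else $\unk$ would immediately start one), no parallel job is running (a running parallel job uses every otherwise-idle processor), and at least one serial job is running by the always-alive assumption already made in the section. (ii) A task arriving during an unsaturated interval has age $0 < \sigma_i$, so $\unk$ starts it in parallel, immediately ending the unsaturated interval; hence every maximal unsaturated interval $[a,b]$ contains no arrivals, and the only events inside it are completions of the serial jobs already running at $a$. (iii) For every $\unk$-serial task $\tau$ the rule gives $s_\tau > t_\tau + \sigma_\tau$, and the wait interval $[t_\tau, s_\tau]$ lies entirely in saturated time, for an unsaturated moment inside it would have caused $\unk$ to start $\tau$ earlier.

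Next, for each maximal unsaturated interval $U_k = [a_k, b_k]$ I would pick $\tau^*_k$ to be the serial job running at $a_k$ with the largest completion time. By (ii) plus always-alive, $\tau^*_k$ is still running at $b_k^-$, which gives $|U_k| \le s_{\tau^*_k} + \sigma_{\tau^*_k} - a_k \le \sigma_{\tau^*_k}$. I would then case-split on where $\tau^*_k$ arrived. If it arrived during the current saturated interval $S_k = [c_k, a_k]$, then (iii) yields $|S_k| \ge s_{\tau^*_k} - t_{\tau^*_k} \ge \sigma_{\tau^*_k} \ge |U_k|$ directly, which is already enough. Otherwise $\tau^*_k$ must have started in some strictly earlier saturated interval $S_j$ (a serial task cannot wait through an unsaturated interval, by (i)) and must have been running continuously through the alternating chain $S_j, U_j, S_{j+1}, U_{j+1}, \ldots, S_k, U_k$, yielding the grouped bound
\[
\sum_{i=j}^{k} |U_i| + \sum_{i=j+1}^{k} |S_i| \;\le\; \sigma_{\tau^*_k} \;\le\; s_{\tau^*_k} - t_{\tau^*_k},
\]
where $s_{\tau^*_k} - t_{\tau^*_k}$ is bounded by the saturated time accumulated in the prefix $[0, s_{\tau^*_k}]$.

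The hard part will be aggregating these per-interval and grouped bounds to conclude $\sum_k |U_k| \le \sum_k |S_k|$ without double-counting. I expect to do this by induction on $k$ with a strengthened invariant that keeps track of which portions of saturated time have already been ``spent'' by earlier critical tasks: in the direct-chargeable case the match $|U_k| \le |S_k|$ is immediate, while in the grouped case the single long wait of $\tau^*_k$ pays collectively for the block $U_j, \ldots, U_k$, with the intermediate $|S_i|$'s providing the slack needed so that the arithmetic goes through. Verifying that the sat time charged across different critical tasks never overcommits any single portion of the schedule is the technical crux.
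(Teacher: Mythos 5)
Your observations (i)--(iii) and the per-interval bound $|U_k|\le\sigma_{\tau^*_k}$ are correct and closely mirror the paper's argument (your $\tau^*_k$ is exactly the paper's witness ``serial task with the most remaining work at $a$''). The genuine gap is the aggregation step, which you correctly flag as ``the technical crux'' but leave as a plan rather than a proof. As written, your charging target in case 2---``the saturated time accumulated in the prefix $[0, s_{\tau^*_k}]$''---is too weak: several distinct unsaturated intervals (with distinct critical tasks) would each charge against the same saturated prefix, and you give no mechanism to prevent a unit of saturated time from being charged more than once. The ``strengthened invariant'' is unstated, and it is not clear that an induction of the shape you describe closes.

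The missing idea is a one-line sharpening of your (iii): because the wait $[t_\tau, s_\tau]$ is a \emph{contiguous} interval lying entirely in saturated time, it is contained in a \emph{single} maximal saturated interval, namely the $S_j$ with $s_{\tau^*_k}\in S_j$. This localization upgrades your case-2 inequality to $\sigma_{\tau^*_k} \le s_{\tau^*_k}-t_{\tau^*_k} \le |S_j|$, hence to the clean block bound $\sum_{i=j}^{k}|U_i| \le |S_j| \le \sum_{i=j}^{k}|S_i|$, and the delicate invariant can be replaced by a block decomposition: process unsaturated indices from the last index $K$ backwards, form the block $[j(K),K]$, recurse on $\{1,\dots,j(K)-1\}$, and sum block bounds over the resulting partition of $\{1,\dots,K\}$ to get $U\le S$. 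The paper skips all of this bookkeeping with one move: for each maximal saturated interval $W_i$ it defines $W_i'$ to be $W_i$ translated forward so as to start at $W_i$'s right endpoint, proves your witness claim in the form ``every unsaturated interval $[a,b]$ is contained in $W_i'$ for the $W_i$ in which $\tau^*_k$ started,'' and concludes $U \le \bigl|\bigcup_i W_i'\bigr| \le \sum_i |W_i| = S$ immediately, with no induction or case split at all.
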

\begin{proof}
Let $W_1,W_2,\ldots$ denote the maximal intervals 
of time when $\unk$ is saturated; we call these \defn{saturated intervals}.
Similarly we refer to maximal intervals where $\unk$ is unsaturated as \defn{unsaturated intervals}. Observe that, whenever we are in an unsaturated interval, every task that has arrived so far either must have already completed or must be currently running in serial. That is, during an unsaturated interval there are never parallel jobs running and there are never jobs sitting around without being started.

For each saturated interval $W_i$
let $W_i'$ denote an interval of the same length as $W_i$
but shifted to start exactly at the end of interval $W_i$.
We claim that $\bigcup_i W_i'$ covers all unsaturated intervals.

Fix some unsaturated interval $[a,b]$.
Let $\tau_j$ be the serial task with the 
most remaining work present at time $a$.
Let $W_i$ be the saturated interval when $\tau_j$ was started.
We will show that $[a,b]\subseteq W_i'$.
Let $t$ denote the time when $\tau_j$ is started. 
Observe that $\unk$ must be saturated for all of $[t_j,t]$ or else $\tau_j$ would have been started in parallel. Thus, $[t_j, t]\subseteq W_i$.
Further, observe that $t-t_j\ge \sigma_j$, because 
$\unk$ must wait $\sigma_j$ time before starting $\tau_j$ in serial.
In particular this means that $|W_i| \ge \sigma_j$.

Next, we claim that at all times in $[t, b]$, $\unk$ allocates a processor to $\tau_j$.
Indeed, suppose that at some time step in $[t,b]$ $\tau_j$ was not allocated a processor. This would mean that there are $p$ other serial tasks with at least as much remaining work as $\tau_j$. But in this case $\unk$ would remain saturated until $\tau_j$ is completed, contradicting the fact that $\unk$ is not saturated during times $[a,b]$.
Thus, $b\le t+\sigma_j$, and so $[a,b]\subseteq [t, t+\sigma_j]$.
Finally, recall that $|W_i|\ge \sigma_i$ so $[a,b]\subseteq [t, t+|W_i|]$ and consequently $[a,b]\subseteq W_i'$.

Of course $\left|\bigcup_i W_i' \right| \le \sum_i |W_i|$. Thus $\unk$ is unsaturated at most $1/2$ of the time.

\end{proof}

\begin{lemma}
The amount of time that $\unk$ is saturated is at most $3\T_\opt.$
\end{lemma}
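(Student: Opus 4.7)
The plan is to decompose $\unk$'s saturated time as $S = A + B$, where $B$ is the total time during which all $p$ processors are running serial jobs and $A$ is the complementary saturated time, during which $\unk$ is executing its single parallel task (recall that $\unk$ runs at most one parallel task at a time). I would bound $B \le \T_\opt$ and $A \le 2\T_\opt$ separately, and combine to get $S \le 3\T_\opt$.

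For $B$: by the rule for $\unk$, a task is run in serial only once it has been alive for more than $\sigma_i$ time, so OPT must also perform at least $\sigma_i$ work on it (serial costs $\sigma_i$, and parallel costs $\pi_i \ge \sigma_i$). Thus the total serial work $W_s$ done by $\unk$ is at most OPT's total work, which is at most $p\T_\opt$. During $B$ all $p$ processors do serial work, so $pB \le W_s \le p\T_\opt$ and hence $B \le \T_\opt$.

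For $A$: I would write $A = \sum_j \delta_j$ over the disjoint parallel-active sub-intervals $P_j = [s_j, s_j+\delta_j]$ (corresponding to tasks $\tau_{i_j}$ with $s_j - t_{i_j} < \sigma_{i_j}$), and split the work inside each $P_j$ into its parallel portion $\pi_{i_j}$ and its serial portion $S_j$. Since $p\delta_j = \pi_{i_j} + S_j$, summing gives $pS = W_p + W_s$ with $W_p = \sum_j \pi_{i_j}$. Combined with $W_s \le p\T_\opt$, this reduces the lemma to proving $W_p \le 2p\T_\opt$.

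The main obstacle will be this last bound. The naive chain $\pi_{i_j} \le p\sigma_{i_j}$ and $\sum_j \sigma_{i_j} \le p\T_\opt$ overshoots by a factor of $p$. To close the gap I would use three structural properties of $\unk$'s parallel starts: the $P_j$ are pairwise disjoint since $\unk$ runs at most one parallel task at a time; each start satisfies the recency condition $s_j - t_{i_j} < \sigma_{i_j}$; and the whole pending window $[t_{i_j}, s_j]$ lies inside $\unk$'s saturated time, because otherwise $\unk$ would have started $\tau_{i_j}$ sooner. Leveraging these, I would set up a charging argument that maps each $\pi_{i_j}$ to a portion of OPT's awake time of size $\Theta(\pi_{i_j}/p)$, and use the sequential ordering of the $P_j$ together with the short arrival window $[t_{i_j}, t_{i_j} + \sigma_{i_j})$ to control the overlap of these charges and conclude that the aggregate fits within $O(\T_\opt)$. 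Combining then gives $S = A + B \le 2\T_\opt + \T_\opt = 3\T_\opt$.
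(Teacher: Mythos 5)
Your decomposition $S = A + B$ and the bound $B\le \T_\opt$ (via $W_s \le W_\opt \le p\T_\opt$) are fine, and they parallel the easy part of the paper's argument. But the reduction to the claim $W_p \le 2p\T_\opt$ is where the real content of the lemma lies, and that claim is not proved in your proposal---it is only gestured at. The paper's proof does not in fact establish anything like $W_p \le 2p\T_\opt$ directly; instead it classifies each parallel-run task according to how $\unk$'s execution window sits relative to $\opt$'s awake intervals (does $\unk$ start the job after $\opt$ has already finished it? does $\unk$'s parallel run stay inside $\opt$'s awake time? does it spill into an $\opt$-gap?), and the key moves are dichotomies about what \emph{$\opt$} must have done. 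For instance, when $\unk$ starts $\tau_i$ in parallel after $\opt$ has completed $\tau_i$, the freshness condition $s_j - t_{i_j} < \sigma_{i_j}$ forces $\opt$ to have run $\tau_i$ in parallel as well, so $\opt$ also paid $\pi_i$; and when $\unk$'s parallel run spills into an $\opt$-gap, the corresponding $\opt$-awake-interval must have length at least $\pi_i/p$ because $\opt$ finished $\tau_i$ within it. Your sketch invokes three properties that are purely about $\unk$'s schedule (disjointness of the $P_j$, the freshness condition, and the pending window being saturated), but none of these, on their own, lets you charge $\pi_{i_j}/p$ to $\opt$'s awake time: you need to say something about what $\opt$ was doing with $\tau_{i_j}$, or during the interval $P_j$, and you have not. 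Without such an argument the factor-$p$ overshoot you identify is not closed. So while the setup is reasonable, the central step ``map each $\pi_{i_j}$ to a portion of $\opt$'s awake time of size $\Theta(\pi_{i_j}/p)$'' is exactly what the lemma requires, and is missing; to complete it you would need something like the paper's case split on $\opt$'s behavior (e.g.\ whether $\opt$ was still awake during $P_j$, and if not, whether $\opt$ must have run $\tau_{i_j}$ in parallel).
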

\begin{proof}
For each task $\tau$, let $A_\tau$ be the interval of time
between when $\tau$ arrives and when $\opt$ completes $\tau$.
Let $A=\bigcup_{\tau\in \mathcal{T}}A_\tau$ be the set of times when $\opt$ has uncompleted tasks, and let $B$ denote the set of times when $\opt$ has no uncompleted tasks present.
We divide tasks $\tau$ into four categories \footnote{The categories are not mutually exclusive. If a task $\tau$ falls in multiple categories we over-charge $\unk$ for $\tau$'s work.}:
\begin{enumerate}
  \item $\unk$ runs $\tau^\ser$.
  \item $\unk$ runs $\tau^\pll$ starting at some time after the end of $A_\tau$.
  \item $\unk$ runs $\tau^\pll$ entirely within time steps in $A$.
  \item $\unk$ starts $\tau^\pll$ during $A_\tau$, but part
    of $\tau$'s execution time by $\unk$ occurs during $B$.
\end{enumerate}

We now analyze the performance of $\unk$ on each category of
task.
\begin{claim}
  $\unk$'s total work on tasks of category (1) and (2) is at most $p\T_\opt$.
\end{claim}
\begin{proof}
For each task $\tau_i$ of category (1) $\opt$ must have incurred
at least $\sigma_i$ work; this is true of all tasks.
For each task $\tau_i$ of category (2) $\tau_i$ has not yet been
available for $\sigma_i$ time steps when $\unk$ starts $\tau_i$
in parallel. Thus, for $\opt$ to have already finished $\tau_i$
by the time that $\unk$ starts $\tau_i$ $\opt$ must have also
run $\tau_i$ in parallel. Thus, $\opt$ incurs work  $\pi_i$ for
task $\tau_i$. Therefore, $\opt$ incurs at least as much work as  $\unk$ on all tasks of categories (1) and (2). The amount of work that $\opt$ performs is at most $p\T_\opt$, which then bounds $\unk$'s work on tasks of category (1) and (2). \end{proof}

\begin{claim}
  $\unk$'s work on tasks of category (3) is at most $p\T_\opt$.
\end{claim}
\begin{proof}
  Tasks of category (3) all run during time steps in $A$, so the work spent on such tasks is at most
  $p\cdot |A| = p\T_\opt$.
\end{proof}

\begin{claim}
  $\unk$'s work on tasks of category (4) is at most $p\T_\opt$.
\end{claim}
\begin{proof}
Let $W_1^{\opt},W_2^{\opt}, \ldots$ denote the (maximal)
intervals of time when $\opt$ has uncompleted tasks. For each
$W_i^{\opt}$ there is at most one task $\tau_{k_i}$ that $\unk$ starts in parallel during $W_i^\opt$ whose execution time overlaps with $B$: this is because
$\unk$ only runs a single parallel task at a time. 
Let $K$ denote the set of category (4) tasks.
For each category (4) task $\tau_{k_i}\in K$, the corresponding $W_i^\opt$ must
have size at least $\pi_{k_i}/p$ because $\opt$ completes
task $\tau_{k_i}$ during $W_i^\opt$.
Thus, 
\begin{equation}\label{eq:sumsumsum}
\T_{\opt} = \sum_i |W_i^\opt| \ge \sum_{\tau_{k_i}\in K} \pi_{k_i}/p.
\end{equation}
The right hand side of \cref{eq:sumsumsum} is $\unk$'s
work on the category (4) tasks, giving the desired bound.
\end{proof}
Combining the previous three claims, the total work completed by
$\unk$ during saturated steps is at most $3p\T_{\opt}$. At each
saturated step $p$ units of work are performed. 
Thus, the total number of saturated time steps is at most $3\T_\opt$.
\end{proof}

Combined, the previous lemmas prove \cref{thm:3bill}.

\section{Minimizing $\mrt$}
\label{sec:meanresponsetime}
In this section we present our main result: a scheduler that, with $\bigO(1)$ speed augmentation, is $\bigO(1)$ competitive for $\mrt$ (or equivalently $\trt$). 

\subsection{Two Technical Lemmas}
\label{sec:technicalLemma}

In our analyses, it will be helpful to compare two settings: one in which a set of jobs $J$ must be executed with every job in serial, and the other in which the same set of jobs $J$ must be executed but where every job is perfectly scalable (i.e., $\pi_i=\sigma_i$). 

\begin{lemma}
  \label{lem:silly-serious}
  Let $J_{\ser}$ be a set of serial jobs with arbitrary arrival times. Let $J_{\pll}$ be jobs of
  the same work as jobs in $J_{\ser}$ but that are perfectly scalable.
  Then
 $$\trt_{\opt}^{J_{\ser}} \leq \bigO\paren{\trt_{\opt}^{O(1)\cdot J_{\pll}} +
 \sum_{j_i \in J_{\ser}} \work(j_i)}.$$  

\end{lemma}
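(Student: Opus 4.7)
The plan is to construct an auxiliary serial scheduler $\alg$ whose total response time on $J_\ser$ is bounded by $O(\trt_\opt^{c \cdot J_\pll}) + O(\sum \sigma_i)$ for a constant $c$, and then use the trivial inequality $\trt_\opt^{J_\ser} \leq \trt_\alg^{J_\ser}$ to conclude. The scheduler $\alg$ will borrow its scheduling priorities from the optimal parallel schedule.

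Concretely, I would fix an optimal schedule $S^*$ for $c \cdot J_\pll$ (for a sufficiently large constant $c$ to be chosen), and let $f_i^*$ denote the completion time of $j_i$ in $S^*$. Define $\alg$ on $J_\ser$ as follows: at each time step, $\alg$ runs the (up to $p$) alive jobs with the smallest values of $f_i^*$, each on one processor. For each job $j_i$, its response time in $\alg$ decomposes as $T_i - t_i = \sigma_i + \text{idle}_i$, where $\text{idle}_i$ is the total time $j_i$ is alive but unscheduled. During any idle instant for $j_i$, all $p$ processors must be occupied by strictly higher-priority jobs, so $p \cdot \text{idle}_i$ is at most the total processor-time $\alg$ spends on jobs with $f_k^* < f_i^*$ during $[t_i, T_i]$.

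Summing over $i$ gives $\trt_\alg = \sum \sigma_i + \sum_i \text{idle}_i$; the first sum is exactly the additive term in the lemma statement. The crux is to bound $\sum_i \text{idle}_i$ by $O(\trt_{S^*})$, which I would do via a charging argument: each unit of a job's idle time in $\alg$ is charged to the $p$ higher-priority jobs being processed at that instant. The key input is the work-budget inequality $\sum_{k : f_k^* \leq f_i^*} c\sigma_k \leq p \cdot f_i^*$ coming from $S^*$'s processor budget, combined with a careful accounting using the fact that higher-priority jobs contribute to $\trt_{S^*}$ throughout their own lifespans (not merely at their completion times).

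The main obstacle will be making the charging tight despite the drastic pace disparity between $\alg$ (restricted to one processor per job) and $S^*$ (which may dedicate all $p$ processors to a single job). The constant $c$ on $J_\pll$ in the lemma statement provides the slack needed: scaling up $S^*$'s work by a constant factor gives $\alg$ a constant fraction of ``headroom'' to catch up on serial processing of predecessor jobs by roughly time $f_i^*$. A naive bound that sums $\sum_{k : f_k^* \leq f_i^*} \sigma_k$ for every $i$ would overcount jobs that arrive late, so the analysis must restrict to jobs actually processed during $[t_i, T_i]$ and amortize across $i$. Once $\sum_i \text{idle}_i \leq O(\trt_{S^*})$ is in hand, the lemma follows from $\trt_\opt^{J_\ser} \leq \trt_\alg \leq O(\trt_{S^*}) + \sum \sigma_i = O\!\left(\trt_\opt^{c \cdot J_\pll} + \sum \sigma_i\right)$.
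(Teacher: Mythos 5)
Your high-level plan is sound: build an explicit serial scheduler $\alg$, use $\trt_\opt^{J_\ser}\leq \trt_\alg^{J_\ser}$, decompose $\trt_\alg = \sum_i\sigma_i+\sum_i \mathrm{idle}_i$, and bound the idle time against $\trt_{S^*}$. But the key step---$\sum_i \mathrm{idle}_i \le O(\trt_{S^*})$---is exactly where you stop, and I don't think it goes through for the scheduler you propose, at least not via the charging you sketch. The problem with prioritizing serial jobs by $f_i^*$ is that a job can have a tiny $f_i^*$ (because $S^*$ gave it all $p$ processors) while having enormous serial work $\sigma_i$; in $\alg$ it then occupies a processor for $\sigma_i \gg f_i^*$ time while contributing only $O(f_i^*)$ to $\trt_{S^*}$. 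If many jobs idle simultaneously behind such a long-running high-priority job, each running job absorbs a charge of (number idle)$/p$ per unit time, and there is no a priori bound on the number idle relative to $|\mathrm{alive}_{S^*}(t)|$. The natural auxiliary claim that would make your charging self-contained (something like ``at most $p$ jobs alive in $\alg$ whose $S^*$-completion has already passed'') does not obviously hold for priority-by-$f^*$. You flag the ``headroom'' and ``amortization across $i$'' as the obstacles, but those are precisely the content of the lemma, and the factor-$c$ scaling by itself does not close the gap (a predecessor with $f_k^*<f_i^*$ can still have $\sigma_k > f_i^*$ whenever $c<p$).

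The paper's proof takes a genuinely different route that avoids designing a priority order at all. It constructs a scheduler $\sss$ that is in ``silly mode'' when fewer than $p$ jobs are alive (each job gets its own processor, so the response time there is charged directly to $\sum_i\sigma_i$) and ``serious mode'' when at least $p$ jobs are alive. In serious mode it runs the newly arrived jobs under $\equi$, and the crucial observation is that $\equi$ on serial jobs is \emph{literally the same schedule} as $\equi$ on perfectly scalable jobs once there are at least $p$ jobs, since $\equi$ never assigns more than one processor per job. This lets the paper invoke Theorem~\ref{thm:jeff} (the Edmonds--Pruhs $O(1)$-competitiveness of $\equi$ on malleable jobs with speed augmentation) as a black box, rather than proving a fresh charging lemma. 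Your approach trades that black box for a bespoke SRPT-style argument; that could in principle give a more self-contained proof, but to complete it you would need to actually establish the overdue-job count bound and pick a priority order that respects serial processing time (SRPT or $\equi$, not $S^*$-completion order), not just $f_i^*$. As written, the proposal has a genuine gap.
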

\begin{proof} 
  We prove the lemma by constructing a scheduler $\sss$ that achieves
  mean response time at most 
  $$\bigO\paren{\trt_{\opt}^{12J_{\pll}} +
 \sum_{j_i \in 12J_{\ser}} \work(j_i)}.$$
  The \defn{Silly-Serious} scheduler
  $\sss$ operates in 2 modes: \defn{silly mode}, where there
  are less than $p$ unfinished jobs alive, and \defn{serious
  mode}, where there are at least $p$ unfinished jobs alive.
  Define \defn{serious intervals} and \defn{silly intervals} to
  be maximal contiguous sets of time where $\sss$ is in the
  respective modes. When discussing $\sss$, we will assume that
  it has access to $2p$ processors (rather than just $p$). This 
  can be simulated using a factor-of-$2$ speed augmentation and time sharing.

  During silly mode $\sss$ schedules each job on a single
  processor. As new jobs arrive, once the total number of jobs
  present reaches $p$, $\sss$ enters serious mode. As a boundary
  condition, we consider any jobs that arrive in that moment to
  have arrived \emph{during} the serious interval. We refer to the
  jobs that arrive during the serious interval as \defn{scary} (for
  this serious interval). 
  
  During a given serious interval, $\sss$ uses $2p$ total processors: it puts the at-most-$p$ non-scary jobs from the prior silly interval onto $p$ processors, and it schedules the scary jobs via $\equi$ on the other $p$ processors.

We remark that there may be fewer than $p$ scary jobs, in which
case $\equi$ may want to allocate multiple processors to a single
job. In this case $\sss$ does not actually schedule the (serial)
job on multiple processors but simply runs it on its own
processor. Because the jobs that $\sss$ is running are serial
only, we can think of them as having flat speedup curves, i.e.
they require the same amount of time to run regardless of how
many processors they are run on. So, when $\equi$ tries to run a
job on multiple processors, the progress is the same as if we
were to run it on a single processor. Thus we can think of $\sss$
as faithfully simulating $\equi$ on the scary jobs.

We bound the $\trt$ incurred by $\sss$ with $6$ speed
augmentation by partitioning the $\trt$ into three parts and
bounding each part.

We can bound the total $\trt$ incurred by $\sss$ during
silly intervals by $\sum_{j_i\in J_{\ser}} \work(j_i)$, because every job has a dedicated processor at all times during a silly interval. Thus we will focus the rest of the proof on serious intervals.

Now we fix a serious interval $I = [t_a,t_b]$ to analyze. Let
$X_{\ser}$ be the scary jobs for $I$, but with each job's work
truncated to be the amount of work that the job completes during
$I$. Similarly, let $Y_{\ser}$ be the non-scary jobs that run
during $I$, but with each of their works also reduced to be the
work completed by that job during $I$. 
We claim the following chain of inequalities holds for the jobs in the serious interval $I$:

\begin{align}
  \trt_{\sss}^{X_{\ser}\cup Y_{\ser}} & \leq \trt_{\equi}^{2(X_{\ser}\cup Y_{\ser})} \label{sssEQ2}\\
  &= \trt_{\equi}^{2(X_{\pll} \cup Y_{\pll})} \label{sssEQ3}\\
  &\leq \trt_{\opt}^{6(X_{\pll}\cup Y_{\pll})} \label{sssEQ4}\\
  &\leq \trt_{\opt}^{12X_{\pll}} + \sum_{j_i \in 12Y_{\pll}}\work(j_i).\label{sssEQ5}
\end{align}
Note that the first expression $\trt_{\sss}^{X_{\ser} \cup Y_{\ser}}$ represents the $\trt$ for $\sss$ during serious interval $I$. \\
\textbf{Inequality~\eqref{sssEQ2}:}
$\sss$'s treatment of $X_{\ser} \cup Y_{\ser}$ on $2p$ processors 
(which it is granted via factor-of-$2$ speed augmentation) is at
least as good as running $\equi$ on $X_{\ser} \cup Y_{\ser}$ with
$p$ processors, because $\sss$ runs $\equi$ on $X_{\ser}$
and then separately allocates one processor to each job in
$Y_{\ser}$.\\
\textbf{Inequality~\eqref{sssEQ3}:}
We denote by $X_{\pll}, Y_{\pll}$ perfectly scalable versions of the
jobs in $X_{\ser}, Y_{\ser}$. Since
there are at least $p$ jobs at all times during a serious
interval, $\equi$'s treatment of $2(X_{\ser}\cup Y_{\ser})$ and $\equi$'s
treatment of $2(X_{\pll}\cup Y_{\pll})$ are actually the same: it
equally partitions the processors amongst the available jobs,
potentially using time sharing; crucially $\equi$ never assigns
more than $1$ processor to any job because there are a
sufficiently large number of jobs. \\
\textbf{Inequality~\eqref{sssEQ4}:}
By Theorem~\ref{thm:jeff}
$\equi$ on $2(X_{\pll}\cup Y_{\pll})$ is $\bigO(1)$ competitive with
$\opt$ on $6(X_{\pll}\cup Y_{\pll})$. \\
\textbf{Inequality~\eqref{sssEQ5}:}
$\opt$ on $6(X_{\pll}\cup
Y_{\pll})$ using $p$ processors is at least as good as $\opt$ on $12(X_{\pll}\cup
Y_{\pll})$ using $2p$ processors. This, in turn, is at least good as the $\trt$ incurred by $\opt$ on $12X_{\pll}$ using $p$ processors plus the $\trt$ incurred by $\opt$ on $12Y_{\pll}$ using $p$ processors, i.e., $\trt_{\opt}^{6X_{\pll}} + \trt_{\opt}^{6Y_{\pll}}$. Finally, since $\trt_{\opt}^{12Y_{\pll}} \le \sum_{j_i\in 12Y_{\pll}}\work(j_i)$, we get \eqref{sssEQ5} as desired. \\
\textbf{Total $\trt$ incurred by serious intervals:} We can now bound the total $\trt$ incurred by serious intervals as follows. Let $I_1, I_2, \ldots$ be the serious intervals, and define $X^{(1)}_{\pll}, X^{(2)}_{\pll}, \ldots$ and $Y^{(1)}_{\pll}, Y^{(2)}_{\pll}, \ldots$ so that $X^{(k)}_{\pll}$ is $X_{\pll}$ for interval $I_k$ and $Y^{(k)}_{\pll}$ is $Y_{\pll}$ for interval $I_k$. The jobs in each $X^{(k)}_{\pll}$ and each $Y^{(k)}_{\pll}$ represent portions of jobs from $J_{\pll}$. Note that, although a given job from $J_{\pll}$ could appear in multiple $Y^{(k)}_{\pll}$'s, each job appears in at most one $X^{(k)}_{\pll}$ since each job can be scary in at most one serious interval.

By the inequalities above, we have that the total response time spent in serious intervals is at most
$$\sum_k \left(\trt_{\opt}^{12X^{(k)}_{\pll}} + \sum_{j_i\in 12Y^{(k)}_{\pll}}\work(j_i)\right).$$ 
Since each job in $J_{\pll}$ appears in at most one $X^{(k)}_{\pll}$, we have that
$$\sum_k \trt_{\opt}^{12X^{(k)}_{\pll}} \le \trt_{\opt}^{12J_{\pll}}.$$
Moreover, since the jobs in the $Y^{(k)}_{\pll}$'s each represent disjoint portions of the jobs in $J_{\pll}$, we have that
 $$\sum_k \sum_{j_i\in 12Y^{(k)}_{\pll}}\work(j_i) \le \sum_{j_i\in 12J_{\pll}}\work(j_i) \le \trt_{\opt}^{12J_{\pll}}.$$
 
Thus the total response time spent in serious intervals is at most $2 \trt_{\opt}^{12J_{\pll}}$, which completes the proof. 



\end{proof}

It will also be helpful to consider the setting in which we are comparing a set of perfectly scalable jobs to a set of 
serial jobs that arrive slightly later.

\begin{lemma}
   Let $J = \{j_1, \ldots, j_n\}$ be a set of perfectly scalable jobs, where job $j_i$ has work $2w_i$ and arrival time $t_i$. 
   Let $K = \{k_1, \ldots, k_n\}$ be a set of serial jobs where job $k_i$ has work at most $w_i$ and arrival time $t_i + w_i$. 
   Then,
   $$\trt^K_\opt \le O\left(\trt^{O(1) \cdot J}_{\opt} + \sum_i w_i\right).$$
   \label{lem:silly-serious2}
\end{lemma}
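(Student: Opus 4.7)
The plan is to reduce this lemma to \cref{lem:silly-serious} by introducing an intermediate set of \emph{serial} jobs. Let $J^{\ser}$ be the job set with exactly the same works ($2w_i$) and arrival times ($t_i$) as $J$, but serial rather than perfectly scalable. Since $J$ is precisely the perfectly scalable version of $J^{\ser}$, \cref{lem:silly-serious} applied to $J^{\ser}$ gives
$\trt^{J^{\ser}}_{\opt} \le O\paren{\trt^{O(1)\cdot J}_{\opt} + \sum_i w_i}$,
so it suffices to prove $\trt^{K}_{\opt} \le \trt^{J^{\ser}}_{\opt}$.

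To prove that, I will convert $\opt$'s schedule for $J^{\ser}$ into a valid schedule $\alg$ for $K$. Let $c_i(t)$ be $\opt^{J^{\ser}}$'s allocation to $j^{\ser}_i$ and $\phi_i$ its completion time. For each $i$, let $\psi_i$ be the earliest time at which $\opt^{J^{\ser}}$ has performed $w_i$ work on $j^{\ser}_i$ (the ``halfway point'' of $j^{\ser}_i$). The key observation is that because $j^{\ser}_i$ is serial we have $c_i(t) \le 1$, which forces $\psi_i - t_i \ge w_i$, i.e.\ $\psi_i \ge t_i + w_i$---exactly the arrival time of $k_i$. Now let $\alg$ allocate $c_i(t)$ to $k_i$ during the interval $[\psi_i, \phi_i]$ and nothing otherwise. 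This is a valid serial schedule: the per-job bound $c_i(t) \le 1$ and the total-capacity bound $\sum_i c_i(t) \le p$ are both inherited from $\opt^{J^{\ser}}$, and the arrival constraint is satisfied since $\psi_i \ge t_i + w_i$. The total work performed on $k_i$ is $\int_{\psi_i}^{\phi_i} c_i(t)\,dt = 2w_i - w_i = w_i$, which suffices because $k_i$ has work at most $w_i$. Thus $k_i$ completes by $\phi_i$ and
$\trt^{K}_{\alg} \le \sum_i(\phi_i - t_i - w_i) \le \trt^{J^{\ser}}_{\opt},$
giving $\trt^{K}_{\opt}\le \trt^{J^{\ser}}_{\opt}$ as desired.

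The main subtlety---and the reason for introducing the serial intermediate $J^{\ser}$ rather than trying to work directly with $J$---is this: if we tried to copy the allocation of $\opt^{J}$ instead, $\opt^{J}$ would be free to finish $j_i$ entirely inside $[t_i, t_i + w_i]$ by using many processors in parallel, leaving no allocation to ``borrow'' once $k_i$ actually arrives at $t_i + w_i$. The serial constraint on $J^{\ser}$ is precisely what forces a $w_i$-length tail of work on each $j^{\ser}_i$ to lie after $t_i + w_i$, so once the intermediate set is in hand the construction is essentially forced and no further obstacles arise.
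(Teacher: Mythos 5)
Your proposal takes essentially the same route as the paper: introduce the serial intermediate $J'$ (your $J^{\ser}$), show $\trt^K_\opt \le \trt^{J'}_\opt$, and then invoke Lemma~\ref{lem:silly-serious}. The only difference is one of rigor---the paper dispatches the middle inequality with a single heuristic sentence (``we would rather the job have $w_i$ less work than have the same job show up at time $w_i$ earlier''), whereas you give the explicit scheduling construction (restricting $\opt$'s allocation to the post-halfway window $[\psi_i,\phi_i]$ and using the serial rate bound to get $\psi_i \ge t_i + w_i$) that actually justifies it; your version is correct and fills in a detail the paper leaves to the reader.
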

\begin{proof}
    Define $J'$  to be the set of jobs $J$, but where each job is serial rather than perfectly scalable. 
    Then 
    $$\trt^{K}_\opt \le \trt^{J'}_\opt,$$
    since when scheduling a serial job with $2w_i$ work, we would rather the job have $w_i$ less work than have the same job show up at time $w_i$ earlier.
    Finally, by \cref{lem:silly-serious}, 
    $$\trt^{J'}_\opt \le O\left(\trt^{12 \cdot J}_{\opt} + \sum_i w_i\right).$$
    This completes the proof.
\end{proof}

\subsection{A Cancelling Scheduler}
\label{sec:cancelcompetitive}

In this subsection we define and analyze a scheduler $\canc$ to prove 
\cref{thm:cancel}.
Note that $\canc$ uses cancelling, i.e. can kill tasks and restart them with a different
implementation; in Theorem~\ref{thm:nocancel} we remove the need for cancelling. 
\begin{theorem}
  \label{thm:cancel}
  There is an online scheduler which, using $\bigO(1)$ speed augmentation
  \textbf{and cancelling}, is $\bigO(1)$ competitive for $\mrt$.
\end{theorem}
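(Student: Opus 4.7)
The plan is to define $\canc$ as an optimistic scheduler that speculatively starts every arriving task in parallel and falls back to serial only when the parallel attempt is clearly failing. Concretely, upon arrival of $\tau_i$, $\canc$ starts $\tau_i^{\pll}$ and manages all currently alive parallel jobs via $\equi$ on half of the processors; if $\tau_i$ is still alive after having been in the parallel pool for $\sigma_i$ units of time, $\canc$ cancels it and moves it into a serial pool run by $\equi$ (with time sharing) on the other half of the processors. The halving of processors is absorbed into the $\bigO(1)$ speed augmentation.

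To analyze $\canc$, I would partition the tasks into Class $P$ (those completed in the parallel pool) and Class $S$ (those cancelled and rerun in the serial pool), bounding each class's contribution to $\trt_{\canc}$ separately. For Class $P$, the parallel pool is exactly $\equi$ applied to malleable jobs whose speedup curves are linear up to $p$ processors and flat beyond (a nondecreasing sublinear curve). By \cref{thm:jeff}, its $\trt$ is $\bigO(\trt_\opt^{\bigO(1)\cdot J^{P}_{\pll}})$, where $J^{P}_{\pll}$ denotes the perfectly scalable jobs of work $\pi_i$. Since a perfectly scalable job of work $\pi_i$ is at least as easy as either of $\opt$'s implementation choices for $\tau_i$, this is $\bigO(\trt_\opt^{\mathcal{T}})$ after absorbing a further constant-factor speed augmentation.

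For Class $S$, each $\tau_i$ behaves as a serial job of work $\sigma_i$ that effectively arrives into the serial pool at time $t_i + \sigma_i$. Applying \cref{lem:silly-serious2} with $w_i = \sigma_i$, the optimal $\trt$ on this delayed serial instance is at most $\bigO(\trt_\opt^{\bigO(1)\cdot J^{S}_{\pll}} + \sum_{\tau_i\in S} \sigma_i)$, where $J^{S}_{\pll}$ is the perfectly scalable version at the original arrival times. The first term is $\bigO(\trt_\opt^{\mathcal{T}})$ as in the Class $P$ argument, and the serial pool's own $\trt$ is then $\bigO(1)$-competitive with $\opt$ on the delayed serial instance by running $\equi$ on the serial jobs (flat speedup curves) and invoking \cref{lem:silly-serious} to reduce to a perfectly-scalable comparison.

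The main obstacle I expect is precisely the charging argument $\sum_{\tau_i\in S}\sigma_i \le \bigO(\trt_\opt^\mathcal{T})$: a priori $\opt$ could complete $\tau_i$ in as little as $\pi_i/p$ time by dedicating all processors to its parallel version. The way around this is a load-comparison argument: if $\tau_i$ is in Class $S$, then during the $\sigma_i$-length window in which $\canc$'s parallel pool failed to complete $\tau_i$ under $\equi$, there were enough concurrent alive tasks that $\opt$ cannot simultaneously have given each of them close to a full processor's share, so on average $\opt$'s response time on tasks alive during that window is $\Omega(\sigma_i)$, and the excess can be charged to $\trt_\opt^\mathcal{T}$. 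Adding the Class $P$ and Class $S$ contributions then yields $\trt_\canc \le \bigO(\trt_\opt^\mathcal{T})$ with $\bigO(1)$ speed augmentation, proving \cref{thm:cancel}.
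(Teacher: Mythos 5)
There is a genuine gap in the core of your analysis, and it is exactly the gap that motivates the paper's key device (the ``relaxed jobs'' $j_i'$).

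Your chain of inequalities for both Class $P$ and (implicitly) Class $S$ relies on the step $\trt_\opt^{O(1)\cdot J_\pll} \le O(\trt_\opt^{\mathcal{T}})$, justified by the sentence ``a perfectly scalable job of work $\pi_i$ is at least as easy as either of $\opt$'s implementation choices.'' That sentence is false. If $\opt$ on $\mathcal{T}$ chooses to run $\tau_i^\ser$, the slot it allocates is one processor for $\sigma_i$ time, and a perfectly scalable job of work $\pi_i > \sigma_i$ cannot be completed in that slot. Concretely, take $n \le p$ tasks arriving at time $0$ with $\sigma_i = 1$, $\pi_i = p$: then $\trt_\opt^{\mathcal{T}} = n$ (run everything serially on its own processor), but $\trt_\opt^{J_\pll} = \Theta(n^2)$ (the perfectly scalable jobs of work $p$ must essentially queue up one after another), so the inequality fails by a factor of $\Theta(n)$ even with $O(1)$ speed augmentation. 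Your sketched ``load-comparison'' fix for Class $S$ does not rescue this: $\opt$ is free to complete $\tau_i$ serially on a single processor while giving every other alive task a full processor, so the claim that concurrency forces $\opt$'s per-task response times to be $\Omega(\sigma_i)$ has no support. (It is also unnecessary: since you cancel a task exactly when it has accrued $\sigma_i$ of response time in the parallel pool, $\sum_{\tau_i\in S}\sigma_i$ is trivially at most the $\trt$ of your parallel pool; the real difficulty is bounding \emph{that}, and it is the same difficulty as Class $P$.)

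The paper fixes this by never comparing against perfectly scalable jobs of work $\pi_i$. Instead, $\canc$'s parallel pool runs $\equi$ on \emph{relaxed jobs} $j_i'$ of work $2\sigma_i$ with a speedup curve that is flat on fewer than $\pi_i/\sigma_i$ processors and linear with slope $\sigma_i/\pi_i$ above. This has two crucial effects you are missing. First, $\trt_\opt^{J'} \le \trt_\opt^{2\mathcal{T}}$: whether $\opt$ runs $2\tau_i$ serially (one processor for $2\sigma_i$ time) or in parallel (total $2\pi_i$ processor-time), the same slot completes $j_i'$, which is precisely why the comparison to $\opt$ on $\mathcal{T}$ goes through. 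Second, because $j_i'$ has total work $2\sigma_i$, a task that lingers in the parallel pool until $\canc$ cancels it has, by construction, incurred at least $\sigma_i$ of $\equi$'s response time on $j_i'$, so $\sum_{\tau_i\in S}\sigma_i \le \trt_\equi^{J'}$, and the whole analysis collapses onto a single application of \cref{thm:jeff} plus \cref{lem:silly-serious2}. Your scheduler itself is plausible, but to make the analysis work you would need to replace the comparison class $J_\pll$ (work $\pi_i$) with something like the paper's relaxed jobs (work $O(\sigma_i)$, specially shaped speedup curve) so that $\opt$'s serial choices can still dominate the comparison.
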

We now describe the operation of $\canc$ on TAP $\mathcal{T}$.
$\canc$ starts by defining a set of ``\defn{relaxed jobs}'' $J'$ 
which incorporate the serial and parallel jobs from $\mathcal{T}$ into their speed-up curves; $\canc$ will simulate running jobs $J'$ as a subroutine to determine how to schedule $\mathcal{T}$.
In particular, for each task $\tau_i\in \mathcal{T}$ we form a relaxed job 
$j'_i\in J'$ with total work $2\sigma_i$ and the following speedup curve:
\begin{itemize}
  \item $j'_i$ receives no speedup on $x<\pi_i/\sigma_i$ processors.
  \item $j'_i$ receives speedup $x\cdot \sigma_i / \pi_i$ on $x\geq \pi_i/\sigma_i$ processors. 
\end{itemize}

When describing $\canc$ we will assume that it has access to $2p$ processors; this can be simulated using a factor-of-$2$ speed augmentation.
$\canc$ schedules $\mathcal{T}$ as follows:
  \begin{itemize}
    \item $\canc$ maintains a pool of $p$ processors for running parallel
      jobs and a pool of $p$ processors for running serial jobs.
    \item Initially tasks are placed in the parallel pool and will
      be run with their parallel implementation. 
    \item $\canc$ manages the parallel pool by simulating $\equi$
      on the relaxed jobs $j'_i$ and then actually running
      $\tau_i^{\pll}$ during the simulated execution slots for $j'_i$.
    \item Whenever a task $\tau_i$ been in the parallel pool for
    time at least $\sigma_i$, $\canc$ cancels task $\tau_i$ 
    (which was running as job $\tau_i^\pll$) and
    restarts $\tau_i$ as job $\tau_i^{\ser}$ in the serial pool.
    \item $\canc$ manages the serial pool with the $\equi$ strategy.
  \end{itemize}
  
 First we must establish that $\canc$ is a valid schedule, i.e. each task $\tau_i$ is completed by $\canc$. This is a concern because $\canc$ computes a schedule for the relaxed jobs, and assumes that the actual tasks are completed by running during the time slots of their corresponding relaxed job. 
 \begin{proposition}
     $\canc$ completes all tasks.
 \end{proposition}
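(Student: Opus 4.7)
The plan is to trace a single task $\tau_i \in \mathcal{T}$ through $\canc$'s two pools and show that $\tau_i$ spends only finitely long in each. Since there are only $n < \infty$ tasks total, it suffices to argue that each task leaves the parallel pool in bounded time and that each serial job in the serial pool is eventually finished.

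First I would handle the parallel pool. The cancellation rule states that whenever a task $\tau_i$ has been in the parallel pool for wall-clock time at least $\sigma_i$, it is cancelled and moved to the serial pool. Hence $\tau_i$ occupies the parallel pool for at most $\sigma_i$ real time: either $\tau_i^{\pll}$ has already completed there, or $\tau_i$ is cancelled and forwarded to the serial pool by time $t_i + \sigma_i$. Crucially, this step does not require the $\equi$ simulation on the relaxed jobs $J'$ to make any progress at all; the timer expires regardless of what the simulated $\equi$ does, so behaviour of the parallel pool cannot strand a task.

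Next I would handle the serial pool. Once $\tau_i$ enters the serial pool, it is a fresh serial job $\tau_i^{\ser}$ of finite work $\sigma_i$, scheduled via $\equi$ on the pool's $p$ processors. Because at most $n$ tasks ever reach the serial pool (each task is cancelled at most once), at every moment $\equi$ is splitting $p$ processors among at most $n$ active serial jobs; each such job thus receives processor time at effective rate at least $\min(1, p/n) > 0$, either directly or through time-sharing. Since serial jobs have flat speedup curves, each unit of allocated processor-time translates to one unit of completed work, so $\tau_i^{\ser}$ accumulates its $\sigma_i$ units of work in finite additional time and completes.

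The main (admittedly modest) obstacle is recognizing that the proposition is purely a qualitative termination claim rather than a quantitative claim about the competitive ratio: one need not analyze the fidelity of the relaxed-job simulation on the parallel side, because the timer-based cancellation rule unconditionally ejects every task from the parallel pool within $\sigma_i$ time, and $\equi$ on a finite set of finite-work serial jobs always terminates. The quantitative behaviour of $\canc$ is addressed separately in the competitive-ratio analysis of \cref{thm:cancel}.
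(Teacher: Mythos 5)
Your proof misses the central concern that the proposition is addressing, and the case you dismiss as trivial is actually the crux. You write that ``this step does not require the $\equi$ simulation on the relaxed jobs $J'$ to make any progress at all; the timer expires regardless of what the simulated $\equi$ does, so behaviour of the parallel pool cannot strand a task.'' But the timer only protects tasks that are \emph{still in the parallel pool} when $\sigma_i$ time elapses. The dangerous case is the opposite one: the $\equi$ simulation may declare the relaxed job $j_i'$ \emph{finished} before the timer expires, at which point $\canc$ stops allocating processors to $\tau_i^{\pll}$ and treats $\tau_i$ as complete. Since $j_i'$ has a different speedup curve than $\tau_i^{\pll}$ (the relaxed job makes unit progress on fewer than $\pi_i/\sigma_i$ processors, while the real parallel job requires total processor-time $\pi_i$), it is not immediate that $\tau_i^{\pll}$ has actually received the $\pi_i$ processor-time it needs by the moment $j_i'$ is declared done. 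Your phrase ``either $\tau_i^{\pll}$ has already completed there'' is exactly the claim that requires an argument.

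The paper's proof supplies that argument using the specific design of the relaxed jobs: $j_i'$ has work $2\sigma_i$, and its speedup curve makes progress at rate $1$ when underallocated (below $\pi_i/\sigma_i$ processors) and at rate $x\sigma_i/\pi_i$ otherwise. If $j_i'$ finishes in less than $\sigma_i$ time, then at most $\sigma_i$ of its $2\sigma_i$ progress could have come from the underallocated regime, forcing at least $\sigma_i$ progress at rate $x\sigma_i/\pi_i$; unwinding that inequality shows $\tau_i^{\pll}$ received at least $\pi_i$ processor-time in those same slots and therefore completed. This is the content you need and omitted. The serial-pool part of your argument is fine but was never in doubt (the paper disposes of it in one sentence), and your framing of the proposition as a ``purely qualitative termination claim'' is misleading: it is a consistency claim about the simulation, and its proof is a quantitative speedup-curve calculation.
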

\begin{proof}
   Tasks placed in the serial pool are clearly completed. 
   We proceed to argue that tasks never placed in the serial 
   pool are finished in the parallel
  pool. Consider some job $j_i'$ that finishes in the parallel
  pool, i.e. finishes in time less than $\sigma_i$; this
  corresponds to a task $\tau_i$ that is never placed in the
  serial pool because its corresponding relaxed job finishes in
  the parallel pool.
  We say that $j_i'$ executes in ``parallel mode'' when executing
  on at least $\pi_i/\sigma_i$ processors, and in ``serial mode''
  otherwise.

  For each $x \in [p]$, define $f_x$ to be the amount of time that $j_i'$ spends executing on (exactly) $x$ processors. Then the
  progress completed by job $j_i'$ is $$\sum_{x< \pi_i/\sigma_i}f_x +
  \sum_{x>\pi_i/\sigma_i}f_x \cdot x\cdot \sigma_i/\pi_i.$$ 
  Job $j_i'$ completes once it has made $2\sigma_i$ progress.
  At least half of the progress on $j_i'$ must have been made in
  parallel mode, because there is insufficient time to achieve
  $\sigma_i$ progress in serial mode.
  But this implies that
  $$\sum_{x>\pi_i/\sigma_i}f_x \cdot x \cdot \sigma_i/\pi_i\geq \sigma_i$$
  and thus that
  $$\sum_{x>\pi_i/\sigma_i}f_x  \cdot x \geq \pi_i.$$
  This implies that $\tau_i^{\pll}$, which also spends $f_x$ time on $x$ processors for each $x \in [p]$, successfully completes.

\end{proof}


The reason that we refer to relaxed jobs as ``relaxed'' is because there is a sense in which they are strictly easier to schedule than $\mathcal{T}$. We formalize this in the following lemma.

\begin{proposition}
    \label{prop:optTJlil}
$$\trt_{\opt}^{J'} \le \trt_\opt^{2\mathcal{T}} .$$
\end{proposition}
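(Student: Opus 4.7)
My plan is to prove the stronger statement that for any feasible schedule $\Pi$ of $2\mathcal{T}$, there is a feasible schedule $\Pi'$ of $J'$ in which each relaxed job $j_i'$ completes no later than the corresponding task $\tau_i$ did under $\Pi$. Applying this to $\Pi = \opt$'s optimal schedule for $2\mathcal{T}$ immediately yields $\trt^{J'}_{\opt} \le \trt^{J'}_{\Pi'} \le \trt^{2\mathcal{T}}_{\opt}$, since $J'$ and $2\mathcal{T}$ share arrival times.

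The construction of $\Pi'$ proceeds task by task, driven by $\Pi$'s implementation choice. If $\Pi$ chose $\tau_i^\ser$, then $\Pi'$ places a single processor on $j_i'$ at exactly the moments $\Pi$ worked on $\tau_i^\ser$: on one processor $j_i'$ runs at rate $1$ (the flat portion of its speedup curve), and since $\Pi$ spent a total of $2\sigma_i$ time on $\tau_i^\ser$, $j_i'$ collects its $2\sigma_i$ work in the same total time. If $\Pi$ chose $\tau_i^\pll$ and uses $x_t$ processors on it at time $t$, split the processing times into $A = \{t : x_t \ge \pi_i/\sigma_i\}$ and $B = \{t : 0 < x_t < \pi_i/\sigma_i\}$, and let $\Pi'$ allocate $x_t$ processors to $j_i'$ on $A$ and just one processor on $B$. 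For each task, $\Pi'$ never allocates more processors at any instant than $\Pi$ did, so $\Pi'$ stays feasible on $p$ processors.

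To verify timely completion of $j_i'$ in the parallel case, I will use the speedup curve directly: the progress collected by $F_i$ (the completion time of $\tau_i$ in $\Pi$) is $\int_A x_t(\sigma_i/\pi_i)\,dt + \int_B 1\,dt$. The main obstacle is the sub-threshold regime $B$, where the speedup curve of $j_i'$ has already flattened to rate $1$ and so cannot be exploited by giving it more processors. The resolution is the single-processor choice on $B$ together with the pointwise inequality $1 \ge x_t(\sigma_i/\pi_i)$, which holds throughout $B$ by definition; this lower-bounds the progress by $(\sigma_i/\pi_i)(W_A + W_B) = (\sigma_i/\pi_i) \cdot 2\pi_i = 2\sigma_i$, exactly the work of $j_i'$. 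Hence $j_i'$ finishes by $F_i$, completing the reduction.
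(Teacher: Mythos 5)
Your proof is correct and is essentially the approach the paper takes (the paper states it as a one-line reduction: run $j_i'$ inside the time slots that $\opt$'s schedule for $2\mathcal{T}$ devotes to $2\tau_i$); your version simply makes explicit the pointwise comparison $1 \ge x_t \sigma_i/\pi_i$ on the sub-threshold set $B$ that makes the speedup-curve accounting close. One small nit: on $B$ you allocate ``one processor,'' which only matches the feasibility claim if the reference schedule never gives a parallel job a positive but sub-unit (time-shared) allocation; allocating $\min(x_t,1)$ there avoids this edge case without changing the bound.
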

\begin{proof}
A schedule  for completing $2\mathcal{T}$ can be used to 
to perform $J'$ by running $j_i'$ in the time slot for $2\tau_i$. 
\end{proof}

We now bound the cost of $\canc$, thereby proving \cref{thm:cancel}.
\begin{proof}[Proof of \cref{thm:cancel}]
Let $J_{\ser}^1$ denote the serial jobs that end up in the serial
pool. Let $J_{\pll}^2$ denote the jobs in $J_{\ser}^1$ but modified to
be perfectly scalable. And let $J_{\ser}^2$ denote the jobs in
$J_{\ser}^1$ but with the arrival time of each job $j_i'$ arrival time delayed by $\sigma_i$ (i.e.,
delayed to be the time at which $j_i'$ is placed in the serial pool by $\canc$). 
$\canc$'s $\trt$ is bounded by:
\begin{align*}
\trt_{\canc}^{\mathcal{T}} \leq \trt_{\equi}^{J'}+\trt_{\equi}^{J_{\ser}^2}.
\end{align*}
By Theorem \ref{thm:jeff}, this is at most 
\begin{align*}
\trt_{\opt}^{3J'}+\trt_{\opt}^{3J_{\ser}^2}.
\end{align*}
By Lemma \ref{lem:silly-serious2}, 
$$\trt_{\opt}^{3J_{\ser}^2} \le O\left(\trt_{\opt}^{O(1) \cdot J_{\pll}^1} + \sum_{j \in J_{\ser}^2} \text{work}(j)\right).$$
Since $\trt_{\opt}^{O(1) \cdot J_{\pll}^1} \le \trt_{\opt}^{O(1) \cdot J'}$, it follows that
\begin{align*}
\trt_{\canc}^{\mathcal{T}} & \leq O\left(\trt_{\opt}^{O(1) \cdot J'}+ \sum_{j \in J_{\ser}^2} \text{work}(j)\right).
\end{align*}
In other words, defining $\mathcal{T}'$ to be the set of tasks in $\mathcal{T}$ that $\canc$ runs in serial mode, we have
$$\trt_{\canc}^{\mathcal{T}} \leq O\left(\trt_{\opt}^{O(1) \cdot J'}+ \sum_{\tau_i \in \mathcal{T}'} \sigma_i \right).$$
Notice, however, that by design, $\trt_{\canc}$ only runs a task $\tau_i$ in serial mode if, when we run $\equi$ on $J'$,
the job $j_i'$ incurs at least $\sigma_i$ response time. Thus 
$$\sum_{\tau_i \in \mathcal{T}'} \sigma_i \le \trt_\equi^{J'},$$
which by Theorem \ref{thm:jeff} implies that 
$$\sum_{\tau_i \in \mathcal{T}'} \sigma_i  \le O\left(\trt_{\opt}^{O(1) \cdot J'}\right).$$
Thus
\begin{align*}
\trt_{\canc}^{\mathcal{T}} & \leq O\left(\trt_{\opt}^{O(1) \cdot J'}\right),
\end{align*}
which by Proposition \ref{prop:optTJlil} completes the proof.
\end{proof}

\subsection{A Non-Cancelling Scheduler}
\label{sec:mainMRTdude}

Now we show how to convert $\canc$ from \cref{thm:cancel} to a non-cancelling scheduler. This is the most
technically difficult section of the paper.
\begin{theorem}
  \label{thm:nocancel} There is an online scheduler that, with
  $\bigO(1)$ speed augmentation and \textbf{without use of cancelling},
  is $\bigO(1)$ competitive for $\mrt$.
\end{theorem}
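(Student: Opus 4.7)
My plan is to construct a non-cancelling scheduler $\alg$ that simulates $\canc$ internally in real time and uses its decisions as a black-box oracle for the serial-vs-parallel choice. The key observation is that $\canc$'s decision about each task $\tau_i$ is finalized by real time $t_i+\sigma_i$ at the latest: either $\canc$'s simulation completes $\tau_i$ in the parallel pool at some time $t_i+c_i$ with $c_i \le \sigma_i$, or $\canc$ cancels $\tau_i$ and moves it to the serial pool exactly at $t_i+\sigma_i$. Accordingly, $\alg$ does not commit any real processors to $\tau_i$ until $\canc$'s decision is final, and only then launches $\tau_i$ with the matching implementation. Because $\alg$ never starts a task until its fate is known, no cancellation is ever required.

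The real execution uses $2p$ processors (an extra factor-of-$2$ speed augmentation on top of $\canc$'s), split into a serial pool running $\equi$ on the serial-committed jobs and a parallel pool running $\equi$ on the parallel-committed jobs. A task committed to serial enters $\alg$'s serial pool at time $t_i+\sigma_i$, exactly matching its entry into $\canc$'s serial pool; a task committed to parallel enters $\alg$'s parallel pool at time $t_i+c_i$. The response time of $\tau_i$ in $\alg$ is at most its delay ($\sigma_i$ in the serial case, $c_i$ in the parallel case) plus its execution time in the corresponding pool. I plan to bound each of these four contributions by $O(\trt_\opt^{O(1)\cdot \mathcal{T}})$ and combine with \cref{thm:cancel}.

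For the serial side, $\alg$'s serial pool is identical to $\canc$'s (same jobs, same arrival times), so its execution-time contribution is already captured in $\trt_\canc$, and the extra $\sum_i \sigma_i$ of delay over serial-committed tasks is bounded by $\trt_\equi^{J'} \le O(\trt_\opt^{O(1)\cdot \mathcal{T}})$ via the same telescoping inequality used in the proof of \cref{thm:cancel}. For the parallel side, $\alg$'s parallel pool runs $\equi$ on perfectly scalable jobs with shifted arrival times, so \cref{thm:jeff} directly yields an $O(1)$-competitive ratio against $\opt$ on the shifted sub-instance, and a separate shifting argument bounds that $\opt$ by $O(\trt_\opt^{O(1)\cdot \mathcal{T}})$ (using that the offline optimum could always choose the parallel implementation for these particular tasks). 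The parallel-side delay is $\sum c_i$ over parallel-committed tasks, which equals $\canc$'s response-time contribution from these tasks and is therefore already bounded by $\trt_\canc$.

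The main obstacle will be the parallel-pool shifting argument: specifically, showing that $\opt$ on the perfectly-scalable parallel sub-instance with arrival times shifted by $c_i \le \sigma_i$ is competitive with $\opt$ on the original TAP $\mathcal{T}$. The cleanest route is to let $\opt$ mimic its own $\mathcal{T}$-schedule restricted to the parallel-committed tasks, absorbing the shifts into a constant-factor slowdown. If this proves delicate, a backup plan is to use additional $O(1)$ speed augmentation so that $\alg$ can directly mimic $\canc$'s parallel pool on the relaxed-job analogues (restricted to the parallel-committed tasks, which by definition complete in the simulation without being cancelled), thereby reducing the parallel-side analysis word-for-word to that inside the proof of \cref{thm:cancel}.
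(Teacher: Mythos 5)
Your observation that $\canc$'s serial/parallel decision for $\tau_i$ is finalized by time $t_i+\sigma_i$ is correct and is a natural starting point, but the plan founders exactly at the step you flag as ``the main obstacle,'' and neither of the two fixes you propose closes it. The difficulty is that the delays $c_i$ are \emph{task-specific} and non-uniform, so ``shift $\opt$'s schedule and absorb the shift into a constant slowdown'' is not a valid move: two parallel-committed tasks whose execution windows $[t_i, t_i+c_i]$ and $[t_k, t_k+c_k]$ were disjoint in $\canc$'s parallel pool can collide after each is translated forward by its own $c_i$, so the resource-feasibility of $\canc$'s allocation does not transfer to the shifted instance. The backup plan of ``mimicking $\canc$'s parallel pool restricted to the parallel-committed tasks'' is not well-defined either, because $\canc$'s $\equi$ allocation at each moment depends on the \emph{entire} occupancy of the parallel pool, including tasks that will eventually be cancelled and hence are absent from $\alg$'s parallel pool; dropping them changes the per-job share, and shifting the survivors' arrivals changes it again. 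What you are running into is precisely the catch-up problem that the paper devotes the bulk of Section~\ref{sec:mainMRTdude} to: a task that is behind where $B$ (the single-per-type modification of $\canc$) would have it must be given a dedicated burst of processors, and that burst in turn steals processors from other tasks, which may then also fall behind---so the cost of catching up must be controlled by a non-trivial charging argument.

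The paper's solution is structurally different from yours: rather than uniformly delaying every parallel task until its fate is known, it starts parallel tasks \emph{on time}, copying $B$'s processor allocations, and only intervenes when a task falls behind $B$ (enters ``ballistic'' or ``semi-ballistic'' mode). A reserved pool of $2p$ extra processors is partitioned by parallelism class ($p/2^j$ processors to class $2^j$) so that each ballistic task can be caught up in $O(\sigma_i)$ time (Lemma~\ref{lem:bal}); the preliminary reduction in Lemma~\ref{lem:AB} ensures at most one task per type is in the parallel pool, which keeps the ballistic backlog from cascading; semi-ballistic tasks are handed to a serial $\equi$ pool and bounded via Lemma~\ref{lem:silly-serious2}; and the total serial mass of the ballistic and semi-ballistic tasks is bounded against $\trt_{\opt}$ by the token argument in Lemma~\ref{lem:BC}. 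None of this machinery appears in your sketch, and I do not see how to make the parallel side of your construction go through without reinventing something equivalent to it. If you want to pursue your approach, the concrete lemma you would need is a parallel-job analogue of Lemma~\ref{lem:silly-serious2}---a statement that delaying perfectly scalable jobs by task-specific amounts $c_i\le\sigma_i$ costs only $O(1)$ in competitive ratio---and that lemma is not in the paper and is not obviously true.
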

Up to a factor of 2 in speed augmentation, we can assume without loss of generality 
 that every $\sigma_i$ and $\pi_i$ is a power of two---to simplify our exposition, we shall make 
 this wlog assumption throughout the rest of the section.

We say that a task $\tau_k$ is of \defn{type $(2^j,2^i)$} if $\log \sigma_k = i$ and $\log \pi_k =i+j$. In other words, the job has parallelism $2^j$ and serial work $2^i$. 
We also partition the jobs into \defn{parallelism classes}, where the \defn{$2^j$ parallelism class} consists
 of tasks satisfying $\pi_k/\sigma_k = 2^j$. 

Our first lemma shows that we can modify the scheduler $\canc$ from \cref{thm:cancel} to 
obtain a ``just as good'' scheduler which only runs one task of each type in parallel at a time.
\begin{lemma}
   \label{lem:AB} 
   There exists an online scheduler $B$ that, \textbf{with cancelling} 
   and $\bigO(1)$ speed augmentation is $\bigO(1)$ competitive for $\mrt$.
   Furthermore, $B$ guarantees that at most one task 
   of each type is run via its parallel implementation at any time.
   Moreover, for any task $\tau_i$ that $B$ completes in parallel,
   $B$ is guaranteed to complete that task within time $\sigma_i$ of the task
   arriving.
\end{lemma}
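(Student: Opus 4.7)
The plan is to build $B$ by augmenting $\canc$ from \cref{thm:cancel} with a per-type bottleneck in the parallel pool. Specifically, I maintain a FIFO queue $Q_{(2^j,2^i)}$ for each type; when a task $\tau_k$ of type $(2^j,2^i)$ arrives, it joins $Q_{(2^j,2^i)}$. At each instant the head of each nonempty queue is declared \emph{active}, and $B$ schedules only the active tasks in its parallel pool, via the same relaxed-jobs $\equi$ simulation used by $\canc$. Each $\tau_k$ carries a timer that starts at its arrival; if $\tau_k$ has not completed in parallel by time $\sigma_k$ after arriving, $B$ cancels it and moves it to the serial pool, promoting the next queued task of that type to active. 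The serial pool is scheduled by $\equi$ exactly as in $\canc$.

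The two structural properties and validity are then immediate from the construction. At every moment at most one task of each type is active in the parallel pool; any task that $B$ completes in parallel must have done so before its timer fired, hence within $\sigma_k$ of arrival; and every task is eventually completed because any task whose timer fires is then scheduled in the serial pool, where $\equi$ finishes it.

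To establish the $O(1)$-competitive guarantee my plan is to mirror the analysis of \cref{thm:cancel}. I decompose $\trt_B^{\mathcal{T}}$ into the response time incurred in the parallel pool and the response time incurred in the serial pool. The serial-pool term is handled exactly as in $\canc$'s proof by applying \cref{thm:jeff} and \cref{lem:silly-serious2} to the set of cancelled tasks, yielding an $O(\trt_\opt^{O(1)\cdot J'})$ bound. For the parallel-pool term the point is that FIFO serialization within a type costs at most a constant factor compared with $\canc$'s $\equi$: if $k$ same-type tasks are simultaneously alive with per-task solo time $T_0$, then in $\canc$ the $\equi$ simulation slows each to rate $1/k$ for a total per-group response time of $k\cdot kT_0 = \Theta(k^2 T_0)$, whereas $B$ finishes them at times $T_0, 2T_0, \ldots, kT_0$ for total $\Theta(k^2 T_0)$ as well (in fact smaller by roughly a factor of two).

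The main obstacle I anticipate is the cross-type interaction: with fewer active parallel jobs in $B$ than in $\canc$, $\equi$ redistributes the $p$ parallel processors across a different number of jobs, so $B$'s relaxed-jobs simulation does not literally coincide with $\canc$'s. My plan is to absorb this change into one more constant of speed augmentation, arguing that the active relaxed jobs still satisfy the nondecreasing sublinear speedup hypothesis of \cref{thm:jeff}, so $\equi$ run on them is still $O(1)$-competitive against $\opt$ on the full relaxed-job set $J'$. Chaining these bounds with \cref{prop:optTJlil} then gives $\trt_B^{\mathcal{T}} \le O(\trt_\opt^{O(1)\cdot \mathcal{T}})$, completing the proof.
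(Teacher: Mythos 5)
Your FIFO-queue construction does give the two structural properties, but the competitive analysis has genuine gaps, and the route you've sketched differs materially from the paper's in a way that loses the key structural advantage.

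The paper's $B$ does not run its own $\equi$ simulation on a reduced set of active jobs. Instead, it \emph{simulates $\canc$ exactly} and then \emph{concentrates} processor allocation: at every time step, whatever total number $k$ of processors $\canc$ gives across \emph{all} tasks of type $(2^j,2^i)$ in its parallel pool, $B$ hands all $k$ of them to its single running task of that type. Cancellations in $B$ are driven by $\canc$'s cancellations (preferring to cancel a non-running task of the matching type, and inserting a \emph{fake} serial job if no real one exists), so that $B$'s serial pool receives jobs of exactly the same types at exactly the same times as $\canc$'s. The payoff is a clean domination invariant: at every instant the number of alive jobs in $B$ is at most that of $\canc$, giving $\trt_B \le \trt_\canc$ with no extra constants and no new appeal to \cref{thm:jeff} or \cref{lem:silly-serious2}.

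Your construction forfeits this. Three specific problems. First, the obstacle you yourself flag is real and does not dissolve into ``one more constant of speed augmentation'': when you run $\equi$ on only the active jobs, the active job of type $(2^j,2^i)$ gets a $1/(\text{\#active types})$ share, whereas $\canc$'s concentrated allocation would give it roughly an $m/(\text{\#alive relaxed jobs})$ share when $m$ same-type tasks are alive. These ratios are not within a constant factor in general, so tasks get cancelled at different times in your $B$ than in $\canc$, and the cancelled set is not controlled by $\canc$'s analysis. Second, your active jobs have \emph{delayed} arrival times (a task becomes active only once it reaches the head of its queue), so \cref{thm:jeff} applied to the active set compares $\equi$ to $\opt$ on a TAP with shifted arrivals, not to $\opt$ on $J'$; bridging that requires another argument you haven't supplied. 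Third, the step ``$\sum_{\text{cancelled}} \sigma_i$ is bounded'' is circular in your setup: in $\canc$'s proof this sum is charged to $\trt_\equi^{J'}$, a quantity defined by the \emph{fixed auxiliary} $\equi$-on-$J'$ simulation, whereas in your $B$ the cancellations are defined by $B$'s own schedule on a changing active set, which is the very thing you are trying to bound. To salvage the FIFO idea you would essentially need to reprove the whole of \cref{thm:cancel} from scratch rather than inherit it; the paper's concentration-plus-fake-task device is precisely what lets it avoid that.
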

\begin{proof}
  Recall that $\canc$ runs $\equi$ on tasks
  in the parallel pool until their serial time has elapsed. If a
  task $\tau_k$ is in the parallel pool for longer than $\sigma_k$, $\canc$
  cancels $\tau_k$ and restarts it in the serial pool via $\tau_k^\ser$.

  Our task is to construct $B$ so that 
  \begin{equation}
      \trt_{B}^{\mathcal{T}} \leq \trt_\canc^{\mathcal{T}}.\label{eq:abchain}
  \end{equation}

  $B$ runs $\canc$ on the parallel pool,
  except it concentrates all of $\canc$'s work on each task type
  into a single task of that type. That is, if $\canc$ allocates
  $k$ processors to tasks of type $(2^j,2^i)$ in the parallel pool on a certain time
  step, then $B$ will allocate $k$ processors to the current
  running task of type $(2^j,2^i)$ (if $B$ has a task of this type).
  The scheduler $B$ also copies $\canc$'s cancellation behavior as
  follows: when $\canc$ cancels a
  task of type $(2^j,2^i)$, $B$ attempts to cancel a task of the
  same type that is not currently running, and then restart that
  task in the serial pool; if there is only one task of the type $(2^j, 2^i)$
  running in $B$, then $B$
  cancels the running task and restarts it in the serial pool; and
  finally, if there are no tasks of this type in
  $B$, then $B$ does nothing and places a \emph{fake} task of
    type $(2^j, 2^i)$ in the serial pool. This behavior ensures that the
    serial pool for $B$ receives tasks of exactly the same types (and at exactly the same times) as the serial pool for $\canc$. 
  
  The point of this construction is that, at any given moment the
  number of tasks of each type that $B$ has either completed or
  evicted from the parallel pool is trivially guaranteed to be at least as large as that of
  $\canc$. 
  The serial pools of $\canc$ and $B$ are identical (with the
  fake tasks included). Hence the number of tasks alive for $B$
  at any given moment is at most as large as in $\canc$. So $B$
  achieves $\mrt$ at least as good as $\canc$ on $\mathcal{T}$.
\end{proof}

  Now we present a \emph{non-cancelling} scheduler $C$ that, with $\bigO(1)$
  speed augmentation, is $\bigO(1)$ competitive with $B$ for
  $\trt$.

  Up to a factor of 4 speed augmentation, we can assume that $C$ has $4p$ processors. We will make this assumption (without loss of generality) throughout the rest of the proof and keep the speed augmentation implicit. Thus, for the rest of the proof, we assume both that for every task $\tau \in\mathcal{T}$, $\sigma_i$ and $\pi_i$ are powers of two, and that $C$ is given $4p$ processors. 

   To clarify our exposition, when discussing $B$, we will make a distinction \defn{parallel work} (i.e., work on parallel jobs)
   and \defn{serial work} (i.e., work on serial jobs) performed by $B$. Note that the parallel work on a job in a time interval $[a, b]$ is defined as the integral over $[a, b]$ of the number of processors
   allocated to the job at each point in time.


   As we run the scheduler $C$, we will also simulate $B$ running $3\mathcal{T}$ with $p$ processors. 
   The scheduler $C$ will attempt to use $p$ of its processors to copy $B$'s behavior. Of course, as $B$ is 
   a cancelling scheduler, $C$ will not be able to precisely copy $B$. The challenge will be to 
   somehow achieve an $\mrt$ competitive with $B$'s $\mrt$, but without cancelling.

   Call a task in $C$ \defn{vested} if it has actually started
  executing in parallel in $C$. $C$ can copy $B$'s behavior except
  for when $B$ cancels a vested task (to be restarted in serial). 
  Whenever $B$ cancels a vested task $\tau$, the task $\tau$ enters 
  \defn{ballistic mode}. Whenever a task $\tau$ enters ballistic mode, we say that its 
  parallelism class enters \defn{emergency mode} (although the class may already
  be in emergency mode due to other tasks in the class already being in ballistic mode). 
  When a parallelism class is in emergency mode, all of the parallel work that $B$ performs
  is allocated by $C$ to the \emph{smallest} ballistic task in the parallelism class (we will see later that there are no ties here, but as we have not proven that yet, assume ties are broken arbitrarily).
  
Note that non-ballistic tasks lose work in $C$ compared to
$B$ when their parallelism class is in emergency mode (i.e., when a non-ballistic task $\tau$'s parallelism class is in emergency mode, it is possible that $C$ does work on $\tau$ while $B$ does not). 
If a non-ballistic task $\tau_i$
loses $q$ total parallel work to some ballistic task $\tau_k$, then we say that $\tau_k$ \defn{stole}
$q$ work from $\tau_i$.
We emphasize that this stolen work is not queued up to be done later by $\tau_i$, it is just lost. 
Thus it is possible for a task $\tau$ to finish running in parallel in $B$ \emph{without finishing in $C$}. 
If this happens, and $\tau$ is already vested in $C$, then the task $\tau$ also enters ballistic mode; 
otherwise, if $\tau$ is not yet vested in $C$, then the task $\tau$ enters what we call \defn{semi-ballistic mode}. Thus, a task $\tau$ enters ballistic mode if it is already vested and is then either cancelled
\emph{or completed} by $B$; and a task $\tau$ enters semi-ballistic mode if $B$ completes it in parallel, but if, at that point in time, $C$ has not even vested it.

The tasks in semi-ballistic mode are executed as \emph{serial jobs} on $p$ processors using $\equi$. Finally, the remaining $2p$ processors are allocated by $C$ as follows: $C$ allocates $p / 2^i$ processors to each parallelism class $2^i$. Whenever the parallelism class is in 
emergency mode, those processors are allocated as extra processors to the smallest ballistic task in the class.
This completes the description of $C$.

Now let us turn to the analysis of $C$. Let $\mathcal{T}_0$ denote the set of tasks that enter ballistic mode and $\mathcal{T}_1$ denote the set of tasks that enter semi-ballistic mode.

\begin{lemma}
    Each task $\tau_i \in \mathcal{T}_0$ spends at most $2\sigma_i$ time in ballistic mode.
    \label{lem:bal}
\end{lemma}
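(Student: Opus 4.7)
The plan is to use the structure set up by Lemma \ref{lem:AB}. Let $t^*$ be the time at which $\tau_i$ enters ballistic mode, and let $2^j$ denote $\tau_i$'s parallelism class. First I would observe that, because $B$ either cancels or completes $\tau_i$ within $\sigma_i$ time of $\tau_i$'s arrival (Lemma \ref{lem:AB}), we have $t^* \le t_i + \sigma_i$; and by assembling the parallel work that $C$ has already performed on $\tau_i$ while copying $B$, the remaining parallel work of $\tau_i$ in $C$ at time $t^*$ is at most $\pi_i = 2^j \sigma_i$.

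Next I would analyze the rate at which $\tau_i$ accumulates parallel work while ballistic. Whenever $\tau_i$ is the smallest ballistic task in class $2^j$, it receives two streams: the $p/2^j$ processors dedicated to class $2^j$ from $C$'s emergency pool, and all parallel work that $B$ directs at class $2^j$ tasks (rerouted by $C$ to the smallest ballistic task). The crucial structural ingredient is that, by Lemma \ref{lem:AB}, $B$ runs at most one task of each type $(2^j, 2^l)$ in parallel at a time, and such a task leaves $B$'s parallel pool within $2^l$ time; hence type-$(2^j,2^l)$ tasks enter ballistic mode at rate at most $1/2^l$.

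The proof would then proceed by induction on $\sigma_i$. Applying the inductive hypothesis, every smaller ballistic task $\tau_k$ in class $2^j$ (with $\sigma_k = 2^{i-l}$ for some $l \ge 1$) remains ballistic for at most $2\sigma_k$ time. Since at any instant at most one ballistic task in class $2^j$ can be the ``smallest'' one, the total portion of $[t^*, t^* + 2\sigma_i]$ during which $\tau_i$ is displaced by a smaller ballistic task in its class can be charged level-by-level: at level $l$, the rate-bound gives at most $O(\sigma_i / 2^{i-l})$ intruding tasks, each contributing at most $2 \cdot 2^{i-l}$ time, but the non-overlapping ``smallest'' constraint collapses this so that the total interference is at most $\sigma_i$. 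During the remaining $\ge \sigma_i$ time, $\tau_i$ is the smallest ballistic task and the combined emergency rate plus the $B$-mirrored rate clears the remaining $\le \pi_i$ work in at most $\sigma_i$ time, completing the induction.

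The main obstacle is the interference bound: smaller ballistic tasks can themselves be interrupted by yet smaller ones, so a naive sum over types diverges. The fix is to exploit the fact that only the single smallest ballistic task at each moment receives resources, which turns the seemingly nested charging into a flat one, and to combine this with the type-uniqueness rate-limit from Lemma \ref{lem:AB} so that the geometric sum over levels telescopes. A secondary subtlety is arguing that the rate $p/2^j$ (possibly tiny for high-parallelism classes) is nonetheless sufficient once combined with the work $B$ itself performs on class $2^j$ and routed to $\tau_i$, since $B$'s allocation on the single live representative of each type effectively covers the remaining $\pi_i$ work over the available $\sigma_i$-length window.
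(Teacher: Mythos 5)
Your proposal correctly identifies several ingredients the proof needs (the type-uniqueness guaranteed by Lemma~\ref{lem:AB}, the emergency-pool processors, the distinction between ``executing'' time and ``waiting'' time), but the central step---the interference bound---is not actually proven; it is only asserted. You set up an induction on $\sigma_i$, charge each smaller ballistic task $\tau_k$ its full $2\sigma_k$ ballistic duration, notice that a naive sum over levels then diverges, and then assert that the ``non-overlapping smallest constraint'' makes it collapse and that the geometric sum ``telescopes.'' This is exactly the point where the argument has to deliver, and it doesn't. Charging $\tau_i$ for the whole ballistic lifetime of every smaller $\tau_k$ necessarily double-counts: when $\tau_k$ is itself waiting on an even smaller task $\tau_{k'}$, that time is also being charged to $\tau_{k'}$, and simply observing that only one ballistic task executes at a time does not, on its own, untangle the nesting. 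Your own inductive hypothesis feeds the doubled $2\sigma_k$ back into the sum, which is what forces the divergence you're trying to wave away.

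The paper's proof is non-inductive and rests on an observation you did not isolate: \emph{once a parallelism class enters emergency mode, $C$ routes all parallel work in that class to the single smallest ballistic task, so no new task in the class can become vested while emergency mode persists.} Consequently, any task in class $2^j$ that is ever ballistic concurrently with $\tau_i$ must already have been running in $B$ (hence vested in $C$) at the instant $t'$ the class last entered emergency mode. Since $B$ runs at most one task of each type in parallel at a time, there is at most one such task of each serial size $\sigma' \le \sigma_i$---a frozen snapshot, not a rate. Each of these spends at most $\sigma'$ time \emph{executing} as the smallest ballistic task (not $2\sigma'$ total ballistic time), so the waiting is bounded by $\sum_{r\ge 1}\sigma_i/2^r \le \sigma_i$, and adding $\tau_i$'s own $\le\sigma_i$ of execution gives $2\sigma_i$. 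Your rate-based count (``type-$(2^j,2^l)$ tasks enter ballistic mode at rate at most $1/2^l$'') is both weaker and beside the point: over a window of length $2\sigma_i$ it permits $\Omega(2^l)$ intruders per level, which does not yield the flat, one-per-type bound the argument requires. Until you replace the rate bound and the induction with the ``vesting is frozen during emergency mode'' snapshot argument and bound waiting by execution time rather than ballistic lifetime, the proof is not complete.
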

\begin{proof}
Whenever $\tau_i$ in parallelism class $2^j$ is actually \emph{executing} in ballistic mode (i.e., it is the smallest ballistic task from its parallelism class),
it is given at least $p / 2^j$ processors. Thus it spends at most $\pi_i / (p / 2^j) = \sigma_i$ time executing in ballistic mode. Additionally, $\tau_i$ may spend time in ballistic mode waiting on other (smaller) ballistic tasks to complete. 

Once a parallelism class enters emergency mode, it stops vesting new tasks. Let $t$ be the time at which $\tau_i$ entered ballistic mode, and let $t'$ be the most recent time $t' \le t$ at which the $2^j$ parallelism class entered emergency mode.
Any tasks in parallelism class $2^j$ that are ballistic at the same time as $\tau_i$ must have been running in $B$ at time $t'$. There can be at most one such task (including $\tau_i$) of each parallel power-of-two serial size $\sigma' \le \sigma_i$. Since each task in ballistic mode of some size $\sigma'$ spends at most $\sigma'$ time actually executing in ballistic mode, the total time that $\tau_i$ spends waiting on smaller ballistic tasks to finish is at most
$$\sum_{r = 1}^\infty \sigma_i / 2^r \le \sigma_i.$$
This complete the proof.
\end{proof}

\begin{lemma}
    The total response time incurred by the tasks $\mathcal{T}_1$ while in semi-ballistic mode in $C$ is at most
    $$O\left(\trt^{O(1) \cdot \mathcal{T}}_{\opt} + \sum_{\tau_i \in \mathcal{T}_0} \sigma_i\right).$$
    \label{lem:semibal}
\end{lemma}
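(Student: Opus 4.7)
My plan is a chain of reductions culminating in a charging argument from semi-ballistic to ballistic tasks. Since $C$ schedules the semi-ballistic tasks as serial jobs of work $\sigma_i$ via $\equi$ on $p$ of its (augmented) $4p$ processors, and since Lemma~\ref{lem:AB} implies the arrival time $t_i^{\mathrm{sb}}$ into semi-ballistic mode satisfies $t_i^{\mathrm{sb}} \le t_i + \sigma_i$, Theorem~\ref{thm:jeff} gives $\trt_{C,\mathrm{sb}}^{\mathcal{T}_1} \le O(\trt_\opt^{O(1)\cdot K})$, where $K$ denotes this serial-job instance. Pushing arrivals earlier can only increase $\opt$'s TRT on a serial-job instance (by a standard exchange argument), so we may further bound this by $\trt_\opt^{O(1)\cdot K'}$, where $K'$ has the same works as $K$ but arrivals moved back to $t_i$.

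Next, applying Lemma~\ref{lem:silly-serious} to $K'$ gives
\[
\trt_\opt^{O(1)\cdot K'} \le O\!\left(\trt_\opt^{O(1)\cdot J_\pll} + \sum_{\tau_i\in \mathcal{T}_1} \sigma_i\right),
\]
where $J_\pll$ is the perfectly scalable version of $K'$. Because any $\opt$-schedule for $O(1)\cdot \mathcal{T}$ already invests at least $\sigma_i$ work per task (either as $\sigma_i$ serial or $\pi_i \ge \sigma_i$ parallel), such a schedule can be reinterpreted as a valid schedule for $J_\pll$ with a constant-factor blowup, giving $\trt_\opt^{O(1)\cdot J_\pll} \le O(\trt_\opt^{O(1)\cdot \mathcal{T}})$. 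Combining yields the intermediate bound
\[
\trt_{C,\mathrm{sb}}^{\mathcal{T}_1} \le O\!\left(\trt_\opt^{O(1)\cdot \mathcal{T}} + \sum_{\tau_i\in \mathcal{T}_1}\sigma_i\right).
\]

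The remaining---and main---step is replacing $\sum_{\tau_i\in \mathcal{T}_1}\sigma_i$ by $\sum_{\tau_k\in \mathcal{T}_0}\sigma_k$. A task $\tau_i$ enters semi-ballistic mode only when its parallelism class is in emergency mode throughout $\tau_i$'s stay in $B$'s parallel pool, and by Lemma~\ref{lem:bal} the total time each parallelism class $2^j$ spends in emergency is $O\bigl(\sum_{\tau_k\in \mathcal{T}_0\cap \text{class }2^j}\sigma_k\bigr)$. The plan is to charge each semi-ballistic $\sigma_i$ of class $2^j$ against the ballistic $\sigma_k$'s of the same class, using the fact that $B$ runs at most one parallel task of each type at a time (Lemma~\ref{lem:AB}) to control the rate at which new semi-ballistic tasks of each type can appear during emergency. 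The main obstacle is avoiding a factor-of-$p$ loss on low-parallelism classes: the naive ``$B$ does at most $p$ work per unit time'' rate bound costs $p/2^j$, which blows up for small $j$. I expect to recover this by summing separately over the $O(\log p)$ types inside each parallelism class and exploiting the geometric decay of $C$'s per-class processor allotment $p/2^j$ (which allows $C$ to clear ballistic tasks of low-parallelism classes quickly), or, if any slack remains, by absorbing it into $\trt_\opt^{O(1)\cdot \mathcal{T}}$.
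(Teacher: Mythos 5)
Your steps up to the application of \cref{lem:silly-serious} are close in spirit to the paper's proof, but the paper invokes \cref{lem:silly-serious2} rather than \cref{lem:silly-serious}, and this is not an incidental choice: it exists precisely to avoid the ``push arrivals earlier for free'' step you insert, which is false as stated. It is \emph{not} true that moving serial jobs' arrival times earlier by job-dependent amounts can only increase $\opt$'s total response time. Take $p = 1$ and let $K$ be serial jobs with works $10$ and $1$ arriving at times $10$ and $11$; then $\trt^{K}_\opt = 12$ (run the big job on $[10,11]$, preempt for the small job on $[11,12]$, resume the big job on $[12,21]$). Moving each arrival back by its own work gives $K'$ with the same works arriving at $0$ and $10$, and $\trt^{K'}_\opt = 11$. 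So pushing arrivals earlier \emph{decreased} $\opt$'s cost---the wrong direction for the bound you need---and the shift amounts here are exactly $w_i$, i.e.\ exactly the structured shifts that arise in this lemma. The resolution, which is the whole point of \cref{lem:silly-serious2}, is that shifting a serial job's arrival back by $w_i$ is only dominated if you simultaneously \emph{inflate its work} by roughly $w_i$: ``when scheduling a serial job with $2w_i$ work, we would rather the job have $w_i$ less work than have the same job show up at time $w_i$ earlier.'' Concretely the paper defines $x_i$ as a serial job of work $3\sigma_i$ arriving when $\tau_i$ goes semi-ballistic and $y_i$ as a scalable job of work $18\sigma_i$ arriving at $t_i$, bounds the semi-ballistic $\trt$ by $\trt^{X}_\equi \le \trt^{3X}_\opt$ via \cref{thm:jeff}, and then applies \cref{lem:silly-serious2} directly to get $O\bigl(\trt^{18Y}_\opt + \sum_{\tau_i\in\mathcal{T}_1}\sigma_i\bigr) \le O\bigl(\trt^{O(1)\cdot\mathcal{T}}_\opt + \sum_{\tau_i\in\mathcal{T}_1}\sigma_i\bigr)$. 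If you replace your step~3 plus \cref{lem:silly-serious} with a single use of \cref{lem:silly-serious2} you recover exactly this.

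Your step~6 is an overcomplication driven by taking the summation index in the statement too literally. The argument above naturally produces $\sum_{\tau_i\in\mathcal{T}_1}\sigma_i$ (this is what $\sum_i w_i$ in \cref{lem:silly-serious2} is, indexed over the semi-ballistic tasks), and that is all that is used downstream: inequality~\eqref{eq:CisballisticplusB} combines this with the $\sum_{\tau_i\in\mathcal{T}_0}\sigma_i$ contribution of \cref{lem:bal} to get $\sum_{\tau_i\in\mathcal{T}_0\cup\mathcal{T}_1}\sigma_i$. There is no need, inside this lemma, to convert a $\mathcal{T}_1$ sum into a $\mathcal{T}_0$ sum. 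A charging argument of roughly the flavour you sketch does appear in the paper, but as \cref{lem:BC}, where $\sum_{\tau_i\in\mathcal{T}_0\cup\mathcal{T}_1}\sigma_i$ is bounded by $O(\trt^{3\mathcal{T}}_\opt)$ via a token scheme that pays for \emph{stolen parallel processor-time} rather than reasoning about emergency-mode durations per class; that framing is what avoids the $p/2^j$ blow-up you flag. In any case, as written your step~6 is a plan rather than a proof, and, thankfully, it is not needed here.
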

\begin{proof}
    For each job $\tau_i \in \mathcal{T}_1$,  define $x_i$ to be a serial job
    of size $3\sigma_i$ whose arrival time is the time at which $\tau_i$ goes semi-ballistic in $C$;
    and define $y_i$ to be a perfectly scalable job
    of size $18\sigma_i$ whose arrival time is simply $t_i$.
    Let $X = \{x_i \mid \tau_i \in \mathcal{T}_1\}$ and let $Y = \{y_i \mid \tau_i \in \mathcal{T}_1\}$.
    The total response time incurred by the tasks $\mathcal{T}_1$ while in semi-ballistic mode in $C$ is at most
    $\trt^{X}_{\equi}$. By Theorem \ref{thm:jeff}, this is at most
    $\trt^{3X}_{\opt}$. By Lemma \ref{lem:silly-serious2}, this is at most
    $$O\left(\trt^{18 Y}_{\opt} + \sum_{\tau_i \in \mathcal{T}_0} \sigma_i\right).$$
    Since $\trt^{18 Y}_{\opt} \le \trt^{O(1) \cdot \mathcal{T}}_{\opt}$, the lemma follows. 



    
\end{proof}

The only way that a task can be present in $C$ but not in $B$ is if the task is either in ballistic or semi-ballistic mode. It follows by Lemmas \ref{lem:AB}, \ref{lem:bal}, and \ref{lem:semibal} that
\begin{equation}
\label{eq:CisballisticplusB}
\trt_C^{\mathcal{T}}\le O\left(\trt_{\opt}^{O(1) \cdot \mathcal{T}} + \sum_{\tau_i \in \mathcal{T}_0 \cup \mathcal{T}_1} \sigma_i\right).
\end{equation}

Thus, to complete the analysis of $C$, it suffices to bound 
$$\sum_{\tau_i \in \mathcal{T}_0 \cup \mathcal{T}_1} \sigma_i.$$
This is achieved through a charging argument in the following lemma.



\begin{lemma}
\label{lem:BC}
$$\sum_{\tau_i \in \mathcal{T}_0 \cup \mathcal{T}_1} \sigma_i \le \bigO(\trt_{\opt}^{3\mathcal{T}}).$$
\end{lemma}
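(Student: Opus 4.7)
The plan is to bound $\sum_{\tau_i \in \mathcal{T}_0 \cup \mathcal{T}_1} \sigma_i$ in terms of $\trt_B^{3\mathcal{T}}$, which by Lemma~\ref{lem:AB} is $O(\trt_\opt^{3\mathcal{T}})$. I would partition the tasks in $\mathcal{T}_0 \cup \mathcal{T}_1$ according to how $B$ disposed of each, and bound each part separately.

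First I would handle tasks cancelled by $B$. If $B$ cancels $\tau_i$, then (by the cancellation rule $B$ inherits from $\canc$) the task sat in $B$'s parallel pool for exactly $3\sigma_i$ time before migrating to the serial pool, so $\tau_i$ contributes at least $3\sigma_i$ to $\trt_B^{3\mathcal{T}}$. Summing gives $\sum_{\tau_i\,\mathrm{cancelled}} \sigma_i \le \trt_B^{3\mathcal{T}}/3 = O(\trt_\opt^{3\mathcal{T}})$.

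For the remaining tasks, i.e., those that $B$ completes in parallel yet still appear in $\mathcal{T}_0 \cup \mathcal{T}_1$ because $C$ fell behind, I would use a charging argument. The structural observation is that for any such $\tau_i$ of parallelism class $2^j$, $C$ must have been in emergency mode for class $2^j$ during (a significant part of) $\tau_i$'s lifetime in $B$; otherwise $C$ would have mirrored $B$'s allocation and either completed $\tau_i$ or at least vested it. Each such emergency-mode interval is sustained by some ballistic $\tau_k \in \mathcal{T}_0$ in class $2^j$, so I would charge $\sigma_i$ to the ballistic task $\tau_k$ whose active ``smallest-ballistic'' window covers the relevant portion of $\tau_i$'s lifetime in $B$. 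The key technical step is then to show each $\tau_k$ absorbs only $O(\sigma_k)$ total charge, using (a) Lemma~\ref{lem:bal} to cap $\tau_k$'s ballistic window by $O(\sigma_k)$, (b) Lemma~\ref{lem:AB}'s one-task-per-type discipline to limit how many class-$2^j$ completions of each serial size $2^i$ fit in that window, and (c) the geometric decay across the $O(\log p)$ serial sizes to make the total $\sigma$-mass telescope into $O(\sigma_k)$.

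Combining the two parts, the contribution from non-cancelled tasks is $O\bigl(\sum_{\tau_k \in \mathcal{T}_0} \sigma_k\bigr)$. Since every ballistic $\tau_k \in \mathcal{T}_0$ is either cancelled (handled directly) or is itself a completed-in-parallel task whose charge traces back through a chain terminating at a cancelled ``root'' ballistic task (the very first ballistic task in each class must be cancellation-induced, since without a prior ballistic task there is no emergency mode), a short amortization gives $\sum_{\tau_k \in \mathcal{T}_0} \sigma_k = O(\trt_\opt^{3\mathcal{T}})$. The main obstacle is the $O(\sigma_k)$ charging bound itself: making the geometric accounting tight enough that each ballistic $\tau_k$ absorbs only $O(\sigma_k)$ of $\sigma$-mass uniformly across all parallelism classes of the tasks charged to it, without picking up a $\log p$ slack. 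This is exactly where the one-task-per-type guarantee of Lemma~\ref{lem:AB} is indispensable---without it, arbitrarily many same-type completions could accumulate inside a single ballistic window and break the telescoping.
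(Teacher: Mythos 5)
Your high-level decomposition — separating the tasks that went ballistic because $B$ cancelled them (your ``cancelled'' tasks, the paper's ``easy'' tasks) from those that only fell behind in $C$ due to stolen work (the paper's ``hard'' tasks) — matches the paper's, and your step~1 is correct. However, the charging argument in your steps~2--3 has a genuine gap, and the paper takes a different route to avoid exactly that issue.

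The problem is that charging the \emph{full} $\sigma_i$ of a hard task $\tau_i$ to a single ballistic ``thief'' $\tau_k$ is not justified. A hard task's stolen work can be distributed across several ballistic tasks whose smallest-ballistic windows each overlap only a slice of $\tau_i$'s lifetime, so there is no canonical single $\tau_k$ to charge; and conversely a thief $\tau_k$ can steal a tiny fraction of $\tau_i$'s needed parallel work (with the remainder stolen later by others), yet under your rule would still absorb the full $\sigma_i$. The one-task-per-type discipline of Lemma~\ref{lem:AB} only limits how many same-type tasks $B$ runs \emph{concurrently} in parallel; it does not limit how many complete sequentially inside $\tau_k$'s $O(\sigma_k)$-length window, so the claimed geometric telescoping into $O(\sigma_k)$ total charge does not follow. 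Your step~4's ``trace to a cancelled root'' is also not a chain that obviously converges: hard tasks both emit and absorb charge, and you have not established any decay along the charge-transfer graph.

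The paper sidesteps all of this with a token-conservation argument. Each easy task is given $2\sigma_i$ tokens upfront; whenever a ballistic $\tau_k$ in class $2^j$ steals $q$ units of parallel processing time from a task $\tau_i$, $\tau_k$ pays $\tau_i$ exactly $q/2^j$ tokens (\emph{proportional} to the amount stolen, not a flat $\sigma_i$). Each task pays $\sigma_i$ to the scheduler when it goes (semi-)ballistic, and at most another $\pi_k/2^j=\sigma_k$ in theft payments if it becomes a ballistic thief. The crucial quantitative fact — which you never invoke — is that $C$ simulates $B$ on $3\mathcal{T}$, so a hard task has $3\pi_i$ parallel work done on it in $B$ while $C$ performed less than $\pi_i$; hence at least $2\pi_i$ was stolen, and the task receives at least $2\pi_i/2^j = 2\sigma_i$ tokens, exactly covering its outflow. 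Since only easy tasks receive upfront tokens and the amount is $O(\trt_\opt^{O(1)\cdot\mathcal{T}})$, and since every task's token balance is nonnegative, the total paid to the scheduler, which is $\sum_{\tau_i\in\mathcal{T}_0\cup\mathcal{T}_1}\sigma_i$, is bounded as claimed. To repair your proof you would essentially need to replace flat per-task charges with per-unit-stolen charges, at which point you recover the paper's token scheme.
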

\begin{proof}

If a task $\tau$ enters ballistic or semi-ballistic mode because $B$ placed $\tau$ into serial mode (i.e., $\tau$ was either
canceled by $B$ or was never even run in parallel mode by $B$), then call $\tau$ \defn{easy}.
By the analysis of $\canc$ the sum of the serial lengths of the
easy tasks is $\bigO(\trt_{\opt}^{O(1) \cdot \mathcal{T}})$. 

Call the other tasks that enter
ballistic or semi-ballistic mode \defn{hard}. The hard tasks are the ones that went ballistic 
or semi-ballistic only because
of other ballistic jobs stealing their parallel processing times---this causes the task to
complete in the parallel pool for $B$ without completing for $C$.

Each easy task $\tau_i$ is paid $2\sigma_i$ tokens upfront. Whenever any task
$\tau_i$ in $C$ has its parallel work (that $B$ wishes to perform on it) stolen
by a ballistic task $\tau_k$, the ballistic task $\tau_k$ pays tokens to the task $\tau_i$ proportionally to the work that is stolen: if both tasks are in the $2^j$ parallelism class, and task $\tau_k$ stole $q$ parallel processing time from task $\tau_i$ (here we are defining parallel processing time to be the integral over time of the number of processors that $\tau_k$ stole from $\tau_i$), then $\tau_k$ pays $\tau_i$ a total of $q / 2^j$ tokens. Finally, whenever a task $\tau_i$ enters either ballistic or semi-ballistic mode, the task pays
$\sigma_i$ tokens to the scheduling algorithm.

Since the sum of the serial lengths of the
easy tasks is $\bigO(\trt_{\opt}^{O(1) \cdot \mathcal{T}})$, the number of tokens paid to easy tasks upfront
is $\bigO(\trt_{\opt}^{O(1) \cdot \mathcal{T}})$. On the other hand, the number of tokens paid by 
ballistic and semi-ballistic tasks to the scheduler is 
$$\sum_{\tau_i \in \mathcal{T}_0 \cup \mathcal{T}_1} \sigma_i.$$
Thus, to complete the proof, it suffices to show that no task has a negative number of tokens when
it completes.

For each task $\tau_i$ that goes ballistic or semi-ballistic in some parallelism class $2^j$, the total number of tokens that it ever \emph{spends} is at most $\sigma_i$ (paid to the scheduler) plus $\pi_i / 2^j = \sigma_i$ (paid to other tasks that $\tau_i$ stole work from while being ballistic). Thus it suffices to show that
every task $\tau_i$ that goes ballistic or semi-ballistic \emph{earns} at least $2\sigma_i$ tokens during
its lifetime. 

If the task $\tau_i$ is easy, then it trivially receives $2\sigma_i$ tokens upfront. Otherwise, 
if a non-easy task $\tau_i$ in some parallelism class $2^j$ goes ballistic or semi-ballistic, 
then it must have had at least $2\pi_i$ parallel work stolen from it by ballistic tasks in its parallelism class
(because $B$ completed $3\pi_i$ parallel work on the task, but $C$ completed less than $\pi_i$). This
means that the task was paid at least $2\pi_i / 2^j = 2\sigma_i$ tokens, which completes the proof.

\end{proof}

Combining Lemma \ref{lem:BC} with \eqref{eq:CisballisticplusB}, we have completed the proof of Theorem \ref{thm:nocancel}.

\section{Generalization: TAPs with Dependencies}\label{sec:DTAP}
We now consider a generalization of the serial-parallel decision problem in which tasks can have dependencies---a given task $\tau_i$ will not arrive until all of the other tasks on which it depends are complete. For this section, we focus exclusively on optimizing awake time---note that, if the tasks correspond to components of a parallel program, the awake time corresponds to the completion time of the parallel program.

A \defn{DTAP} $\mathcal{D}$ (TAP with dependencies) is a set of tasks $\tau_i$ specified by $\sigma_i,\pi_i,t_i$ along with 
an associated set $D_i \subset [n]$
(potentially empty) of tasks that must be completed before task
$\tau_i$ can be started. 
That is, task $\tau_i$ becomes available only when the time $t$ satisfies $t>t_i$ and furthermore all tasks $\tau_j \in D_i$ have already been completed.
Of course, the dependency structure must form a DAG, or else it is impossible to run all tasks. We are interested in an online scheduler, which in this case means that the scheduler does not know anything about task $\tau_i$ until the task becomes available to run.

The following propositions establish a tight bound of $\Theta(\sqrt{p})$ on the optimal competitive ratio achievable by an online scheduler. The lower bound holds even for the case where the DTAP dependencies are required to form a tree.
\begin{proposition}
  For any (potentially randomized) online scheduler $\alg$, there exists a DTAP where $\alg$'s awake-time
  competitive ratio is $\Omega(\sqrt{p})$ with high probability in $p$.
\end{proposition}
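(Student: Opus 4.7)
The plan is to apply Yao's minimax principle: it suffices to exhibit a distribution $\mathcal{D}$ over tree-structured DTAPs on which every deterministic online scheduler has expected awake-time competitive ratio $\Omega(\sqrt{p})$, after which a concentration step using independence built into $\mathcal{D}$ upgrades the bound to ``holds with high probability in $p$'' for randomized schedulers.

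The construction I would attempt uses a spine of length $k=\lceil\sqrt{p}\rceil$: tasks $\tau_1 \to \tau_2 \to \cdots \to \tau_k$ with $\sigma_i = \sqrt{p}$ and $\pi_i = p$, so that the serial implementation takes $\sqrt{p}$ time while the parallel implementation on $p$ processors takes only $1$ time. Independently at each spine position $i$, the adversary (with probability $1/2$) attaches a ``bush'' of subsidiary descendants of $\tau_i$---whose existence is revealed to the scheduler only after it has irrevocably committed $\tau_i$'s implementation. The bush is calibrated so that, when it is present, OPT can run $\tau_i$ in serial on one processor while simultaneously processing the bush on the other $p-1$ processors, finishing the level in $\Theta(\sqrt{p})$ time; when the bush is absent, OPT simply runs $\tau_i$ in parallel in time $1$. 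Thus OPT's total awake time, matching its choices to the bush indicators, is $O(\sqrt{p})$.

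The plan for the ALG bound is to argue that any deterministic scheduler is wrong on a constant fraction of positions and that each wrong decision contributes $\Omega(\sqrt{p})$ excess awake time that cannot be recovered later; summing over the $k$ positions and applying a Chernoff bound over the independent bush indicators then yields the $\Omega(\sqrt{p})$ competitive ratio with high probability in $p$. The principal obstacle is that, under a naive symmetric construction, the parallel implementation appears dominant for ALG---committing $\tau_i$ to parallel is only marginally worse than serial, because $\tau_i$'s parallel time ($1$) is much smaller than its serial time. Making the construction actually force bad decisions requires carefully calibrating the bush so that committing $\tau_i$ to parallel sacrifices an $\Omega(\sqrt{p})$ concurrency window that OPT (with a serial $\tau_i$) can exploit, and the hard technical step is a charging argument pinning each unit of ALG's excess awake time to a single wrong commitment in an adaptation-robust way, so that the per-position penalties aggregate rather than being amortized away by ALG's later (now-informed) choices.
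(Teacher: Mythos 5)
Your construction has a gap that you yourself flagged, and I do not think it is fixable in the form you propose. In a single spine $\tau_1 \to \cdots \to \tau_k$ with $\sigma_i=\sqrt p,\ \pi_i=p$, a scheduler that always runs the spine in parallel pays total time $k\cdot(\pi_i/p)=\sqrt p$ on the spine, and every bush, once revealed, is a set of leaves with no further dependencies that can be processed concurrently with all later spine work. So ``always-parallel'' finishes in $O(\sqrt p)$ plus the bush processing time, and the bush processing time is a lower bound for $\opt$ as well; no matter how you calibrate the bush sizes, the ratio stays $O(1)$. The deeper issue is that in your DTAP there is nothing \emph{hidden} that the scheduler must discover by paying time: after any commitment on $\tau_i$, the reveal is free. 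You noticed this (``the parallel implementation appears dominant'') and appealed to a ``charging argument pinning each unit of ALG's excess awake time to a single wrong commitment,'' but that argument would need the excess to actually exist, and with a single spine it does not.

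The paper's construction is structurally different precisely because it makes the critical dependency \emph{unlocatable within a level}. Each of $\lfloor\sqrt p\rfloor$ levels has $\lfloor\sqrt p\rfloor$ tiny tasks ($\sigma_i=1$, $\pi_i=\sqrt p$), and one uniformly random task per level is the sole parent of the entire next level; all $t_i=0$. $\opt$ runs only the $\sqrt p$ chosen spawners in parallel (total work $\sqrt p\cdot\sqrt p=p$, i.e.\ time $\le 1$) and then all other tasks in serial, awake time $O(1)$. An online scheduler can't tell the spawner apart: if it parallelizes fewer than half the level it has a constant chance of having put the spawner in serial (cost $\Omega(1)$), and if it parallelizes at least half the level it must, in expectation, burn $\Omega(1)$ processor-time just flushing non-spawners before the spawner surfaces. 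Either way each level costs $\Omega(1)$ in expectation, and a Chernoff bound over $\sqrt p$ independent levels gives $\Omega(\sqrt p)$ with high probability. This is the ``width plus hidden spawner'' trick, and it is what your ``length plus revealed bush'' construction is missing: the penalty must come from a \emph{search} the scheduler is forced to conduct, not merely from a reveal that arrives after a commitment it had no reason to make badly.
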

\begin{proof}
  For convenience, we will discuss the DTAP as being randomly
  chosen from a set of DTAPs. Of course, if all schedulers $\alg$
  have competitive ratio $\Omega(\sqrt{p})$ with high probability
  for randomly chosen DTAPs from a class, then there exists some
  DTAP in the class for which a given $\alg$ is very likely to perform
  poorly on.

We consider a class of DTAPs which consist of $\floor{\sqrt{p}}$ \defn{levels}. 
Each level of these DTAPs consists of $\floor{\sqrt{p}}$ tasks with
serial work $1$ and parallel work $\sqrt{p}$. 
Of the $\floor{\sqrt{p}}$ tasks on each level exactly one randomly
chosen task spawns $\floor{\sqrt{p}}$ more tasks which form the next level. In particular, this single task is
the sole dependency for all tasks on the next level. All the
tasks in the DTAP have $t_i=0$, so each task arrives immediately once all the tasks it depends on are completed.

$\opt$, knowing the dependencies, could first run all the
spawning tasks via their parallel implementations to unlock all
tasks after time $\floor{\sqrt{p}} \sqrt{p} / p
\le 1$. Next, $\opt$ can schedule the remaining $\floor{p}^2-\floor{\sqrt{p}}$ serial tasks
via their serial implementations. Doing so $\opt$ achieves awake
time of at most $2$.

However, a scheduler $\alg$ that is unaware of the dependencies will
likely require much longer on this DTAP.
If $\alg$ is not willing to run more than $1/2$ of the tasks in a level
via parallel implementations then there is at least a $1/2$
chance that it requires time $1$ to pass the level due to running
the spawning task in serial. However, if
the $\alg$ is willing to run at least $1/2$ of the tasks in
parallel then in expectation it requires at least $1/4$ time to
uncover the spawning task. 
Either way, with constant probability $\alg$ spends $\Omega(1)$
time on each level. Thus, with high probability in $p$ the
scheduler requires $\Omega(\sqrt{p})$ total time to complete the
DTAP.
\end{proof}

\begin{proposition}
  There exists an online DTAP scheduler that is $\bigO(\sqrt{p})$ competitive for awake-time.
\end{proposition}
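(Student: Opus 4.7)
The plan is to define a decide-on-arrival threshold scheduler $\alg$: upon arrival of task $\tau_i$, if $\pi_i/\sigma_i \le \sqrt{p}$ then $\alg$ runs $\tau_i^{\pll}$, otherwise $\alg$ runs $\tau_i^{\ser}$. The chosen jobs are dispatched in a greedy, work-conserving way: at every moment, allocate one processor to each ready serial job (at most $p$ of them) and split any remaining processors among the ready parallel jobs. The analysis will bound $\alg$'s awake time via a Brent/Graham-style decomposition into a ``work'' term and a ``depth'' term, and then lower-bound $\opt$'s awake time by the same two quantities, losing only a $\sqrt{p}$ factor each time.

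The first step is to establish, via a standard work-conserving argument applied to each maximal awake-interval, that $\T_\alg \le W/p + T_\infty^{\alg}$, where $W$ is the total work $\alg$ performs and $T_\infty^{\alg}$ is the length of the longest chain in the DAG when each chosen job is weighted by its minimum execution time ($\sigma_i$ if serial, $\pi_i/p$ if parallel). The argument is the usual one: at every step inside an awake-interval either all $p$ processors are busy (charging the step to $W/p$) or some processor is idle, in which case every ready job---including the one next on the critical chain---is currently executing, and the step is charged to $T_\infty^{\alg}$.

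Next I would give two matching lower bounds on $\opt$. First, $\opt$ must do at least $\sigma_i = \min(\sigma_i,\pi_i)$ work on each task, so the total work of $\opt$ is at least $\sum_i \sigma_i$; since only $p$ units of work accrue per unit of awake time, $\T_\opt \ge \sum_i \sigma_i / p$. Second, along any chain in the dependency DAG, $\opt$ must execute each task $\tau_i$ for at least $\pi_i/p$ time (the cost-ratio constraint $\pi_i \le p\sigma_i$ ensures that $\pi_i/p \le \sigma_i$ is truly the minimum achievable runtime), so $\T_\opt$ is also at least $T_\infty^\ast$, the longest chain weighted by $\pi_i/p$.

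Combining these bounds produces the theorem. The work term satisfies $W = \sum_{\pll}\pi_i + \sum_{\ser}\sigma_i \le \sqrt{p}\sum_i \sigma_i + \sum_i \sigma_i \le 2\sqrt{p}\cdot p\cdot \T_\opt$, so $W/p \le 2\sqrt{p}\,\T_\opt$. The depth term satisfies $T_\infty^{\alg} \le \sqrt{p}\,T_\infty^\ast \le \sqrt{p}\,\T_\opt$, because for a parallel task $\pi_i/p$ appears on both sides, while for a serial task the threshold rule forces $\sigma_i < \pi_i/\sqrt{p} = \sqrt{p}(\pi_i/p)$. Adding the two terms yields $\T_\alg \le O(\sqrt{p})\,\T_\opt$. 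The main obstacle is a clean verification of the Brent-style inequality in this mixed serial/parallel malleable model with arrival times; once that is in place, the competitive ratio falls out of elementary bookkeeping around the $\sqrt{p}$ threshold.
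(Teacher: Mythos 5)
Your proposal is correct and follows essentially the same plan as the paper's: a threshold rule at parallelism $\sqrt{p}$, followed by a Brent-style accounting that separates a work term from a span/depth term. The paper's $\turtle$ scheduler uses the identical threshold ($\pi_i/p < \sigma_i/\sqrt p$, i.e.\ $\pi_i/\sigma_i < \sqrt p$), and its analysis---splitting into saturated time $S$ and unsaturated time $U$, bounding $S\le W/p$ and $U\le T_\infty$---is exactly the saturated/idle decomposition you describe. The paper's $U \le T_\infty$ step is asserted with a one-line ``observe that,'' so the ``obstacle'' you flag is no more serious in your route than in theirs.

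Two small points worth attending to. First, your $T_\infty^{\alg}$ should be defined as the completion time of an unlimited-processor schedule that respects both arrival times and your serial/parallel choices, not merely as a chain length in the DAG: arrival times can force idle waiting even when no chain task is executable, and the pure chain-length quantity would not cover that time. With the completion-time definition, the Brent claim can be verified by the time-contraction argument: during idle steps any ready parallel job would absorb the idle processors, so only serial jobs are ready and each has its own processor; delete the saturated steps and you obtain a schedule that pointwise dominates the unlimited-processor schedule (earlier effective arrival times, equal-or-faster per-task progress), so $U\le T_\infty^{\alg}$. Your per-task ratio bound $T_\infty^{\alg}\le \sqrt p\, T_\infty^{*}$ survives this change, since the arrival-time contribution appears identically on both sides. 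Second, the paper's $\turtle$ preempts serial work whenever a fairly-parallel task is available, which is what makes the contraction to the auxiliary instance $\mathcal{D}'$ exact in their write-up; your plain work-conserving rule does not need that trick because your Brent argument never passes to a modified instance, which is a slight simplification.
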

\begin{proof}
  Fix a DTAP $\mathcal{D}$ with $n$ tasks. It suffices to consider a DTAP where
  our scheduler always has at least one available uncompleted task, i.e., the case where its awake time and completion time are the same.

  We say that a task $\tau_i$ is \defn{fairly-parallel} if
  $\pi_i/p < \sigma_i/\sqrt{p}$, and \defn{not-very-parallel}
  otherwise. The $\turtle$ scheduler runs fairly-parallel tasks in parallel and
  not-very-parallel tasks in serial. $\turtle$ schedules the
  available tasks as follows:
  \begin{itemize}
    \item Whenever there is an available fairly-parallel task allocate all processors to a fairly-parallel task.
    \item If all available tasks are not-very-parallel, and there are $k$ such tasks, then allocate a processor to each of the $\min(p,k)$ present jobs with the largest remaining serial works.
  \end{itemize}

Now we analyze the performance of the $\turtle$ scheduler.
  First we consider the time that $\turtle$ spends running
  fairly-parallel tasks. Let $\mathcal{D}_\pll\subseteq
  \mathcal{D}$ be the fairly-parallel tasks. The time that
  $\turtle$ spends running tasks in $\mathcal{D}_\pll$ is
  $$\sum_{i \mid \tau_i \in \mathcal{D}_\pll} \pi_i / p 
  \le \sum_{i \mid \tau_i \in \mathcal{D}_\pll} \sigma_i / \sqrt{p} 
  \le \sqrt{p}\sum_{i\in [n]} \sigma_i / p
  \le \sqrt{p}\T_\opt^\mathcal{D}.$$
  Thus, in order to prove $\T_\turtle^\mathcal{D} \le
  \bigO(\sqrt{p}) \T_\opt^\mathcal{D}$ it remains to bound the time that  $\turtle$ spends executing not-very-parallel tasks.  Let $\mathcal{D}'$ be a new DTAP where we set the size of all fairly-parallel tasks to $0$.
  Observe that the time that $\turtle$ spends executing not-very-parallel tasks is the same in both $\mathcal{D},\mathcal{D}'$ because in $\mathcal{D}$ we always preempt not-very-parallel tasks if fairly-parallel tasks are available and run fairly-parallel tasks. Thus, it suffices to analyze $\T_\turtle^{\mathcal{D}'}$.
  
  Let $S$ be the amount of time that $\turtle$ on $\mathcal{D}'$ is saturated (i.e., has at least $p$ tasks) and $U$ be the amount of time that $\turtle$ is unsaturated. Let $\T_{\infty}^{\mathcal{D}'}$ be the time that it would take to perform $\mathcal{D}'$ by running each task in serial on its own processor (i.e., imagining that we had infinitely many processors). 
Observe that 
  $$U\le \T_\infty^{\mathcal{D}'}\le  \sqrt{p}\T_\opt^{\mathcal{D}'}$$
  where the second inequality is due to the fact that tasks in $\mathcal{D}'$ are not-very-parallel; in particular, if we ran each task in serial on its own $\sqrt{p}$-speed-augmented processor then the tasks would certainly complete no later than $\opt$.
    We also have
    $$S\le \sum_{i\in [n]} \sigma_i/p \le \T_{\opt}^{\mathcal{D}'}$$
    because 
  because total work is a lower bound on $\opt$'s
  awake time.
  Combining our bounds on $U,S$ we have we have
  $$ \T_\turtle^{\mathcal{D}'} \leq \bigO(\sqrt{p})\cdot \T_\opt^{\mathcal{D}'}.$$
\end{proof}

\section{Awake-Time Lower Bounds}\label{sec:awakelower}
In this section we present several lower bounds for awake time. 

\begin{proposition}
  \label{prop:evilLowerBound}
  No deterministic online scheduler can have competitive ratio
  smaller than $\phi-1/p$.\footnote{$\phi \approx 1.618$ denotes the golden
  ratio, i.e. the positive root of $x+1=x^2$.}
\end{proposition}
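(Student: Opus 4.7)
The plan is to design an adaptive adversary that, given any deterministic online scheduler, constructs a TAP on which the scheduler's awake time is at least $(\phi-1/p)$ times $\opt$'s. Because the scheduler is deterministic, the adversary can simulate it step-by-step and tailor the TAP to the scheduler's history, using the fact that the scheduler must irrevocably commit each task to a serial or parallel implementation before doing any work on it.

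First, at time $0$ the adversary releases a single task $\tau_1$ with $\sigma_1=1$ and $\pi_1=p/\phi$; this has cost ratio $p/\phi\in[1,p]$ and is therefore a valid task. The serial wall time of $\tau_1$ on a single processor is $1$, while the parallel wall time on the full machine is $1/\phi$. In particular, $\opt$'s awake time on the one-task TAP is $1/\phi$, and so any scheduler that commits $\tau_1$ to its serial implementation already incurs ratio $\phi$. Hence we may focus on the remaining possibilities: the scheduler either delays or commits $\tau_1$ to its parallel implementation.

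Next, I would fix a threshold time $t^\star$ and perform a case analysis. If the scheduler fails to begin executing $\tau_1$ by time $t^\star$, the adversary stops; the scheduler's awake time is then at least $t^\star+1/\phi$, yielding ratio at least $1+\phi\,t^\star$. If instead the scheduler commits $\tau_1$ to its parallel implementation at some time $t_p\le t^\star$, then at time $t_p+\delta$ (for a tiny $\delta>0$) the adversary releases $p-1$ additional perfectly scalable tasks, each of work $1$. The point of these extras is that $\opt$, by contrast, could have chosen $\tau_1^{\ser}$ on a single processor and fit the $p-1$ extras in parallel on the remaining $p-1$ processors, achieving awake time at most $\max(1,t_p+1)$.

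The crux of the analysis is the parallel-commitment case: I would argue, via a total-work lower bound, that the scheduler's awake time is at least $t_p+\phi-1/p$. The key observation is that once $\pi_1=p/\phi$ processor-time has been committed to $\tau_1^{\pll}$, no amount of preemption or sharing with the extras can recover it; any feasible schedule on $p$ processors must therefore devote wall time at least $(\pi_1+(p-1))/p=1/\phi+1-1/p=\phi-1/p$ to finishing the remaining work, giving the claimed lower bound on the scheduler's completion time.

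Finally, the threshold $t^\star$ is calibrated so that both the delay ratio $1+\phi\,t^\star$ and the parallel-commitment ratio $(t_p+\phi-1/p)/\max(1,t_p+1)$, evaluated at the scheduler-optimal $t_p\le t^\star$, exceed $\phi-1/p$. The balancing condition amounts to a quadratic in $t^\star$ whose positive root corresponds (modulo a $1/p$ correction) to $1/\phi$, which in turn recovers the golden ratio. The additive $1/p$ slack in the bound reflects the exact $(p-1)$-versus-$p$ processor accounting used when placing the extras on the remaining processors in $\opt$'s schedule.

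The main obstacle is controlling the parallel-commitment case, because the scheduler is free to preempt $\tau_1^{\pll}$ mid-execution and to time-share processors with the arriving extras in arbitrary ways; the total-work argument handles this uniformly, since irrecoverable processor-time debts cannot be eliminated by any reshuffling. A secondary technical point is to verify that the scheduler's best strategy---delay to near $t^\star$ or commit parallel at a carefully chosen $t_p\le t^\star$---still pays at least $\phi-1/p$ once $t^\star$ is chosen optimally, which is where the golden-ratio-valued quadratic arises.
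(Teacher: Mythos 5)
Your overall framework—an adaptive adversary that watches a deterministic scheduler and, upon a parallel commitment, floods the machine with un-parallelizable extras—is exactly the paper's approach, and the three-way case split (serial commit; excessive delay; parallel commit at time $t_p$ followed by extras) is the right one. However, there is a concrete quantitative gap: your extras have \emph{fixed} work $1$, and this does not force the golden-ratio bound.

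In your parallel-commitment case you bound $\opt$'s awake time by $t_p+1$ and the scheduler's by $t_p+\phi-1/p$, so the ratio bound is $g(t_p)=\frac{t_p+\phi-1/p}{t_p+1}$. This function is \emph{strictly decreasing} in $t_p$ (its derivative has sign $1-\phi+1/p<0$), so any $t_p>0$ already yields a ratio strictly below $\phi-1/p$. The scheduler can therefore simply plan to commit parallel at $t_p = t^\star$, just below your threshold, and your lower bound becomes $g(t^\star)<\phi-1/p$. Meanwhile your excessive-delay bound $1+\phi t^\star$ needs $t^\star\ge \Theta(1)$ to be useful. Balancing $1+\phi t^\star=g(t^\star)$ gives a crossover near $t^\star\approx 0.3$, where both sides equal roughly $1.48$—significantly short of $\phi\approx 1.62$. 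So your quadratic has a real positive root, but it witnesses the bound $\approx 1.48$, not $\phi$. (Making the extras perfectly scalable actually hurts further: $\opt$ can then run $\tau_1$ in parallel on all processors and finish the extras in wall time $(p-1)/p$, giving $\opt$ awake time $\min(t_p+1,\,\phi-1/p)$ and dropping the achievable lower bound even more.)

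The fix, which is what the paper does, is to make the extras' work \emph{scale with} the commitment time. With $\sigma_1=\phi$, $\pi_1=p$, release $p-1$ un-parallelizable extras of serial work $\phi-t_0$ the instant the scheduler commits $\tau_1^\pll$ at $t_0<1/\phi$. Then $\opt$ runs everything serially and finishes at exactly $\phi$, while the total-work bound gives the scheduler awake time $t_0+\frac{p+(p-1)(\phi-t_0)}{p}=\phi^2-\frac{\phi-t_0}{p}\ge \phi^2-\phi/p$, so the ratio is $\ge\phi-1/p$ \emph{for every} $t_0$—the $t_0$-dependence survives only in the $1/p$-order correction rather than in the leading term. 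Without this scaling, the adversary cannot keep the pressure on as $t_p$ grows, and that is where your argument breaks.
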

\begin{proof}
  Fix a deterministic scheduler $\alg$. Consider a TAP with
  $\sigma_1=\phi,\pi_1=p.$ If $\alg$ schedules $\tau_1$ in
  serial $\alg$ fails to be better than $\phi$ competitive in
  the case that no more tasks arrive. 
  If $\alg$ waits at least $1/\phi$ time before scheduling
  $\tau_1$ then $\alg$ also fails to be better than $\phi$
  competitive if no more tasks ever arrive.
  If instead $\alg$ schedules $\tau_1$ in parallel at some time
  $t_0<1/\phi$ and then $p-1$ un-parallelizable tasks 
  arrive right after $t_0$ with $\sigma_i = \phi-t_0$, then $\opt$ achieves awake time $\phi$ due to having chosen to run every task in serial
  while $\alg$'s awake time is at least 
  \begin{align*}
      t_0 + \frac{p+(p-1)(\phi-t_0)}{p} & \ge t_0 
 + \frac{p+p(\phi-t_0)}{p} - \phi/p \\
 & = 1 + \phi - \phi /p \\
 & = \phi^2 - \phi/p.
 \end{align*}
 So $\alg$'s competitive ratio is at least $\phi-1/p$.
\end{proof}

Now we consider the decide-on-arrival model. We can prove a stronger lower bound in the decide-on-arrival model than the \cref{prop:evilLowerBound}, and we conjecture that this separation is real.
\begin{proposition}
  No deterministic decide-on-arrival scheduler has competitive ratio better than ${2-\Omega\paren{\frac{\log p}{p}}}$.
  \label{prop:2lowerbounddecidearrive}
\end{proposition}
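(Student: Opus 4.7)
The plan is to extend the single-task construction of \cref{prop:evilLowerBound} to exploit the stronger decide-on-arrival constraint. Because such a scheduler must commit to serial/parallel immediately, we can design a multi-round adaptive adversary that repeatedly traps the scheduler into a ``ski rental''-style dilemma. A single round (as in \cref{prop:evilLowerBound}) yields an additive $\Omega(1/p)$ loss in the competitive ratio; chaining $k = \Theta(\log p)$ such rounds together gives the claimed bound of $2 - \Omega(\log p / p)$.

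Concretely, I would build an adaptive TAP as follows. At time $0$, $\tau_1$ arrives with $\sigma_1$ slightly smaller than $2$ and $\pi_1 = p$ (so the cost ratio is in $[1,p]$). If $\alg$ commits $\tau_1$ to serial, the adversary presents no further tasks, forcing $\alg$'s awake time to be $\sigma_1$ while $\opt$ (running $\tau_1$ in parallel) needs only $1$, for a ratio close to $2$. If $\alg$ commits $\tau_1$ to parallel, then at a carefully chosen small time $\epsilon$ a second task $\tau_2$ arrives whose serial work is chosen so that the surplus parallel work committed to $\tau_1$ (relative to what $\opt$ would have done with $\tau_1$ serial and $\tau_2$ parallel) again inflates $\alg$'s awake time by a factor close to $2$. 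The adversary then recurses on $\tau_2$ using halved parameters, and so on. Because the cost ratio $\pi_i/\sigma_i$ must stay in $[1,p]$ and the sizes halve each round, the recursion can support $k = \Theta(\log p)$ levels, after which the adversary simply stops.

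The analysis has two components. First, I would verify that along every branch of $\alg$'s decision tree, $\opt$ (with full knowledge of the actual task sequence) can achieve an awake time bounded by some reference value $V$; intuitively $V \approx 1$ for the all-parallel branch and $V \approx 2$ for a long-serial branch, and one normalizes by $V$. Second, I would show by induction on the depth $i$ of the branch that after the $i$-th round, $\alg$'s awake time is at least $(2 - i/p)\cdot V$. The inductive step rests on the single-round calculation already implicit in \cref{prop:evilLowerBound}: each irrevocable serial choice of a task of size $\sigma_i$ costs $\alg$ an extra $\sigma_i$ (when the adversary stops), and each irrevocable parallel choice costs $\alg$ an extra $\sigma_i(p-1)/p$ (when the adversary continues), and in both cases this is an $\Omega(1/p)$ additive contribution to the normalized ratio.

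The main obstacle will be calibrating the sizes $\sigma_i,\pi_i$ and arrival times $t_i$ so that three things hold simultaneously: (i) every branch of $\alg$'s decision tree remains bad for $\alg$, (ii) $\opt$'s awake time on the actual (truncated) TAP remains uniformly small, within a constant factor of the reference $V$, and (iii) the $\Omega(1/p)$ contributions from each of the $\Theta(\log p)$ rounds sum cleanly rather than being absorbed into lower-order error terms. The constraint $\pi_i/\sigma_i \in [1,p]$ enforced in the preliminaries is precisely what limits the recursion to $\Theta(\log p)$ levels, which is the source of the $\log p$ factor in the final bound.
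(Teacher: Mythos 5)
Your ``ski-rental chaining'' intuition is in the right spirit, but the construction as sketched has a genuine gap: it contains no mechanism that ever makes the accumulated parallel commitments \emph{costly}. In the paper's proof, the adversary's endgame is that, after the scheduler has committed $k = \Theta(\log p)$ tasks to parallel, it dumps $p-k$ completely un-parallelizable serial tasks (each of size $2^k$) that arrive all at once. These serial tasks clog all $p$ processors, so the previously committed parallel work and the serial batch must share the same machine, inflating $\alg$'s awake time to roughly $2 \cdot 2^k$---whereas $\opt$, having run every task in serial from the start, assigns one processor per task and finishes in $2^k$. Your recursion has no analogue of this batch: if the adversary simply stops, a scheduler that committed everything to parallel clears its queue quickly, and there is nothing for the surplus parallel work to interfere with. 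Correspondingly, the claim ``each irrevocable parallel choice costs $\alg$ an extra $\sigma_i(p-1)/p$'' is unsupported as written---that extra cost only materializes when those processors are needed elsewhere.

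Two further points of divergence from the paper. First, the task sizes in the paper \emph{double} each round ($\sigma_i = 2^i$, $\pi_i = 2^{i-1}p$, so every task has cost ratio exactly $p/2$), rather than halving; doubling is what makes the parallel-committed work sum to $\approx 2^k p$ so that the final batch scales consistently with it, and it also means the constraint $\pi_i/\sigma_i \in [1,p]$ is never the binding limit on the number of rounds. Second, and relatedly, your explanation of where $\log p$ comes from is not right: $k = \Theta(\log p)$ is chosen to balance the ratio $2 - k/p - 2^{-k}$; take $k$ much smaller and the $2^{-k}$ term dominates (giving only $3/2$), take $k$ much larger and the $k/p$ term erodes the bound toward $3/2$ again. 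So $\log p$ is a genuine optimization over $k$, not a hard limit imposed by the cost-ratio constraint. To repair the proposal you would need to invert the size progression, and---most importantly---design the adversary's terminal move so that parallel commitments actively compete with a large serial workload.
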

\begin{proof}
  Fix a deterministic scheduler $\alg$.
  Let $k=\floor{\log p}$.
  Consider the TAP $\mathcal{T}_1^p$ where $\sigma_i = 2^i, \pi_i = 2^{i-1}p$ for
  $i\in [k]$. In $\mathcal{T}_1^k$ tasks run twice as fast as their serial run time if they are fully parallelized. Furthermore
  task $\tau_i$ is twice as large as task $\tau_{i-1}$. 
  The arrival times for tasks in this TAP are separated by infinitesimal amounts of time. The final $p-k$ tasks of $\mathcal{T}_1^p$ are un-parallelizable tasks with $\sigma_i = 2^k$ which all arrive instantly after $\tau_k$.

First we consider the truncation of $\mathcal{T}_1^p$ to $\mathcal{T}_1^j$ for some $j\le k$.
$\opt$ will complete $\mathcal{T}_1^j$ by running tasks $\tau_1,\tau_2,\ldots, \tau_{j-1}$ in serial and task $\tau_j$ in parallel. Then, $\opt$'s awake time on $\mathcal{T}_1^j$ is
\begin{equation}
    \label{eq:12pprop2lowerbound}
\frac{1}{p} \left(2+4+\cdots+2^{j-1}+2^{j-1}p\right) \le (1+2/p)2^{j-1}.
\end{equation}
Now assume that $\alg$ runs $\tau_j$ in serial.
Then $\tau_j$'s awake time is at least $2^j$, which by \cref{eq:12pprop2lowerbound} means that $\alg$ has competitive ratio at least $2-\Omega(1/p)$ on $\mathcal{T}_1^j$.
That is, for $\alg$ to have any chance of achieving competitive ratio better than $2-\Omega(1/p)$, $\alg$ must schedule the first $k$ tasks in parallel.

Now we consider the performance of $\alg$ on $\mathcal{T}_1^p$ assuming that $\alg$ schedules the first $k$ tasks in parallel.
Here $\alg$ will have awake time at least
\begin{equation}
    \label{eq:awaketimealgpkpk}
\frac{1}{p} \left(1p+2p+\cdots+2^{k-1}p + (p-k)\cdot 2^k \right) \ge 2^k(2-k/p)-1.
\end{equation}
On the other hand, $\opt$ will run all $p$ tasks in serial and complete in time $2^k$. 
Thus, $\alg$'s awake time is at least $(2-\Omega(k/p))$-times larger than $\opt$'s on $\mathcal{T}_1^p$.
Plugging in our choice $k=\Theta(\log p)$ gives the desired bound on $\alg$'s competitive ratio.
\end{proof}

Although the lower bound of
\cref{prop:evilLowerBound} does not obviously translate
to randomized schedulers we can still show a $1+\Omega(1)$ lower bound for such schedulers as follows. 
\begin{proposition}\label{prop:randlb}
  Let $\rand$ be a randomized scheduler. 
  There exists a TAP on which $\rand$ has competitive ratio 
  at least $\frac{3+\sqrt{3}}{4} - \Theta(1/p)$ with probability arbitrarily close to $1$.

\end{proposition}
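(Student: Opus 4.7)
The plan is to apply Yao's minimax principle. I would construct a distribution $\mathcal{D}$ over TAPs, built from many independent copies of a small ``hard gadget,'' and show that every deterministic scheduler incurs expected competitive ratio at least $\frac{3+\sqrt{3}}{4}-\Theta(1/p)$ against $\mathcal{D}$. A concentration argument on the i.i.d.\ gadgets then upgrades this expectation bound to a whp statement about a specific realization in the support of $\mathcal{D}$, which by Yao's principle produces the desired TAP.

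The hard gadget mirrors the adversarial object used in \cref{prop:evilLowerBound}: a task $\tau$ with $\sigma = s$ and $\pi = p$ arrives, and then with probability $1-q$, at a small delay $\delta$, an additional $p-1$ un-parallelizable jobs of appropriate size also arrive. Against this distribution, a deterministic scheduler must effectively commit on each gadget to one of three pure strategies: (i) start $\tau$ in serial at time $0$, (ii) start $\tau$ in parallel at some time $t_0 < \delta$, or (iii) wait at least $\delta$ time to distinguish the two sub-instances. Computing the cost-to-OPT ratio of each pure strategy on each sub-instance and setting up the zero-sum minimax, I would solve for the saddle point by equating the three expected ratios and optimizing over the adversary's parameters $s,\delta,q$. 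This yields the equilibrium value $\frac{3+\sqrt{3}}{4}$, with the $-\Theta(1/p)$ correction coming from the finiteness of the processor count in the OPT calculation.

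To convert the expectation bound into the high-probability statement, I would concatenate $N = \omega(1)$ independent copies of the gadget, spaced temporally far enough apart that they do not interact. Since all gadgets look identical at the scheduler's decision moment, any randomized scheduler's cost decomposes as a sum of bounded, independent random variables indexed by the $N$ gadgets. A Hoeffding-style concentration bound places the total cost within a $(1 \pm o(1))$ factor of its expectation with probability $1 - e^{-\Omega(N)}$, while OPT is deterministic on any fixed TAP, so the realized competitive ratio concentrates about the equilibrium value; by Yao's principle, for any randomized $\rand$ there is a specific such realization that is hard. The main obstacle is handling the ``wait-and-observe'' strategy~(iii) carefully---if the scheduler were restricted to decide-on-arrival one would recover the larger bound from \cref{prop:2lowerbounddecidearrive}, and if the adversary only used a two-point distribution the scheduler could simply wait---so the delicate three-way equilibrium constraint is precisely what forces the specific quadratic root $\frac{3+\sqrt{3}}{4}$ to appear.
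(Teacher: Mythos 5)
Your approach is fundamentally the same as the paper's: build a one-shot gadget consisting of a task that may or may not be followed by a burst of un-parallelizable tasks, show by a two-point (or in your case three-strategy) case analysis that any deterministic commitment pays $\ge\frac{1+\sqrt 3}{2}-\Theta(1/p)$ on at least one branch, average over the branch to get $\frac{3+\sqrt 3}{4}$, and then concatenate many well-separated independent copies and appeal to concentration to convert the expectation bound into a high-probability one. The paper does exactly this with two concrete TAPs ($\sigma_1=\sqrt 3+1,\ \pi_1=2p$, and a second TAP that appends $p-1$ un-parallelizable size-$\sqrt 3$ jobs at time $1$), then places $\mathcal T_{s_i}$ at time $10i$ for a random string $s$; your Yao/minimax framing is a cosmetic repackaging of the same idea.

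One technical point you should be careful about, and which your phrase ``equating the three expected ratios'' suggests you might trip over. On the concatenated instance the competitive ratio is a ratio of \emph{sums}, $\frac{\sum_i \alg_i}{\sum_i \opt_i}$, not an average of per-gadget ratios. Since $\opt$'s awake time is \emph{different} on the two gadget types ($2$ versus $1+\sqrt 3$), a uniform $50$-$50$ mix does not yield $\frac{3+\sqrt 3}{4}$: the strategy ``always commit to the $\mathcal T_B$ branch'' is correct on the high-weight gadgets and only pays $\frac{2(\sqrt3+1)}{3+\sqrt3}=\frac{2}{\sqrt3}\approx1.155<\frac{3+\sqrt3}{4}\approx1.183$. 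The fix is either to bias the adversary's coin (fraction $q=1/\sqrt 3$ of $\mathcal T_A$ gadgets equalizes the two extreme commitments and gives exactly $\frac{3+\sqrt3}{4}$) or to rescale the gadgets so that $\opt$'s cost is identical on both sub-instances, after which the uniform mix works. Your plan to optimize over $q$ is the right instinct, but you must run the saddle-point computation against the ratio-of-sums objective on the concatenated TAP, not against the per-gadget expectation of the ratio.
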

\begin{proof}
  In $\mathcal{T}_A$ a single task
   with $\sigma_1= \sqrt{3}+1$, $\pi_1 =2p$ arrives at time $0$.
TAP $\mathcal{T}_B$ starts the same as $\mathcal{T}_A$, but in
$\mathcal{T}_B$ $p-1$ additional maximally
un-parallelizable tasks arrive at time $1$ with serial work
$\sqrt{3}.$
A simple calculation shows that no deterministic algorithm can achieve competitive ratio better than $\frac{1+\sqrt{3}}{2}-\Theta(1/p)$ on both $\mathcal{T}_A$ and $\mathcal{T}_B$.
Furthermore $\rand$ cannot have better than a $1/2$ chance of
  performing well on both $\mathcal{T}_A$ and $\mathcal{T}_B$.
  Thus, $\rand$'s expected competitive ratio on at least one of
  these TAPs is at least $\frac{1+\frac{1+\sqrt{3}}{2}}{2} -
  \Theta(1/p) = \frac{3 + \sqrt{3}}{4} - \Theta(1/p).$

Let $s_1,s_2,\ldots,s_n$ be a random string of $A$'s and $B$'s.
Now we form a TAP $\mathcal{T}$ by placing $\mathcal{T}_{s_i}$ at time $10i$.
  On $\mathcal{T}$ $\rand$ handles each choice between $\mathcal{T}_A, \mathcal{T}_B$ correctly with probability at most $1/2$. Thus, for any $\eps>0$ if we make the sequence sufficiently long (i.e., take $n$ large enough) then $\rand$ has arbitrarily low probability of handling more than a $(1/2+\eps)$-fraction of the $\mathcal{T}_A, \mathcal{T}_B$ choices correctly. Thus, 
    $\rand$ has competitive ratio at least at least $\frac{3+\sqrt{3}}{4} - \Theta(1/p)$ with probability arbitrarily close to $1$.
\end{proof}

Now we consider parallel-work-oblivious schedulers. We show that, even if all tasks arrive at a single time, there is a lower bound of $2 - o(1)$ on the optimal competitive ratio.
\begin{proposition}
    There is no deterministic parallel-work-oblivious scheduler that achieves competitive
    ratio better than $2-\Omega(1/p)$ for awake time, even in the single-arrival-time setting.
\end{proposition}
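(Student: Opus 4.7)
The plan is to use a minimal two-task construction. I would have both tasks $\tau_1$ and $\tau_2$ arrive at time $0$ with identical serial works $\sigma_1 = \sigma_2 = p$, so that a parallel-work-oblivious scheduler cannot tell them apart from the information it is allowed to see. Its deterministic decisions then reduce to a fixed pair of modes in $\{\ser, \pll\}^2$, and the adversary gets to pick $\pi_1, \pi_2 \in [p, p^2]$ (the full range allowed by the $\pi_i/\sigma_i \in [1,p]$ constraint) \emph{in response} to those decisions.

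I would then split the analysis into two cases based on the scheduler's choice. If both tasks are run in parallel, I would set $\pi_1 = \pi_2 = p^2$: $\alg$ must then perform $2p^2$ total parallel work on $p$ processors, forcing awake time at least $2p$, while $\opt$ runs both tasks serially on two dedicated processors in time $p$, giving ratio at least $2$. If instead at least one task (wlog $\tau_1$) is run in serial, I would set $\pi_1 = \pi_2 = p$: then $\alg$'s awake time is at least $\sigma_1 = p$ from the forced serial execution of $\tau_1$ alone, while $\opt$ parallelizes both tasks for total awake time only $2$ (each parallel job needs just $\pi_i/p = 1$ wall-clock time on the full processor pool), giving ratio at least $p/2 \ge 2$ for $p \ge 4$.

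Combining the two cases, $\alg$'s worst-case ratio is at least $\min(2, p/2)$, which is $\ge 2$ for the values of $p$ we care about and in particular is $\ge 2 - \Omega(1/p)$. The one place that actually requires thought is the scaling $\sigma = p$: if I instead took $\sigma = 1$, the scheduler could safely mark everything serial and match $\opt$ up to constants on any adversarial $\pi$'s, so the serial option must be made expensive enough that a single serial task alone dominates $\opt$'s best achievable awake time by a large factor. Once the scaling is set, the two-case analysis is mechanical and there is no real obstacle to complete.
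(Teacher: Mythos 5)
There is a genuine gap: you assume the scheduler must commit to a fixed pair of modes in $\{\ser,\pll\}^2$ at time $0$, but a parallel-work-oblivious scheduler is \emph{not} required to decide on arrival. It may delay starting $\tau_2$, start $\tau_1$ in parallel, and then \emph{observe how much progress $\tau_1$ makes} --- which leaks information about $\pi_1$ and, since your two TAPs have the same $\pi$ for every task, effectively reveals which adversarial TAP it is in. Concretely, consider the adaptive scheduler that runs $\tau_1$ on all $p$ processors from time $0$, waits until time $1$, and then: if $\tau_1$ has already finished, runs $\tau_2^{\pll}$; otherwise runs $\tau_2^{\ser}$. On your Case-2 TAP ($\pi_1=\pi_2=p$), $\tau_1$ finishes at time $1$, so $\tau_2^{\pll}$ finishes at time $2$, matching $\opt$ exactly. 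On your Case-1 TAP ($\pi_1=\pi_2=p^2$), $\tau_1$ has not finished at time $1$, so the scheduler runs $\tau_2^{\ser}$, giving awake time $1+p$ against $\opt$'s $p$ --- a ratio of only $1+1/p$. So this scheduler beats $2-\Omega(1/p)$ on every TAP you construct, and the proposed adversary fails.

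The paper's proof is built precisely to handle this adaptivity. It reduces (via domination arguments) to schedulers parameterized by a threshold $x_\alg\in[0,1]$: the earliest time at which $\alg$ would start $\tau_2$ (necessarily in serial, as you correctly argue in your Case 2 logic) even if $\tau_1$ has not completed. The adversary then picks $\pi_1$ so that $\tau_1$ completes just \emph{after} $x_\alg$, namely $\pi_1/p = x_\alg + 1/p$, and makes $\tau_2$ perfectly scalable. This forces $\alg$ to waste time on a serial $\tau_2$ while $\opt$ runs both in parallel, and tuning the two regimes on $x_\alg$ yields the $2-\Omega(1/p)$ bound. If you want to repair your argument, you must introduce something like $x_\alg$ and design a TAP whose answer is revealed only after $\alg$ has been forced to commit $\tau_2$'s mode; the fixed-$\pi$ two-TAP strategy cannot work because it is distinguishable too early.
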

\begin{proof}

    Fix a deterministic parallel-work-oblivious scheduler $\alg$.
    Consider a TAP with two tasks $\tau_1,\tau_2$ both of serial size $1$.
    Technically there are many different ways that $\alg$ can
    handle $\tau_1,\tau_2$. We will reduce the space of possible
    strategies that $\alg$ can employ by showing that certain
    strategies are dominant over other strategies. Combined with
    some case analysis this will allow us to show that there
    must be some TAP on which $\alg$ performs poorly.

    Without loss of generality $\alg$ instantly starts (at least)
    one of the tasks.
    If $\alg$ starts a serial job at time $0$ then $\alg$ has
    competitive ratio at least $\Omega(p)$ on the TAP where $\tau_1,\tau_2$ are perfectly scalable.
    Thus, it suffices to consider the case where $\alg$ starts by running one of the tasks in parallel; call this instantly
    started task $\tau_1$.

    Without loss of generality $\alg$ runs $\tau_1$ on all processors at each time step until it starts $\tau_2$.
    If $\tau_1$ completes before $\alg$ starts $\tau_2$ then without loss of generality $\alg$ instantly starts $\tau_2$ in parallel.

    Now we can describe $\alg$ as follows: at each time $t\in[0,1]$
    until $\tau_1$ finishes $\alg$ can decide whether to start $\tau_2$ at time $t$ and which implementation of $\tau_2$ to use when starting it.
    Let $x_\alg\in [0,1]$ be the earliest time when $\alg$ is willing to start $\tau_2$ even if $\tau_1$ is not yet completed by this time.
    If $\alg$ chooses to schedule $\tau_2$ in parallel at time $x_\alg$ then there is a TAP on which $\alg$ has competitive ratio at least $2$: namely, the TAP where $\tau_1,\tau_2$ are both completely un-parallelizable.
    Thus, it suffices to consider the case where $\alg$ would choose to run $\tau_2$ in serial at time $x_\alg$ assuming that $\tau_1$ has not yet finished by time $x_\alg$.
    
   By now we have substantially simplified the description
    of $\alg$. In particular, we have shown that $\alg$ is
    completely parameterized by a single value $x_\alg \in [0,1]$.
    Given $x_\alg$, we have reduced to the case where $\alg$'s strategy is
    \begin{enumerate}
        \item Run $\tau_1^\pll$ from the start.
        \item If $\tau_1$ finishes before time $x_\alg$ start
              $\tau_2^\pll$ immediately once $\tau_1$ finishes.
        \item Otherwise, start $\tau_2^\ser$ at time $x_\alg$.
    \end{enumerate}

    To conclude we consider two cases on the value of $x$.
    If $x_\alg<1-1/p$ then $\alg$ performs poorly on a TAP where
    $\pi_1/p = x+1/p$ and $\pi_2/p = 1/p$.
    Indeed, for this TAP $\opt$ has awake time $x+2/p$ whereas $\alg$
    has awake-time at least $x+1$. Thus that $\alg$'s competitive ratio here is at least 
    $$\frac{x+1}{1+2/p}\ge 2-\Omega(1/p)$$

    If instead $x_\alg\geq 1-1/p$ then $\alg$ performs poorly on the TAP where $\tau_1,\tau_2$ are both completely un-parallelizable. In this
    case $\opt$ achieves awake time $1$ by running the tasks in serial from the start whereas $\alg$ has awake time at least $2-1/p$.

    Thus, regardless of $x_\alg$, $\alg$ has competitive ratio at least $2-\Omega(1/p)$ on some TAP.
\end{proof}

Finally, we show that if we simultaneously restrict to the decide-on-arrival model and the parallel-work-oblivious model then the
scheduler cannot perform well. 
\begin{proposition}
  Any deterministic scheduler $\alg$ that is both decide-on-arrival and parallel-work-oblivious must have competitive ratio $\Omega(\sqrt{p})$.
\end{proposition}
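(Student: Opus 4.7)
My plan is to exhibit a TAP consisting of $n = \lfloor \sqrt{p} \rfloor$ identical tasks, each having $\sigma_i = 1$, arriving at times $\epsilon, 2\epsilon, \ldots, n\epsilon$ for a vanishing $\epsilon > 0$. Because $\alg$ is deterministic and parallel-work-oblivious, its serial/parallel decisions on this uniform sequence of $\sigma_i$'s are fixed in advance; let $a$ denote the number of tasks that $\alg$ commits to parallel and $b = n - a$ the number it commits to serial. The adversary chooses the hidden $\pi_i$'s based on the value of $a$.

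The argument then splits into two symmetric cases. If $a \le n/2$, the adversary sets every $\pi_i = 1$, making all tasks perfectly scalable. In this case $\opt$ parallelizes everything for awake time $n/p = O(1/\sqrt{p})$, while $\alg$ is committed to running at least $b \ge n/2 \ge 1$ serial jobs, each of work $1$, and so its awake time is at least $1$; the ratio is $\Omega(\sqrt{p})$. If instead $a > n/2$, the adversary sets $\pi_i = p$ for every task that $\alg$ committed to parallel (and, say, $\pi_i = 1$ for the serial ones). Then $\opt$ simply runs every task in serial on $n \le p$ processors for awake time $1$, while $\alg$ must perform total parallel work $ap$ on only $p$ processors, forcing awake time $\ge a > \sqrt{p}/2$; the ratio is again $\Omega(\sqrt{p})$.

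The only step that needs some care is the $\ge a$ lower bound on $\alg$'s awake time in the second case: even if $\alg$ tries to overlap its $a$ parallel jobs with its $b$ unit-sized serial jobs, the parallel jobs carry $ap$ total work and can be executed on at most $p$ processors, so a simple work-over-processors argument gives awake time $\ge a$ irrespective of how $\alg$ interleaves. The rest of the proof is routine accounting of work and processor counts, using the standing assumption $p = \Omega(1)$ to guarantee that $1 \le n \le p$.
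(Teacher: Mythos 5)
Your proposal is correct and takes essentially the same approach as the paper: because a deterministic decide-on-arrival, parallel-work-oblivious scheduler cannot distinguish $\lfloor\sqrt{p}\rfloor$ scalable tasks from $\lfloor\sqrt{p}\rfloor$ unscalable tasks, its serial/parallel commitments are fixed, and the adversary picks whichever of the two instances punishes those commitments. The staggered $\epsilon$-arrivals and the split at $a = n/2$ are inessential variations on the paper's two-TAP argument (where even a single serial commitment is already fatal on the all-scalable instance).
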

\begin{proof}
  Consider the following two TAPs: 
  \begin{enumerate}[label=(\alph*)]
    \item $\ceil{\sqrt{p}}$ identical scalable tasks
      arrive at the start.
    \item  $\ceil{\sqrt{p}}$ identical un-scalable tasks arrive at the
      start.
  \end{enumerate}
  To $\alg$ these two TAPs look identical. However, if $\alg$
  decides to run any task in serial then it's competitive ratio on (a) is $\Omega(\sqrt{p})$.
  Otherwise, if $\alg$ decides to run all tasks in parallel then its competitive ratio on (b) is $\Omega(\sqrt{p})$.
  Note that this simple decomposition into two cases is made possible by the assumption that $\alg$ is a decide-on-arrival scheduler.
\end{proof}

\section{$\mrt$ Lower Bounds}
\label{sec:mrtlower}
In this section we prove two lower bounds on schedulers for $\mrt$. 
These bounds show that any $\bigO(1)$ competitive scheduler must (1) make
decisions at least partially based on the values of $\{\pi_i\}$;
and (2) must be willing to vary the number of processors assigned to a given job over time.
Thus these properties cannot be relaxed in the schedulers presented in the previous sections.

We remark that our first lower bound applies not just to schedulers that achieve worst-case competitive ratios,
but also to schedulers that use randomization in order to achieve a bounded \emph{expected} competitive ratio.



\begin{proposition}
  \label{prop:oblivpllwork}
  Fix an online scheduler $\alg$ that is oblivious to the
  parallel works of tasks, and fix some $c\in \Theta(1)$. There
  exists a TAP on which the expected competitive ratio of $\alg$ with $c$
  speed augmentation is at least $\Omega(p^{\frac{1}{4}})$ for $\mrt$.
\end{proposition}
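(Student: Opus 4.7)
The plan is to invoke Yao's minimax principle. Specifically, I would construct a distribution over TAPs that look identical, in terms of serial works and arrival times, to any parallel-work-oblivious scheduler, but whose optimal $\mrt$'s depend drastically on the hidden parallel works. Showing that every deterministic parallel-work-oblivious scheduler incurs expected competitive ratio $\Omega(p^{1/4})$ on this distribution will then yield the statement (the constant speed augmentation $c$ only rescales $\alg$'s $\mrt$ by a factor of $O(1)$ and so does not affect the $p^{1/4}$ asymptotic).

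The distribution is a uniform mixture of two scenarios $\mathcal{T}_A$ and $\mathcal{T}_B$ with a common ``public profile'': roughly, $\sqrt{p}$ critical tasks of serial work $\sqrt{p}$ arriving at time $0$. In scenario $A$ the critical tasks are perfectly scalable ($\pi_i = \sigma_i$), so $\opt$ can finish them with $\Theta(1)$ $\mrt$ by running them one-at-a-time on all $p$ processors. In scenario $B$ they have maximal parallel work ($\pi_i = p\sigma_i$), so parallel execution is no faster than serial, and $\opt$ achieves $\Theta(\sqrt{p})$ $\mrt$ by running them in serial on $\sqrt{p}$ dedicated processors. A scheduler that is essentially ``all-serial'' pays a factor $\Omega(\sqrt{p})$ in scenario $A$, and a scheduler that is essentially ``all-parallel'' pays a factor $\Omega(\sqrt{p})$ in scenario $B$; either extreme already suffices for a large expected gap against a $50/50$ mixture.

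The main obstacle is defeating adaptive schedulers, which can attempt to probe the scenario by assigning many processors to a single task and inferring the scenario from its completion time. To rule this out, I would augment the TAP with a later wave of smaller tasks whose handling is disrupted by any such probe: processors tied up on the probe are processors stolen from the later wave, and any informative probe requires a product of processors $\times$ time large enough to force a $\Omega(p^{1/4})$ $\mrt$ penalty on the later wave. Balancing the parameters so that this probing cost matches the cost of committing to the wrong implementation is the technical heart of the argument, and is what pins the bound at $\Omega(p^{1/4})$ rather than the $\sqrt{p}$ that one gets from a purely non-adaptive analysis.

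Finally, Yao's principle converts the deterministic-on-distribution bound into the desired statement: for every randomized parallel-work-oblivious scheduler with $c$ speed augmentation there is a TAP on which its expected competitive ratio for $\mrt$ is $\Omega(p^{1/4})$.
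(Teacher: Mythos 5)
Your construction has a genuine gap, and the ``later wave'' patch you sketch does not close it. In your $50/50$ mixture of two \emph{homogeneous} scenarios (every critical task perfectly scalable vs.\ every critical task maximally unparallelizable), a scheduler can distinguish the scenarios with a single essentially free probe: commit one task to its parallel implementation, run it on all $p$ processors for time $\sigma_i / p = 1/\sqrt{p}$, and observe whether it finished. This commits only one task, uses only $O(1)$ total processor-time, and reveals the \emph{entire} scenario, after which the scheduler can run every remaining task correctly (and can keep trickling along the one committed probe task on whatever processors are left, at negligible further cost). No later wave of small tasks repairs this, because the probe's resource footprint is $o(1)$ and it completes before any later wave matters. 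As a result your distribution is $O(1)$-easy, not $\Omega(p^{1/4})$-hard, and the sentence about ``balancing the parameters so the probing cost matches the commitment cost'' is doing all the actual work of the proof without any concrete construction behind it.

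The paper avoids this by making the hidden information \emph{per-task} rather than global: a single TAP with $\sqrt{p}+p^{1/4}$ tasks all of serial work $1$, where a uniformly random subset of $p^{1/4}$ of them has parallel work $p$ (the rest have parallel work $1$). Now probing one task reveals only that task, and the probe itself \emph{is} the irrevocable commitment, so the scheduler cannot learn cheaply. OPT runs the cheap tasks in parallel and the expensive ones in serial for $\trt = O(p^{1/4})$. For any scheduler, letting $\delta$ be the expected fraction run in serial, the cheap tasks run in serial contribute $\Omega(\delta\sqrt{p})$ to $\trt$, while the expected $\Omega((1-\delta)p^{1/4})$ expensive tasks run in parallel contribute $\Omega\bigl(((1-\delta)p^{1/4})^2\bigr)$ by Jensen; for any $\delta$ at least one of these is $\Omega(\sqrt{p})$, giving ratio $\Omega(p^{1/4})$. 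Your Yao's-principle framing and the ``probing vs.\ commitment'' tension are the right instincts, but the essential missing idea is the per-task randomization that makes probing uninformative about unprobed tasks.
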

\begin{proof}
  Consider a TAP with $\sqrt{p} + p^{1/4}$ tasks, all with serial work $1$. Suppose that
  $\sqrt{p}$ of the tasks have parallel work $1$, and that (a random subset of) $p^{1/4}$
  of the tasks have parallel work $p$. Call these the \defn{cheap} and \defn{expensive}
  tasks, respectively. All of the tasks arrive at time $0$.

  If the cheap tasks are run in parallel, and then the expensive tasks are run in serial, then
  the $\trt$ will be $O(\sqrt{p} \cdot \frac{1}{\sqrt{p}} + p^{1/4} \cdot 1) = p^{1/4}$.

  Now suppose for contradiction that $\alg$ also achieves $O(p^{1/4})$ $\trt$ using $O(1)$ speed augmentation.
  Let $\delta$ be the expected fraction of the tasks that $\alg$ runs in serial. The expected
  number of cheap jobs that are executed in serial is $\delta \sqrt{p}$. Thus, in order for
  $\alg$ to be $O(1)$-competitive (even with $O(1)$ speed augmentation), we would need $\delta \le O(p^{-1/4})$. 
  But this means that the expected number of expensive jobs that are executed in parallel is at least
  $(1 - \delta) p^{1/4} \ge \Omega(p^{1/4})$. If $k$ expensive jobs are executed in parallel, their
  $\trt$ will be at least $\Omega(k^2)$. By Jensen's inequality, the expected $\trt$ of expensive jobs executed
  in parallel is therefore at least $\Omega((p^{1/4})^2) = \Omega(\sqrt{p})$. This contradicts the assumption
  that $\alg$ achieves $\trt$ $O(p^{1/4})$. 
\end{proof}

\begin{proposition}
  Consider an online scheduler $\alg$ that is non-preemptive (i.e., the number of processors that it assigns to each task is fixed for the full duration of time that the task executes). Fix some
  $c\in \Theta(1)$, and any $R \in \mathbb{N}$. There exists a TAP
  $\mathcal{T}$ on which the worst-case competitive ratio of $\alg$ with $c$ speed augmentation is $\Omega(R)$ for $\mrt$.
  That is, $\alg$  with performs arbitrarily worse than $\opt$.
\end{proposition}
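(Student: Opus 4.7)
The plan is to exhibit, for each integer $R$, an adversarial TAP $\mathcal{T}_R$ on which any non-preemptive online scheduler $\alg$ (even with $c$-speed augmentation) suffers $\mrt_{\alg} \geq \Omega(R) \cdot \mrt_{\opt}$. The TAP is constructed in two phases: at time $0$ a single large, fully-parallelizable ``anchor'' task $\tau_0$ with $\sigma_0 = \pi_0 = R$ arrives, and the adversary then waits to observe $\alg$'s (irrevocable) processor commitment for $\tau_0$ before releasing further tasks. The case analysis is driven by what $\alg$ does with $\tau_0$.

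First, if $\alg$ delays starting $\tau_0$ until a time $t_s = \Omega(R \cdot R/p)$, then even with no further arrivals the response time of $\tau_0$ alone dominates the analysis: $\opt$ can finish $\tau_0$ in time $R/p$ by running it in parallel on all $p$ processors, so the ratio is already $\Omega(R)$ on $\tau_0$ alone. Second, if $\alg$ starts $\tau_0$ promptly but commits it to a constant number $k = O(1)$ of processors, then $\tau_0$ itself takes time $\Omega(R/c)$ to complete (regardless of whether $\alg$ uses the serial implementation or the parallel implementation on few processors), giving ratio $\Omega(R)$.

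The substantive case is when $\alg$ starts $\tau_0$ promptly on a large number $k$ of processors. At this moment the adversary releases $R$ small tasks of the form $\sigma_i = \pi_i = 1$, spread out starting just after $\alg$'s commitment at a rate of roughly $p$ arrivals per unit time. The offline preemptive $\opt$ handles these cheaply: it preempts $\tau_0$ each time a small task arrives, runs that task on all $p$ processors in time $1/p$, and resumes $\tau_0$; this yields $\mrt_{\opt} = O(R/p)$. By contrast, $\alg$ is stuck with only $p - k$ processors available for as long as $\tau_0$ runs, and since each small task is fully parallel but must be committed to a fixed processor count, a careful queueing analysis shows that a backlog of $\Omega(R)$ small tasks must build up by the time $\tau_0$ completes, producing total small-task response time $\Omega(R^2/p)$.

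The main obstacle is ensuring that no intermediate choice of $k \in [1, p]$ lets $\alg$ escape the lower bound. I would handle this by writing $\alg$'s total $\trt$ as a sum of two terms---$\Theta(R/(kc))$ from $\tau_0$ and a queueing term on the remaining $p-k$ processors---and arguing via case analysis on $k$ that these cannot be simultaneously smaller than $\Omega(R^2/(pc))$ once $\opt$'s $\mrt = O(R/p)$ is taken into account. Choosing the small tasks' arrival pattern so that the queue provably grows (rather than staying bounded), and pinning down the right relationship between $R$, $p$, and the size of the adversary's burst, is the delicate step; once that is done, combining all three cases yields the claimed $\Omega(R)$ lower bound on $\alg$'s competitive ratio.
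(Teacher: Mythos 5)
Your approach has a genuine gap, and it is exactly in the ``substantive case'' you flagged. The small tasks you release have $\sigma_i = \pi_i = 1$, so they contribute nontrivially to \emph{both} schedulers' total response time. When $\alg$ commits $\tau_0$ to an intermediate number of processors (say $k = p/2$) and receives $c$ speed augmentation with $c > 2$, its service rate for small tasks on the remaining $p - k = p/2$ processors is $c p/2 > p$, which exceeds the arrival rate you propose ($\approx p$ tasks per unit time). No backlog forms: each small task waits $O(1/(cp))$, the total small-task $\trt$ is $O(R/(cp))$, and $\tau_0$ itself contributes only $O(R/(kc)) = O(R/(cp))$. Meanwhile $\opt$ already pays $\Theta(R/p)$ for the $R$ small tasks (each takes $1/p$ to finish and they arrive at rate $p$), so the claimed $\Omega(R)$ separation collapses to $O(1)$ for this choice of $k$. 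In short, there is a wide band of $k$ in which $\alg$'s $\trt$ and $\opt$'s $\trt$ are of the same order, and the ``these two terms cannot be simultaneously small'' claim is false.

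The paper sidesteps this entirely with two ideas your construction lacks. First, the adversary's burst consists of \emph{zero-work} tasks, which $\opt$ completes instantly; so dumping $\lceil R\cdot h\rceil$ of them leaves $\trt_\opt$ unchanged at $h$ while $\alg$ incurs $R\cdot h$ extra regardless of speed augmentation. Second, rather than reasoning about a single anchor task and a free parameter $k$, the paper defines $p_0$ as the supremum over all TAPs of the number of processors $\alg$ ever busies, picks a TAP prefix on which $\alg$ attains $p_0$ busy processors each with at least $1$ remaining work, and drops the burst at that moment. Since $\alg$ never uses more than $p_0$ processors and (being non-preemptive) cannot free any of them for at least $1$ unit of time, every zero-work task waits at least $1$. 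This removes the case analysis on $k$ entirely and makes speed augmentation irrelevant. If you want to salvage your construction, you would need to replace the unit-work small tasks with zero-work (or $o(1/p)$-work) tasks so that $\opt$'s cost for them vanishes, and you would need some version of the universal-maximum argument to rule out $\alg$ reserving idle processors for the burst.
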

\begin{proof}
As we are interested  in the worst-case competitive ratio of $\alg$, we may assume without loss of generality that $\alg$ is deterministic.

Let $p_0$ be the maximum number of processors that $\alg$ ever simultaneously gives work
over all input TAPs. We claim that there is some
input TAP on which $\alg$ does arbitrarily poorly compared to $\opt$ in
terms of mean response time.

Consider a sequence of tasks  $\mathcal{T}$ that causes $\alg$ to choose to
have $p_0$ processors in use at some point in time $t$. Without loss of generality, 
we may assume that all $p_0$ processors that are in use at time $t$ each have at least
$1$ remaining work. Let $h$ be the $\trt$ that $\opt$ would incur on $\mathcal{T}$.
Now, suppose that at time $t$, we have $\ceil{R \cdot h}$ additional tasks arrive, each with $0$ work.

The $\trt$ of $\opt$ on this TAP is $0+h$, whereas the $\trt$ of
 $\alg$ on this TAP is at least $h+1\cdot R\cdot h$, since all $\ceil{R \cdot h}$ of the new tasks
 will have to wait for time at least $1$ before $\alg$ begins them. Hence
 $\alg$'s competitive ratio on this TAP is at least $R$.
\end{proof}

\section{Open Questions}\label{sec:conclusion}

We conclude by discussing open questions and conjectures. 

\paragraph{Questions about awake time} We conjecture that \cref{prop:evilLowerBound} should be tight.

\begin{conjecture}
There exists a deterministic $\phi$-competitive scheduler for awake time.
\end{conjecture}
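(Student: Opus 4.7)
The plan is to design a scheduler $\phi\bal$ that generalizes the $\bal$ scheduler from \cref{prop:3comp}, replacing its aggressive "always stay balanced" policy with a more delicate rule that allows a controlled amount of imbalance parameterized by $\phi$, and that is no longer decide-on-arrival (it must not be, since \cref{prop:2lowerbounddecidearrive} already rules out beating $2-o(1)$ in that model).

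First I would revisit the proof of \cref{lem:bolus_is_good} to identify exactly where $\bal$ loses the factor of $3$. Two places contribute: (i) the inductive invariant \eqref{eq:theinvariantbal} carries the additive term $\K^{i}_\opt/p$, which at the end contributes a factor of $1$ via $\K^{|\mathcal{T}(I)|}_\opt/p \le \C_\opt$; and (ii) the "bad case" in which $\opt$ runs $\tau'$ in serial while $\bal$ runs it in parallel gives $\C^{i+1}_\bal \le 2\C^{i+1}_\opt$ by \eqref{eq:optchosebigserialthing}. My target is to replace the constants $2$ and $1$ in these two places by quantities summing to $\phi$, using the identity $\phi^2=\phi+1$.

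Next I would design $\phi\bal$ with a \emph{delayed-commitment} rule: when a task $\tau_i$ arrives it is placed in a pending buffer, and $\phi\bal$ holds off on irrevocably designating $\tau_i$ as serial or parallel. At every time step $\phi\bal$ runs the currently committed serial and parallel jobs via most-work-first as before, and chooses to \emph{promote} a pending task either to serial (if doing so still leaves the committed workload at most $\phi$ times away from balance) or to parallel (if serial commitment would exceed that $\phi$-balance threshold). The threshold $\phi$ should be set so that, in the lower bound TAP from \cref{prop:evilLowerBound}, the scheduler waits for exactly time $1/\phi$ before committing $\tau_1$ to parallel, matching the tight trade-off there.

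Then I would redo the inductive analysis. The invariant would read roughly
\begin{equation*}
\C^{i}_{\phi\bal} \le B/p + \phi\,\C^{i}_\opt + \K^{i}_\opt/p,
\end{equation*}
and the amortized argument across $\opt$-awake-intervals (mirroring \cref{cor:somemoreinduction}) would then give competitive ratio $\phi$ using $\phi = 1 + 1/\phi$. The matching-case step (where $\opt$ and $\phi\bal$ agree on the implementation, possibly up to which requires more work) goes through as before. The delicate step is the bad case: because $\phi\bal$ held $\tau'$ in its pending buffer for time at least $\sigma'/\phi$ before parallelizing, the corresponding replacement for \eqref{eq:optchosebigserialthing} should give $\C^{i+1}_{\phi\bal} \le \phi(t_{i+1}' - t_0' + \sigma') \le \phi\,\C_\opt^{i+1}$, exactly because the factor-$2$ slack in \eqref{eq:optchosebigserialthing} came from the fact that $\bal$ had waited zero time before committing.

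The main obstacle will be defining the pending/balance rule so that it simultaneously (a) delays just long enough in the "one large task" worst case to avoid the $\phi$-vs-$1$ trap of \cref{prop:evilLowerBound}, (b) does not let the pending buffer grow so large that, in an adversarial arrival pattern, the delay itself inflates the awake time beyond $\phi \cdot \T_\opt$, and (c) admits an inductive potential that cleanly absorbs both the carried "$B/p$" work from previous intervals and the pending tasks' contribution. I expect the right potential will charge pending tasks against the scheduler's future commitments in a golden-ratio-weighted way, and that verifying the bad-case inequality at the boundary where a task transitions from pending to committed will be the technically delicate part of the argument.
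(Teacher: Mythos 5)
This statement is left as an open \emph{conjecture} in the paper. The authors propose a candidate scheduler, \texttt{GoldenAlg}, which at awake time $t$ (having seen $n$ tasks) moves a pending task $\tau_i$ from the parallel pool to the serial pool precisely when $\sigma_i + t < \phi\cdot\T^n_{\opt_n}$, and they explicitly state that the inductive argument needed to verify $\phi$-competitiveness has eluded them. So there is no proof in the paper to compare yours against; the right question is whether your plan would actually close the argument, and it has a genuine gap.

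The arithmetic in your proposed invariant does not yield a competitive ratio of $\phi$. If the invariant is
\begin{equation*}
\C^i_{\phi\bal} \;\le\; B/p + \phi\,\C^i_\opt + \K^i_\opt/p,
\end{equation*}
then applying $\K^{|\mathcal{T}(I)|}_\opt/p \le \C_\opt$ at the end yields $\C_{\phi\bal} \le B/p + (\phi+1)\C_\opt = B/p + \phi^2\C_\opt$, and the cross-interval amortization (as in \cref{cor:somemoreinduction}) then produces a competitive ratio of $\phi^2 \approx 2.62$, not $\phi$. To actually land on $\phi$ you would need an invariant of the form $\C^i_{\phi\bal} \le B/p + \phi\,\C^i_\opt$ with no residual $\K^i_\opt/p$ slack. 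But then the ``matching case'' of the induction breaks: when the scheduler runs $\tau'$ in serial and $\opt$ runs it in parallel (or both serial), the incremental bound you have available is $\C^{i+1}_{\phi\bal} - \C^i_{\phi\bal} \le (\K^{i+1}_\opt - \K^i_\opt)/p$, i.e.\ a bound in terms of $\opt$'s \emph{work} increase, not its \emph{completion-time} increase. $\opt$'s completion time can increase by essentially nothing even as its work increases by $\sigma'$ (the new work can be absorbed into slack $\opt$ already had), so $\phi(\C^{i+1}_\opt - \C^i_\opt)$ does not dominate $(\K^{i+1}_\opt - \K^i_\opt)/p$. This is exactly why the original invariant carries the $\K^i_\opt/p$ term, and exactly what the paper alludes to when it says ``an inductive argument seems challenging because we can re-arrange our work.'' Your sketch does not address this, and you yourself flag the verification of the bad case as unresolved; the deeper trouble is that the invariant form you've written down is not strong enough to produce $\phi$ even if the bad case were fixed. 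Any successful argument would need either a qualitatively different potential or a non-local comparison to $\opt_n$, closer in spirit to the paper's \texttt{GoldenAlg}; simply tuning the constants in the $\bal$ analysis cannot get below $\phi^2$.

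Separately, your ``$\phi$-balance threshold'' rule is stated only loosely and differs from the paper's rule: you gate commitment on the scheduler's own imbalance, whereas \texttt{GoldenAlg} gates it on a running comparison to the offline optimum $\opt_n$ on the tasks seen so far. This distinction matters: the lower bound of \cref{prop:evilLowerBound} is calibrated against $\opt$'s awake time, so a rule that never consults $\opt_n$ will have a hard time tracking the $1/\phi$ waiting time that the tight instance demands.
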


More concretely, let us propose a scheduler \texttt{GoldenAlg} that we suspect is $\phi$ competitive. 
For simplicity of exposition we assume that $\opt$'s awake time is its completion time on the TAP in question, i.e., $\opt$ does not have ``gaps'' of time when there are no available uncompleted jobs.
It is straightforward to adapt \texttt{GoldenAlg} to handle TAPs with $\opt$ gaps because \texttt{GoldenAlg} will simulate $\opt$ and in particular will know when these gaps are. With this knowledge an appropriately modified version of \texttt{GoldenAlg} would be $\phi$ competitive on arbitrary TAPs assuming that the version of \texttt{GoldenAlg} described below is $\phi$ competitive on TAPs without gaps.

Let $\opt_n$ be an optimal offline schedule for the first $n$ tasks $\mathcal{T}_1^n$. Note that $\opt_{n},\opt_{n+1}$ may make very different decisions, which is part of the challenge for an online scheduler. The \texttt{GoldenAlg} scheduler makes decisions by comparing to $\opt_n$.

    \texttt{GoldenAlg} maintains a \defn{serial pool} 
    of tasks that it has decided to run in serial,
  and a \defn{parallel pool} of tasks that it has tentatively decided
  to run in parallel. \texttt{GoldenAlg} manages the serial pool by running the jobs with the most remaining work first at each time step.
  Crucially, \texttt{GoldenAlg} only runs a single task from
  the parallel pool at a time, so all but one of the
  tasks in the parallel pool have not actually been started.
  Thus, if \texttt{GoldenAlg} desires, it can freely move one of these
  not-yet-started tasks to the serial pool.
  Tasks default to the parallel pool but \texttt{GoldenAlg} greedily moves 
  tasks to the serial pool as soon as it is sure that this will not 
  instantly cause the awake time to be too large.
  Formally, if the awake time incurred so far is $t$, if $n$ tasks have arrived
  so far, and if $\tau_i$ is some not-yet-started task in the parallel pool, then \texttt{GoldenAlg} uses the following logic: If $\sigma_i + t < \phi \cdot \T^n_{\opt_n}$, move
      $\tau_i$ to the serial pool. Finally, if there is no parallel job running but 
  there are tasks present in the parallel pool \texttt{GoldenAlg} schedules the earliest
  arrived task (if any) from the parallel pool in parallel.
  At each time step this parallel job is allocated any processors not allocated to serial jobs.

  \texttt{GoldenAlg} is clearly $\phi$-competitive on any TAP
  that results in \texttt{GoldenAlg} being jagged. However, TAPs that
  result in \texttt{GoldenAlg} being balanced are more difficult. 
  An inductive argument seems challenging because we can 
  re-arrange our work. A more direct combinatorial argument 
  seems more promising but has eluded us.

  We also pose open questions regarding the optimal competitive ratios for different types of awake-time schedulers. We begin by considering decide-on-arrival schedulers:
  
\begin{question}
Does there exist a deterministic decide-on-arrival $2$-competitive scheduler for awake time? 
\end{question}

Such a scheduler would imply that Proposition \ref{prop:2lowerbounddecidearrive} is tight. More broadly, as noted in  \cref{sec:awakelower}, we conjecture that there should be a separation between arbitrary deterministic schedulers and deterministic decide-on-arrival schedulers.

Next, we consider parallel-work-oblivious schedulers:
\begin{question}
What is the optimal awake-time competitive ratio achievable by a deterministic parallel-work-oblivious scheduler?
\label{q:unknown}
\end{question}
We suspect that, for \cref{q:unknown}, the optimal competitive ratio is $4 - o(1)$. 

Finally, we consider randomized schedulers:
\begin{conjecture}
Is the optimal awake-time competitive ratio achievable by a randomized scheduler better than the optimal competitive ratio achievable by a deterministic scheduler?
\end{conjecture}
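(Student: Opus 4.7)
The plan is to answer this conjecture in the affirmative by exhibiting a randomized scheduler whose competitive ratio for awake time is strictly less than $\phi$, thereby establishing a separation from the deterministic lower bound of $\phi - o(1)$ in \cref{prop:evilLowerBound}. The starting point is the hard instance used in that proof: the adversary presents $\tau_1$ with $(\sigma_1,\pi_1)=(\phi,p)$ and then, depending on whether the scheduler starts $\tau_1^{\pll}$ at some time $t_0 < 1/\phi$ or commits to $\tau_1^{\ser}$ (or waits until $t_0 \ge 1/\phi$), either sends no further tasks or injects $p-1$ un-parallelizable tasks of serial work $\phi - t_0$ immediately after $t_0$. A deterministic scheduler is forced to ratio $\ge \phi$ regardless of $t_0$, but a randomized scheduler can hedge by drawing $t_0$ from a carefully tuned distribution, or by flipping a coin between the serial and parallel branches.

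The first step is to analyze this hedging on the family of hard instances. Using Yao's minimax principle, one fixes an adversary distribution $\mu$ over the two response scenarios (and over $t^{\star}$, the time at which the extras arrive) and shows that any deterministic scheduler attains expected ratio strictly below $\phi$ against $\mu$, since its single choice of $t_0$ will be wrong in expectation on one of the two branches. Dually, constructing an explicit randomized algorithm whose expected ratio against every instance in the family is strictly below $\phi$ yields the desired upper bound; the natural candidate is to pick $t_0$ from some distribution on $[0,1/\phi]$ with probability $q$ and to run $\tau_1^{\ser}$ from time $0$ with probability $1-q$, then optimize $q$ and the shape of the distribution. A direct calculation should yield an expected ratio lying strictly between the randomized lower bound $(3+\sqrt{3})/4$ of \cref{prop:randlb} and $\phi$.

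The main obstacle, and the most delicate step, is extending this single-arrival hedge into a full online scheduler that is strictly sub-$\phi$-competitive on \emph{every} TAP. This parallels the difficulty the authors describe for \texttt{GoldenAlg}: one must manage randomization across a long stream of arrivals, pay for each decision inductively using bolus-style accounting as in \cref{lem:bolus_is_good}, and control correlations between random choices made on different tasks. A plausible framework is to run $\bal$ (or \texttt{GoldenAlg}) as a deterministic backbone and to randomize only those decisions on which the deterministic lower bound construction is tight, namely tasks with large cost ratio $\pi_i/\sigma_i$ arriving into a nearly-balanced state; one then argues that any worst-case TAP reduces, up to lower-order losses, to a concatenation of single-task traps each of which is handled by the single-arrival hedge.

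If this constructive direction stalls, the fallback is to pursue the negative answer, strengthening \cref{prop:randlb} into a matching $\phi - o(1)$ lower bound for randomized schedulers via a recursive hard-instance distribution that forces the adversary's response to have many branching choices at each level (so that no hedge across finitely many coin flips can simultaneously cover them all). Given the present gap between the randomized lower bound $(3+\sqrt{3})/4$ and $\phi$, however, I expect the affirmative direction to succeed: the deterministic bound is tight only because the adversary \emph{sees} $t_0$ before choosing its response, and that informational advantage is exactly what randomization should be able to partially neutralize.
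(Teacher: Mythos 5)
The statement you are addressing is posed in the paper as a \emph{conjecture} (really an open question), and the paper offers no proof of it; the authors only observe that \emph{if} their lower bounds of $\phi - o(1)$ (deterministic) and $(3+\sqrt{3})/4 - o(1)$ (randomized) turned out to be tight, a separation would follow. So there is no paper argument to compare against; the appropriate comparison is between your proposal and a complete proof, and on that standard the proposal falls short.

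Your plan has a genuine, and in fact self-acknowledged, gap at exactly the decisive step. The hedging analysis on the single hard instance from \cref{prop:evilLowerBound} is plausible but is only an upper bound on a single family of instances: beating $\phi$ there does not yield a scheduler that beats $\phi$ on \emph{all} TAPs. To establish the separation you need a randomized scheduler whose competitive ratio is bounded strictly below $\phi$ uniformly, because the deterministic lower bound you are comparing against is a worst-case bound over all TAPs. Your paragraph proposing to ``run $\bal$ as a backbone and randomize only the tight decisions,'' and your claim that ``any worst-case TAP reduces, up to lower-order losses, to a concatenation of single-task traps,'' are exactly the unproved steps: there is no reason offered that the worst case for a randomized scheduler decomposes this way, and the paper's own discussion of \texttt{GoldenAlg} emphasizes that even the \emph{deterministic} inductive accounting beyond the single-trap case has ``eluded'' the authors. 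Without handling interactions between randomized choices across successive tasks and across $\opt$-awake-intervals, the argument does not close. Note also that the two directions you mention are not symmetric fallbacks in difficulty: the negative direction (pushing the randomized lower bound up to $\phi - o(1)$) would also suffice to refute the conjecture but would similarly require a new recursive adversary construction you have not supplied. As written, this is a research program, not a proof.
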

Note, in particular, that if our lower bounds are tight, then a separation should exist: the optimal competitive ratio for deterministic schedulers would be $\phi \approx 1.62$, and the optimal competitive ratio for randomized ones would be $(3 + \sqrt{3})/4 \approx 1.18$.

\paragraph{Questions about mean response time}
In the context of optimizing $\mrt$, there are even more basic questions that remain open. 

We conjecture that speed augmentation is necessary to achieve a competitive ratio of $O(1)$:
\begin{conjecture}
  $1 + \Omega(1)$ speed augmentation is necessary in an
  $\bigO(1)$-competitive scheduler.
\end{conjecture}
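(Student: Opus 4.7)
The plan is to establish the conjecture via an adversarial construction combined with Yao's minimax principle. For any scheduler $\alg$ with speed augmentation $1 + \epsilon$ where $\epsilon = o(1)$, I would exhibit a distribution over TAPs on which $\alg$ has expected competitive ratio $\omega(1)$ for $\mrt$. The basic building block is a ``trap'': a single task $\tau$ with cost ratio $r \in (1, p)$ arrives, and $\alg$ must eventually commit (irrevocably, since there is no cancellation) to a serial or parallel implementation. The adversary then flips a fair coin and extends the TAP with one of two continuations: either (a) no further tasks, which punishes a serial commitment (since the parallel implementation would have yielded $\tau$'s response time $\pi / p < \sigma$), or (b) a burst of many small jobs arriving at a tiny offset, which punishes a parallel commitment (since the freed $p-1$ processors in the serial case would have absorbed the burst immediately). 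By tuning $r$ and the burst size, I would arrange for either mistake to cost an additive $\Omega(1)$ in total response time, while $\opt$, with foresight, achieves much smaller response time on the same sub-TAP.

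Next, I would chain $N$ independent traps into a single TAP, spaced so that both $\alg$'s and $\opt$'s queues empty between traps. Since $\alg$ guesses wrong on a constant fraction of traps in expectation, its cumulative excess response time scales as $\Omega(N)$ while $\opt$'s scales as $o(N)$. The speed augmentation $1 + \epsilon$ only reduces each per-trap penalty by the factor $1/(1+\epsilon) = 1 - o(1)$, which does not change the $\omega(1)$ ratio as $\epsilon \to 0$. Yao's principle then lifts the deterministic lower bound over the distribution to a worst-case lower bound for randomized schedulers.

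The main obstacle is ensuring that the per-trap penalties genuinely compose rather than being amortized away by a clever $\alg$. Possible hedges include splitting the $p$ processors between the serial and parallel runs of $\tau$, or maintaining intentional idle processors to absorb a potential burst. Closing these loopholes seems to require choosing the trap parameters so that any such hedge either wastes enough work to be inabsorbable at speed augmentation $1 + o(1)$, or leaves enough of the adversarial burst unprocessed to preserve the $\Omega(1)$ per-trap penalty. Formalizing this dichotomy, probably through a potential function tracking the excess processor-time spent by $\alg$ relative to $\opt$ across the chain of traps, is the crux of the argument and is presumably why the conjecture remains open in the paper: unlike the upper bounds in \cref{sec:meanresponsetime}, any lower bound must simultaneously rule out every possible hedging strategy.
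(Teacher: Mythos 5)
The statement you were asked to prove is labeled a \emph{conjecture} in the paper, appears in the ``Open Questions'' section (\cref{sec:conclusion}), and carries no proof. There is therefore no paper argument to compare yours against; the authors themselves leave it open, and your closing paragraph already concedes that your plan does not close the gap. So the honest assessment is: this is a research sketch for an open problem, not a proof, and the paper contains no proof either.

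As a sketch, there are two places where I think the plan as written would not deliver the claimed conclusion even if the ``hedging'' issue were resolved. First, chaining $N$ independent traps and invoking concentration gives you, at best, that $\alg$ is wrong on a constant fraction of traps with high probability; this \emph{preserves} the per-trap competitive ratio, it does not \emph{amplify} it. If a single trap only forces a $1 + \delta$ penalty for some constant $\delta$ (which is the natural outcome of a ``commit now, coin-flipped continuation'' gadget), then $N$ traps in series still yield competitive ratio $1 + \Theta(\delta)$, i.e., a $1 + \Omega(1)$ lower bound at speed~$1$, not the $\omega(1)$ lower bound at speed~$1 + o(1)$ that the conjecture requires. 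To get $\omega(1)$ you would need either a single gadget whose ratio is already unbounded, or a recursive/nested amplification, which is a qualitatively different construction from repetition. Second, the claim that $1 + \epsilon$ speed augmentation ``only reduces each per-trap penalty by the factor $1/(1 + \epsilon)$'' is not justified and is false in spirit: in many scheduling settings (including the malleable-job results cited in \cref{sec:relatedwork}), $1 + \epsilon$ speed collapses an unbounded competitive ratio to $O(1)$, precisely because a small slack compounds across a long schedule. Any genuine proof of this conjecture must explain why that phenomenon cannot occur here, and your proposal does not address that. These are not minor gaps to be ``formalized''; they are structural obstacles that likely explain why the problem is still open.
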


Although we have shown that (deterministic) parallel-work-oblivious schedulers perform poorly on $\mrt$, it remains open whether decide-on-arrival schedulers can do well.

\begin{question}
    Can a decide-on-arrival scheduler, with $O(1)$ speed augmentation, achieve $O(1)$ competitive ratio?
\end{question}

Finally, if algorithms for this problem are to be made practical, then one important direction to focus on is simplicity. This motivates the following question: 
\begin{question}
  Is there a simpler scheduler that is still $\bigO(1)$
  competitive for $\mrt$ with $\bigO(1)$ speed?
\end{question}

An anonymous reviewer suggested one possible direction that would could try, which is to (1) virtually simulate running both jobs at once, and then (2) actually run whichever finishes first (with speed augmentation). The challenge that arises from this approach is that, by the time we finally decide which job to run, we may be significantly past the task's original arrival time. Thus, to make this approach work, one would likely need a much stronger version of Lemma \ref{lem:silly-serious2}, allowing one to analyze settings in which every job has its arrival time delayed based on its simulated completion time. It is conceivable that, in order to prove such a lemma, we could take technical inspiration from work on other scheduling problems (see, e.g., Theorem 3.1 of \cite{bansal2010better} or Theorem 5 of
  \cite{moseley2011scheduling}).

\paragraph{Questions about offline scheduling algorithms}
We conclude with two open problems about \emph{offline} scheduling algorithms.

The first question concerns the scheduling of DTAPS (i.e., TAPS with job-arrival dependencies) to optimize awake time. Our results in \cref{sec:DTAP} establish that the optimal online competitive ratio is $\Theta(\sqrt{p})$ for this problem. However, in some settings, even an \emph{offline} scheduler could be useful, for example, if a compiler understands the parallel structure of the program that it is compiling and needs to decide for each component of the program whether to compile a serial version or a parallel version. 

\begin{question}
Is there a polynomial-time offline algorithm that produces an $O(1)$-competitive DTAP schedule?    
\end{question}

Our final question concerns the offline scheduling of TAPS. We know from  \cref{sec:awakelower} that any online awake-time scheduler must incur a competitive ratio of $1 + \Omega(1)$. But what is the best competitive ratio achievable by a polynomial-time offline scheduler?

\begin{conjecture}
Constructing an optimal (offline) schedule for awake time is NP-hard.
\end{conjecture}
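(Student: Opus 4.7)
My plan is to prove NP-hardness by reducing from \textsc{3-Partition}, which is strongly NP-hard. Given items $a_1, \ldots, a_{3m}$ with each $a_i \in (B/4, B/2)$ and $\sum_i a_i = mB$, the goal is to construct a TAP whose optimal awake time meets a calibrated threshold if and only if the items can be partitioned into $m$ triples each summing to $B$.

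The template is a ``scaffold plus items'' construction. At each time $iB$ for $i = 0, \ldots, m-1$, spawn a scaffold task parameterized so that, to finish within the awake-time threshold of roughly $mB$, it must run in parallel and occupy $p-1$ processors for time $B$; this carves $[0, mB]$ into $m$ slots each with exactly $1$ processor of free capacity. For each $a_i$, spawn an item task arriving at time $0$ with $\sigma_i = a_i$ and $\pi_i = p a_i$, so that parallelizing an item inflates its processor-time cost to $p a_i$ and overruns the tight global capacity budget. With total item work $mB$ matched exactly to total free capacity $mB$, every moment of free capacity must be devoted to items, and the forward direction is immediate: a 3-Partition solution yields a schedule by assigning each triple to a slot and processing the triple serially on that slot's free processor.

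The main obstacle, and the crux of the reverse implication, is preemption. Since all items arrive at time $0$ and only need to finish by the global deadline, an item's serial work can be split freely across multiple slots, so a threshold-meeting schedule need not assign items to slots in a way that mirrors any partition. To make the reduction go through, I would add a localization mechanism that prevents cross-slot splitting: for instance, staggering items' arrivals adversarially (at the risk of pre-committing their assignments), inserting ``barrier'' tasks at slot boundaries whose parallel works explode the budget if any item remains alive across a boundary, or introducing secondary scaffolding that creates a sequence of tiny deadlines forcing each item to be pinned to a single slot via the awake-time profile. Pinning down the reverse direction is the delicate step.

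If this direct route proves too fragile, a promising backup is to reduce from \textsc{Bin Packing} with a fixed number of bins (also strongly NP-hard) or from a known NP-hard variant of moldable-task scheduling, exploiting the fact that the serial/parallel dichotomy is a restricted form of moldability. I expect the reduction to be unavoidably delicate, because for \emph{any} fixed serial/parallel assignment the scheduling subproblem is solvable in polynomial time via classical algorithms for $P \mid r_j, \mathrm{pmtn} \mid C_{\max}$; hence all the hardness must be carried by the $2^n$ discrete mode decisions, leaving a narrow design space in which the construction must simultaneously make the mode choice consequential and force it to encode a hard combinatorial decision.
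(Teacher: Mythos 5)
The paper does not prove this statement --- it appears in \cref{sec:conclusion} as an explicit conjecture among the open problems, so there is no reference proof to compare against. Your proposal is likewise not a proof but an approach sketch, and you have correctly identified the critical gap yourself: the reverse direction does not follow because preemption decouples items from slots. The gap is, however, worse than you suggest. In your construction the scaffolds leave exactly one processor free continuously on $[0,mB]$, and each item has a \emph{serial} implementation that runs on one processor. But a single preemptive processor with $mB$ units of free time can always execute any collection of serial jobs whose total work is $mB$, in any order, with no partition structure whatsoever: the awake time (which, since all tasks are live on $[0,mB]$, is just the makespan) equals $mB$ regardless of whether the $a_i$ admit a 3-partition. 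So the construction does not merely leave the reverse implication unproven --- the ``yes'' and ``no'' instances of 3-Partition map to identical optimal awake times, and the reduction fails even before one reaches the delicacy you flagged.

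The workarounds you list also run into trouble specific to the awake-time objective. Barriers and micro-deadlines do not exist in this model: awake time is the measure of the union of intervals on which any task is alive, so there is no way to penalize an individual item for straddling a slot boundary without also inflating the ``yes'' instance. Staggered arrivals pre-commit each item to a suffix of slots, which either trivializes the instance or re-imports the scheduling freedom you are trying to remove. Your closing observation --- that for any fixed serial/parallel assignment the residual problem is polynomial, so the hardness must be borne entirely by the $2^n$ mode decisions --- is the right diagnosis, and it explains why a scaffold that leaves a single serial lane cannot work: a single serial lane never forces a nontrivial mode decision. A viable reduction would likely need slots with varying free widths and items whose parallel implementations have varying parallelism, so that the \emph{mode choice itself} encodes a bin-packing constraint; but as written the sketch does not achieve this, and the conjecture remains open.
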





\bibliographystyle{ACM-Reference-Format}
\bibliography{refs}


\end{document}